\theoremstyle{plain}
    \declaretheorem[name=Theorem]{theorem}
    \declaretheorem[name=Proposition,numberlike=theorem]{proposition}
  \declaretheorem[name=Corollary,numberlike=theorem]{corollary}
  \declaretheorem[name=Conjecture,numberlike=theorem]{conjecture}
  \declaretheorem[name=Claim,numberlike=theorem]{claim}
  \declaretheorem[name=Observation,numberlike=theorem]{observation}
  \declaretheorem[name=Lemma,numberlike=theorem]{lemma}
\theoremstyle{definition}
\declaretheorem[name=Definition,numberlike=theorem]{definition}
     \declaretheorem[name=Example,numberlike=theorem]{example}
 \newcommand{\ignore}[1]{}
\DeclareMathOperator{\Csp}{CSP}
\DeclareMathOperator{\CSP}{CSP}
\DeclareMathOperator{\Pol}{Pol}
\DeclareMathOperator{\Aut}{Aut}
\DeclareMathOperator{\proj}{proj}
\DeclareMathOperator{\V}{\mathcal{V}}
\DeclareMathOperator{\constraints}{\mathcal{C}}
\DeclareMathOperator{\instance}{\mathcal{I}}
\DeclareMathOperator{\scope}{S}
\newcommand{\inj}{\textrm{inj}}
\newcommand\injinstance[1][\instance]{#1^{(\inj)}}
\newcommand\injinstances{\textrm{Inj}}
\newcommand{\chmichal}[1]{{#1}}
\newcommand{\cht}[1]{{#1}}
\newcommand{\michal}[1]{\todo[linecolor=olive,backgroundcolor=olive!25,bordercolor=olive]{\textbf{Micha\l:} #1}}
\newcommand\mic[1]{{#1}}
\newcommand\rel[1]{\mathbb{#1}}
\newcommand{\tuple}[1]{\mathbf{#1}}
\newcommand{\relstr}[1]{\mathbb{#1}}
\newcommand{\sA}{\mathbb A}
\newcommand{\sB}{\mathbb B}
\newcommand{\sC}{\mathbb C}
\newcommand{\sF}{\mathbb F}
\newcommand{\sK}{\mathbb K}
\newcommand\sX{\mathbb X}
\newcommand{\fG}{\mathscr G}
\newcommand{\cJ}{\mathcal J}
\newcommand{\gG}{\mathscr G}
\begin{document}

\title{New  Sufficient Algebraic  Conditions for Local Consistency over Homogeneous  Structures of Finite Duality\\
\thanks{This research was funded in whole or in part by the Austrian Science Fund (FWF) [P 32337, I 5948]. Partially funded by the National Science Centre, Poland under the Weave program grant no. 2021/03/Y/ST6/00171 and by the grant number 2020/37/B/ST6/01179.. For the purpose of Open Access, the authors have applied a CC BY public copyright licence to any Author Accepted Manuscript (AAM) version arising from this submission. This research is  funded by the European Union (ERC, POCOCOP, 101071674). Views and opinions expressed are however those of the author(s) only and do not necessarily reflect those of the European Union or the European Research Council Executive Agency. Neither the European Union nor the granting authority can be held responsible for them.}
}

\author{\IEEEauthorblockN{Tom\'a\v{s} Nagy}
\IEEEauthorblockA{\textit{Theoretical Computer Science Department} \\
\textit{Jagiellonian University}\\
Krakow, Poland \\
https://orcid.org/0000-0003-4307-8556}
\and
\IEEEauthorblockN{Michael Pinsker}
\IEEEauthorblockA{\textit{Institute of Discrete Mathematics and Geometry (Algebra)} \\
\textit{Technische Universit\"{a}t Wien}\\
Vienna, Austria \\
https://orcid.org/0000-0002-4727-918X}
\and
\IEEEauthorblockN{Micha{\l} Wrona}
\IEEEauthorblockA{\textit{Theoretical Computer Science Department}\\ \textit{Jagiellonian University} \\
Krak\'{o}w, Poland \\
https://orcid.org/0000-0002-2723-0768}}

\maketitle

\begin{abstract}
The path to the solution of Feder-Vardi dichotomy conjecture by Bulatov and Zhuk led through showing that more and more general algebraic conditions imply polynomial-time algorithms for the finite-domain Constraint Satisfaction Problems (CSPs) whose templates satisfy them.
These investigations resulted in the discovery of the appropriate height~1 Maltsev conditions characterizing bounded strict width, bounded width, the applicability of the few-subpowers algorithm, and many others.

\mic{For problems in the range of the similar Bodirsky-Pinsker conjecture on infinite-domain CSPs, one can only find such a characterization for the notion of bounded strict width, with a proof essentially the same as in the finite case. In this paper, we provide the first non-trivial results showing that certain height~1 Maltsev conditions imply bounded width, and in consequence tractability, for 
a natural subclass of templates within the Bodirsky-Pinsker conjecture which includes many templates in the literature as well as templates for which no complexity classification is known.}

\end{abstract}
%

\begin{IEEEkeywords}
constraint satisfaction  problem, 
local consistency, 
bounded width, 
quasi J\'{o}nsson operation, 
finitely bounded homogeneous structure, 
infinite-domain algebraic tractability conjecture
\end{IEEEkeywords}

\section{Introduction}

This paper treats the complexity of \emph{Constraint Satisfaction Problems}  
parametrized by finite and infinite relational structures \mic{$\relstr A = (A; R_1,\ldots)$},
 called \emph{templates}. Every $\sA$
gives rise to a computational decision  problem $\Csp(\sA)$ whose instances consist of 
 a set of variables and a set of constraints   
  which restrict the  
possible combinations of  values the variables can take. More precisely, in a solution to the instance the variables have to be assigned values in the \emph{domain} of $\sA$, i.e.~its base set;  the constraints require some tuples of variables to belong to some relations of $\sA$ after the assignment of values. The problem generalizes the well-known problem  $3$-SAT where 
the template $\sA$ over the domain $A = \{0, 1\}$   contains  exactly one ternary relation for every type of clause over three propositional literals. Furthermore, \mic{the} $k$-coloring \mic{problem} as well as the question of whether a given system of linear equations over a finite field has a solution \mic{can} 
be expressed naturally as $\Csp(\sA)$ for a finite structure $\sA$. The diversity of problems that fall under this formalism and certain preliminary results led Tom\'{a}s Feder and Moshe Vardi~\cite{FederVardi}  to formulate their famous \emph{dichotomy conjecture} \mic{stating} 
 that all \mic{finite-domain} 
 CSPs are either in P or 
  NP-complete. \mic{The conjecture} 
   contrasts the result known as Ladner's theorem~\cite{Ladner} which asserts that P$\neq$NP implies the existence of  NP-intermediate problems. 
No serious progress \mic{had} 
 been \mic{achieved} 
  towards the resolution of the conjecture until the so-called \emph{universal-algebraic approach}
\mic{was} 
 employed~\cite{Jeavsons98,BJK05} and
 the perspective switched  from relational structures $\relstr A$ as  central objects first to algebras of multivariate functions on $\relstr A$ leaving the relations of $\relstr A$ invariant \mic{(so-called \emph{polymorphisms}   of $\sA$, the set of which is denoted by $\Pol(\sA)$)}, 
  and then \mic{even further} to identities \mic{satisfied} 
   by these algebras 
   \mic{(equivalently, to so-called \emph{algebraic varieties})}. Although 
   \mic{rather} different, both resolutions to the Feder-Vardi conjecture, \mic{the one due to Andrei Bulatov~\cite{Bulatov:2017} as well as  the one due to  Dmitriy Zhuk~\cite{Zhuk:2017,Zhuk:2020}, confirm the (finite-domain) \emph{algebraic tractability conjecture} from~\cite{BJK05}  using universal-algebraic methods, thereby  demonstrating the strength and appropriateness of this approach.}

It is difficult to overstate the importance of the Bulatov-Zhuk dichotomy theorem. Nevertheless, the majority of problems expressible as $\Csp(\sA)$ require an infinite structure $\sA$. 
Fortunately, \mic{for many of them  
the universal-algebraic approach, originally  invented for finite structures, is also applicable.}
 \mic{This is true, in particular, for a certain} class of scheduling problems called \emph{temporal CSPs}~\cite{temporalCSP} (where the relations of templates are given by quantifier-free formulas using the order of the rational numbers), a class of problems in phylogenetic reconstruction \mic{called}  \emph{phylogeny  CSPs}~\cite{PhylogenyCSP} (where suitable templates can be provided using quantifier-free formulas using the \emph{homogeneous C-relation}), \mic{for} \emph{Graph-SAT problems}~\cite{Schaefer-Graphs,HomogeneousGraphs} (where templates have relations given by quantifier-free formulas over countable homogeneous graphs) and, more generally,   \emph{Hypergraph-SAT problems}~\cite{hypergraphs}, for certain  \emph{completion problems}~\cite{BodPro25,feller2024algebraic,BitterM24} (where templates have relations given by quantifier-free formulas over countable homogenous digraphs), as well as for the model-checking problem for sentences in the logic MMSNP of Feder and Vardi~\cite{MMSNP,MMSNP-journal} or the logic  GMSNP~\cite{GMSNP,Bienvenu:2014}  (for which suitable infinite-domain templates can be provided using non-trivial model-theoretic techniques). 
A conjecture similar to \mic{the} Feder-Vardi conjecture has been formulated for \mic{a class of infinite structures sufficiently large to contain the templates of these problems as well as all finite-domain CSPs, and sufficiently small to be contained in NP and allow for a universal-algebraic approach.} 

\begin{conjecture} (Bodirsky-Pinsker \mic{2011})
\label{conj:BodirskyPinsker}
 Let $\mathbb{A}$ be the model-complete core of a finite-signature structure which is first-order definable in a finitely bounded homogeneous structure. 
 Then
precisely one of the following holds:
\begin{enumerate}
\item  \label{conj:projecthomo} there exists an expansion $\mathbb{A}'$
of $\mathbb{A}$ by finitely many constants such that $\Pol(\mathbb{A}')$ has a
continuous projective homomorphism (and \mic{hence} $\CSP(\mathbb{A})$ is NP-complete);
\item \label{conj:noprojecthomo} for any such expansion $\mathbb{A}'$ 
the clone $\Pol(\mathbb{A}')$ has no continuous projective homomorphism. In this case, $\CSP(\mathbb{A})$ is in P.
\end{enumerate}
\end{conjecture}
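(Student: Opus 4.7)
The statement is in fact the famous Bodirsky--Pinsker dichotomy conjecture, which is open in its full generality. The abstract makes clear that the paper itself does not settle it, but only establishes algebraic sufficient conditions for tractability within a subclass; still, the conjecture does admit a natural proof strategy along the lines established in the finite case, and that is the plan I would sketch.

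The first, easy direction is the disjointness of the two cases and the NP-hardness in case (1). If $\Pol(\mathbb{A}')$ admits a continuous projective clone homomorphism $\xi$ to the clone of projections, then a standard pp-construction argument (as in the Bodirsky--Pinsker transfer of the universal-algebraic approach to $\omega$-categorical model-complete cores) shows that $\CSP(\mathbb{A})$ pp-constructs a finite-domain CSP known to be NP-hard such as $1$-in-$3$-SAT, and hence is itself NP-hard. Conversely, the same algebraic correspondence shows that the absence of such a homomorphism for every finite-constant expansion is an obstruction to any NP-hardness reduction of this form, so the two cases are mutually exclusive. Passing between $\mathbb{A}$ and $\mathbb{A}'$ is justified because adding finitely many constants to a model-complete core preserves the polynomial-time complexity class of the CSP.

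The hard direction is the tractability of case (2). Following Bulatov and Zhuk, I would try to mirror the finite-domain blueprint: first, translate the absence of a continuous projective clone homomorphism from $\Pol(\mathbb{A}')$ into the existence of strong height $1$ Maltsev identities (e.g.\ quasi-Siggers, quasi-J\'{o}nsson, quasi near-unanimity, or cyclic identities) satisfied modulo the automorphism group of $\mathbb{A}'$ by canonical polymorphisms obtained via Ramsey-theoretic methods over the underlying finitely bounded homogeneous structure; second, partition the templates according to which such identities hold and design, for each class, a polynomial-time algorithm, the natural candidates being local-consistency / bounded-width algorithms, few-subpowers-style algorithms, and reductions to linear-equation solving over finite groups. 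The present paper plugs into exactly this plan by tackling the bounded width case for templates of finite duality.

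The main obstacle is unquestionably the second step of this plan. In the infinite domain, a height $1$ Maltsev condition on $\Pol(\mathbb{A}')$ does \emph{not} per se guarantee an algorithm: local-consistency procedures operate on relations between variables, and these can grow arbitrarily because the domain of $\mathbb{A}$ is infinite, so termination and polynomial runtime must be argued via the finite boundedness of the underlying structure. Moreover, the landscape of Maltsev conditions is richer than in the finite case; even the analogue of the finite dichotomy between bounded width and systems of linear equations modulo a prime has no known complete analogue. Concretely, the point at which I would expect the argument to break down is in producing, from the absence of a projective homomorphism, polymorphisms which are simultaneously canonical (amenable to combinatorial analysis) and satisfy identities strong enough to drive a local-consistency argument to termination---this is precisely the difficulty that the paper, as announced in the abstract, addresses only for a subclass.
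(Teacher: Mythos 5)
You have correctly identified that the statement is an open conjecture, not a theorem, and that the paper neither proves nor claims to prove it; there is therefore no proof in the paper to compare your argument against. Your account of what \emph{is} established is accurate: the NP-hardness of case (1) and the reduction of the problem to model-complete cores are known (via pp-constructions and the work of Barto, Opr\v{s}al, and Pinsker), while the tractability in case (2) is the open part. One small point: you treat the mutual exclusivity of (1) and (2) as something requiring an argument, but as the paper itself remarks, condition (2) is by definition the negation of condition (1), so disjointness is immediate and no ``algebraic correspondence'' is needed. Your sketch of the Bulatov--Zhuk-style strategy and of where it breaks down in the infinite setting (termination of local consistency, the gap between height~1 identities on abstract polymorphisms versus canonical ones, the absence of a complete analogue of the finite bounded-width/affine dichotomy) is a fair summary of the known obstacles, and it matches the framing the paper gives when it positions its contribution as settling the bounded-width case for a subclass with finite duality.
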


The above 
 original formulation of the infinite-domain dichotomy conjecture restricts the infinite structures under consideration to so-called \emph{model-complete cores}, whose exact definition is not of importance for this paper -- this restriction of the scope of the conjecture is without loss  of generality, and we refer the interested reader to~\cite{BodirskyBook}.  
  All notions which are crucial for understanding the merits of the present  paper are defined in Section~\ref{sect:Prelims}. 
 
 The \mic{first item of the conjecture indeed implies NP-hardness}.  
  The tractability part, however, stays 
  wide open: Condition~(\ref{conj:noprojecthomo}) is just the negation of Condition~(\ref{conj:projecthomo}), and therefore does not give any substantial information which could set the direction for developing a polynomial-time algorithm.
Fortunately, 
a result~\cite{BartoPinskerDichotomy} due to Libor Barto and \mic{Michael Pinsker} 
 states that 
Condition~(\ref{conj:noprojecthomo}) is equivalent to the following \emph{infinite-domain algebraic tractability conjecture}.

\begin{conjecture}(Barto-Pinsker 2016) 
\label{conj:inftract}
Let   $\sA$ be the model-complete core of a  finite-signature structure which is first-order definable in a finitely bounded homogeneous structure.
If $\sA$ is invariant under a \emph{pseudo-Siggers} operation, i.e., a 6-ary operation $s$ for which there are unary polymorphisms $\alpha, \beta$
such that
\begin{equation}
\label{eqn:pseudoSiggers}
\alpha s(x, y, x, z, y, z) \approx \beta s(y, x, z, x, z, y)\nonumber
\end{equation}
for all $x,y,z$ in domain,  then $\Csp(\sA)$ is solvable in polynomial time.
\end{conjecture}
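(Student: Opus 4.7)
The plan is to extend to the infinite-domain setting the universal-algebraic strategy underlying the finite-domain tractability theorem of Bulatov and Zhuk, exploiting the topology of pointwise convergence on $\Pol(\sA)$ together with the canonical-functions machinery from infinite-domain theory. As a first step, one enriches the pseudo-Siggers identity into a full family of derived pseudo-identities: using that $\sA$ is a model-complete core first-order definable in a finitely bounded homogeneous structure $\sB$, one expands $\sB$ to a Ramsey structure and, via canonification, extracts from a pseudo-Siggers polymorphism canonical polymorphisms witnessing pseudo-cyclic, pseudo-weak-near-unanimity, or pseudo-J\'{o}nsson chains of arbitrary arities modulo $\Aut(\sA)$.

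Next, one sets up an algorithmic dichotomy on the behaviour of these canonical polymorphisms on finite subsets of $A$. Following Zhuk, one isolates a list of structural obstructions --- infinite-domain analogues of absorbing, central, and linear subuniverses --- and shows that any instance $\instance$ of $\Csp(\sA)$ either admits such an obstruction locally (giving a reduction to a strictly simpler subinstance or subtemplate) or is resolved by a sufficiently high level of local consistency. Assembling this with enforcement of $(j,k)$-minimality yields a polynomial-time procedure that either returns a solution, rejects because some local obstruction is falsified, or recurses on a simpler instance; expansion of $\sA$ by finitely many constants, permitted by the assumption that $\sA$ is a model-complete core, ensures that the recursion terminates in polynomially many steps.

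The hardest part, and the reason Conjecture~\ref{conj:inftract} is wide open, is the second step. Since $\Pol(\sA)$ acts on an infinite domain, there are typically no finite subalgebras on which to run Bulatov--Zhuk inductions, and there is currently no uniform translation of the finite-domain dichotomy between \emph{bounded width} and \emph{handles systems of linear equations} into the infinite-domain setting. Any attempt at the full conjecture must develop new finitary invariants of $\Pol(\sA)$ --- stable under the oligomorphic action of $\Aut(\sA)$ and well behaved under expansion by constants --- that simultaneously govern both halves of the dichotomy. To date this has only been achieved for isolated families such as temporal CSPs, phylogeny CSPs, and certain Graph-SAT and completion problems; the present paper contributes new sufficient algebraic conditions targeting the bounded-width half of this dichotomy, and a resolution of the full conjecture will plausibly proceed by combining such conditions with an infinite-domain analogue of Zhuk's linear-obstruction theory.
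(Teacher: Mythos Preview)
The statement you were asked to prove is labelled \emph{Conjecture} in the paper, and for good reason: the paper does not prove it, nor does any other paper. The infinite-domain algebraic tractability conjecture is presented as an open problem equivalent to Condition~(\ref{conj:noprojecthomo}) of the Bodirsky--Pinsker conjecture; the paper explicitly says that ``the tractability part, however, stays wide open''. There is therefore no proof in the paper to compare your proposal against.

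Your submission is not a proof either, and to your credit you say so yourself: you write that the second step ``is the reason Conjecture~\ref{conj:inftract} is wide open'' and that ``there is currently no uniform translation'' of the relevant finite-domain machinery. What you have written is a plausible high-level research programme --- canonify a pseudo-Siggers polymorphism over a Ramsey expansion, hope to extract enough structure to run a Bulatov--Zhuk style recursion --- but none of the steps is actually carried out, and the hard step (the infinite-domain analogue of the absorbing/central/linear decomposition) is explicitly left as a gap. As a proof attempt this is vacuous; as a survey of the landscape it is reasonable, and indeed broadly consistent with how the paper frames the state of the art. The paper's own contribution is precisely a partial result in this direction: it shows that a specific height-1 Maltsev condition (chains of quasi J\'{o}nsson operations), which is strictly stronger than pseudo-Siggers, implies bounded width for first-order expansions of $k$-neoliberal structures with finite duality.
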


Conjecture~\ref{conj:inftract} is almost identical to the \mic{algebraic tractability criterion for finite-domain CSPs confirmed by} 
 Bulatov and Zhuk,  with the difference that in the finite-domain version 
 $\alpha, \beta$ \mic{are omitted; $s$ is then simply called a \emph{Siggers operation}.} 

Prior to the resolution of the finite-domain algebraic tractability conjecture, 
 research focused on 
 \mic{restrictions thereof obtained by imposing additional assumptions of two kinds:
 \begin{itemize}
 \item directly on the template: e.g.~its size~\cite{Schaefer78,Bulatov02}, or its relations such as the cases of symmetric graphs~\cite{HellN90,Bulatov05} or smooth digraphs~\cite{BartoKN09}
 \item on the algebraic invariants (polymorphisms) of the template: the satisfaction  of  certain  identities by its polymorphisms~\cite{BartoKozikCD,BartoKozikBoundedWidth,IdziakMMVW07}) -- such conditions are called \emph{Maltsev conditions}.
 \end{itemize}
 }
 Indeed, researchers considered identities stronger than \mic{the above-mentioned Siggers identity implying  tractability}, 
  and in particular provided algebraic characterizations of the limits of the applicability of
local consistency checking~\cite{BartoKozikBoundedWidth} as well as of \mic{algorithms resembling  Gaussian elimination}~\cite{IdziakMMVW07}. Both of these algorithmic methods play a crucial role 
 in the final procedures of the algorithms of Bulatov and Zhuk.
It is fair to say that research on \mic{both kinds of restrictions of the conjecture, i.e.~by imposing structural or algebraic assumptions on the template,}  
\mic{not only contributed to its} 
 eventual resolution, 
 \mic{but also} 
  lead to results of independent interest.

  \mic{While work on infinite-domain CSPs has also delivered impressive results, these were, however, almost exclusively confined to the former kind, that is, structural restrictions directly on the template~\cite{temporalCSP, Schaefer-Graphs,MMSNP-journal,hypergraphs}. In fact, there are only two general results guaranteeing polynomial-time solvability of  CSPs within the Bodirsky-Pinsker conjecture under  conditions on their polymorphisms:
   the one  in~\cite{ReductsUnary} 
employs  so-called \emph{canonical pseudo-Siggers operations};  and the one in~\cite{BodirskyDalmau} applies to  templates invariant under a \emph{quasi near-unanimity operation}, i.e. an operation satisfying the identities  
\begin{equation}
\label{eq:nu}
f(y,x, \ldots, x) =  \cdots = f(x,x, \ldots, y) = f(x, \ldots, x)
\end{equation}
for all $x,y$ in the domain. The former provides a  black-box reduction to finite-domain CSPs, while the proof of the latter is a straightforward generalization of the result in~\cite{FederVardi} for finite templates.
}

\mic{Note that the identities in~(\ref{eq:nu}) are of \emph{height~1}, i.e.~there is no nesting of function symbols and variables only appear as arguments in function symbols (they are never isolated).  The prefix \emph{quasi} (as here in \emph{quasi near-unanimity operation}) always refers to the omission of the identity $x=f(x,\ldots,x)$ (which is not of height~1); functions $f$ satisfying this additional identity are called \emph{idempotent}. In consequence, for a \emph{near-unanimity operation} the last term in~(\ref{eq:nu}) could  be replaced by  $x$.  
One might wonder} 
 why we do not compose every side of each of the identities in~(\ref{eq:nu})  with unary operations  to obtain a \emph{pseudo near-unanimity operation} resembling the pseudo-Siggers operation from~\Cref{conj:inftract} \mic{(which is not of height~1)}. The reason is that due to an observation of Marcin Kozik and  \mic{Micha{\l}  Wrona}, \mic{this seemingly weaker algebraic condition turns out to be equivalent} 
 (see Lemma~10.1.3 in~\cite{BodirskyBook}). 

\mic{An algebraic condition 
 of importance for finite-domain templates
is given by chains of \emph{idempotent Pixley} operations, which we here generalize to chains of \emph{quasi Pixley} operations.} The  question whether this condition,  either the original (idempotent) one or its quasi-version,  implies tractability and/or bounded width  over first-order reducts of finitely bounded homogeneous structures   is 
Problem~(35) 
 in Section~14 in~\cite{BodirskyBook}. We solve this question positively for an important subclass of infinite structures within Bodirsky-Pinsker conjecture. 

\begin{definition}
\label{def:QuasiPixley}
A sequence $(P_1, \ldots , P_n)$ of ternary operations on a set $A$ is
called a chain of \emph{quasi Pixley operations} if for all $x, y \in A$

\begin{eqnarray}
P_1(x, y, y) &=& P_1(x, x, x), \nonumber\\
P_i(x, y, x) &=& P_i(x, x, x) \hspace{36 pt } \textrm{ for all } i \in \{1, \ldots , n\}, \nonumber\\
P_i(x, x, y) &=& P_{i+1}(x, y, y) \hspace{30 pt} \textrm{ for all } i \in \{1, \ldots , n\}, \nonumber\\
P_n(x, x, y) &=& P_n(y, y, y).\nonumber
\end{eqnarray}
\end{definition}

\mic{Another algebraic condition of height~1 identities unknown to imply polynomial-time solvability} is given by chains of \emph{quasi directed  J\'{o}nsson} operations. These  are provably implied by  quasi near-unanimity operations: every structure invariant under operations satisfying the latter identities is also invariant under operations satisfying the  former (\mic{see Proposition~6.9.12 in~\cite{BodirskyBook}).}

\begin{definition}
\label{def:QuasiDJonsson}
A sequence $(D_1, \ldots , D_{n})$ of ternary operations on a set $A$ is
called a chain of quasi directed J\'{o}nsson operations if for all $x, y \in A$:
\begin{eqnarray}
\label{eq:D1} D_1(x,x,y) &=& D_1(x,x,x), \nonumber\\
\label{eq:Di} D_i(x,y,x) &=& D_i(x,x,x) \hspace{36 pt }\textrm{ for all } i \in [n],\nonumber\\
\label{eq:Dii+1} D_i(x,y,y) &=& D_{i+1}(x,x,y) \hspace{30 pt} \textrm{ for all } i \in [n-1],\nonumber\\
\label{eq:Dn} D_n(x,y,y) &=& D_n(y,y,y).\nonumber
\end{eqnarray} 
\end{definition}

Finally, the weakest (\mic{see Proposition~6.9.9 in~\cite{BodirskyBook})} algebraic condition of height~1 considered in this paper is the following. 

\begin{definition}
\label{def:QuasiJonsson}
A sequence $(J_1, \ldots , J_{2n+1})$ of ternary operations on a set $A$ is
called a chain of \emph{quasi J\'{o}nsson operations} if for all $x, y \in A$:
\begin{eqnarray}
\label{eq:J1} J_1(x,x,y) &=& J_1(x,x,x), \\
\label{eq:Ji} J_i(x,y,x) &=& J_i(x,x,x) \hspace{18 pt }\textrm{ for all } i \in [2n+1],\\
\label{eq:J2i-12i} J_{2i-1}(x,y,y) &=& J_{2i}(x,y,y) \hspace{18 pt} \textrm{ for all } i \in [n+1],\\
\label{eq:J2i2i+1} J_{2i}(x,x,y) &=& J_{2i+1}(x,x,y) \hspace{10 pt} \textrm{ for all } i \in [n],\\
\label{eq:J2n+1} J_{2n+1}(x,y,y) &=& J_{2n+1}(y,y,y).
\end{eqnarray} 
\end{definition}

The main contribution of this paper, which we shall  explain in detail in the following subsection,  states that the invariance of a large class of  infinite structures  $\sA$ within the Bodirsky-Pinsker conjecture   under a chain of quasi J\'{o}nsson operations implies that the corresponding $\Csp(\sA)$ is solvable by local consistency checking, and is in consequence  in P. Our result thereby lifts  
the corresponding result for finite-domain CSPs and chains of idempotent J\'{o}nsson operations (``\emph{congruence distributivity}'')~\cite{BartoKozikCD}, which was an intermediate step towards giving the full algebraic characterization of finite $\sA$ solvable by local-consistency methods (``\emph{congruence meet-semidistributivity}'')~\cite{BartoKozikBoundedWidth}. Our proof is, however, based on completely different techniques than the one for finite domains. This, the fact that the proof for quasi near-unanimity operations simply  mimics the one  for finite domains, and the fact that \mic{the above-mentioned result on canonical pseudo-Siggers operations is not purely algebraic due to the  additional condition of canonicity (which is a property that also requires a topology on the polymorphisms in order to be defined)},    
imply that 
here we provide the first non-trivial results showing that algebraic  conditions imply tractability for structures within Bodirsky-Pinsker conjecture.  By Propositions~6.9.9 and~6.9.12 in~\cite{BodirskyBook}, our main contribution also yields  tractability for templates invariant under \mic{chains of  quasi directed J\'{o}nsson operations and of quasi Pixley operations,  respectively}. 

\subsection{The main result}

Our general theorem firstly  assumes that the finite bounds of the ground structure $\relstr B$ in which the template is defined (see Conjecture~\ref{conj:BodirskyPinsker})  are closed under homomorphisms; we say that $\relstr B$ has  \emph{finite duality}. 
This is an important case within the scope of the Bodirsky-Pinsker conjecture,  including in particular many interesting  templates modeling Graph-SAT and  Hypergraph-SAT problems, 
 generalized completion problems, as well as the model-checking problems for MMSNP and GMSNP mentioned above.
Secondly, we assume that $\sB$ is \emph{$k$-neoliberal}, meaning that all its relations are $k$-ary and in a strong sense the entire structure of $\relstr B$ is completely reflected in these relations; a bit more precisely, there are no non-trivial first-order definable relations of arity smaller than $k$, all first-order definable relations of arity $k$ are unions of relations in $\sB$, all first-order definable  relations of higher arity are Boolean combinations of those of arity $k$, and  there are no algebraic dependencies between the elements of any $k$-tuple. 

The literature on CSPs  encompasses many examples of  $k$-neoliberal structures: for example,  Henson digraphs~\cite{BodPro25,feller2024algebraic,BitterM24}  (with $k=2$), the homogeneous C-relation (with $k=3$)~\cite{PhylogenyCSP},  and  $k$-uniform hypergraphs
for $k\geq 2$~\cite{Schaefer-Graphs,hypergraphs}. 
\mic{Among these, many moreover have finite duality (homogeneous graphs such as the random graph, Henson digraphs, and some $k$-uniform hypergraphs). These can} 
 be further generalized to 
 multi-graphs, i.e., graphs with edges colored 
 by several 
 colors,  to homogeneous multi-digraphs,  and to multi-hypergraphs, respectively. We would like to stress that our main result applies also to the latter structures  whose CSPs are far from being classified, and which present  significant obstacles to a classification. Indeed, while the literature on CSPs encloses a number of complexity classifications for first-order reducts of $k$-neoliberal structures $\relstr B$ whose automorphism group acts with two orbits on $k$-tuples with pairwise different entries (e.g., for $k$-uniform hypergraphs, $k$ pairwise different elements may either form a hyperedge or not),  
there are essentially no similar results when $\relstr B$ has hyperedges of $c$ pairwise different colors with $c > 2$ (a hyperedge and a non-hyperedge may be seen as two different colors).

\begin{example}
A general recipe for  obtaining finitely bounded homogeneous $3$-graphs (i.e.~graphs with three different kind of edges) is given in~\cite{cherlin_2022_2}. 
One starts with  any finite set of finite $2$-graphs 
$\mathcal{F}$ over a signature $\tau= \{ O, P \}$ where all pairs of vertices are connected either by an $O$-edge or a $P$-edge.
 The class of finite structures $\mathscr{C}$ over $\tau$ omitting the $2$-graphs in $\mathcal{F}$ has free amalgamation and the Fra\"{i}ss\'{e} limit $\sB$ of $\mathscr{C}$ yields  the desired  structure if we add to $O$ and $P$ a third kind of edge which connects two distinct vertices whenever  there is neither an $O$-edge nor a $P$-edge. It is a simple exercise to show that such 
$\sB$ is $2$-neoliberal and of finite duality.
\end{example}

When looking for examples of CSP templates with a first-order definition  in a  neoliberal structure with finite duality which are  preserved by quasi near-unanimity operations we reach out  to~\cite[Proposition~16]{Wrona:2020a}.

\begin{example}
Let $\sA$ be a structure whose all relations are first-order definable  as conjunctions of clauses of the form:
$$(x_1 \neq y_1 \vee \cdots \vee  x_k \neq y_1 \vee R(y_1, y_2) \vee y_2 \neq z_1 \vee \cdots \vee y_2 \neq z_l),$$
over the random graph $\sB =(V; E)$ where $R$ is either the edge relation $E$  or the non-edge relation $N$ imposed on pairwise different elements in $A$. Then $\sA$ is preserved by a quasi near-unanimity operation and in consequence by a chain of quasi J\'{o}nsson operations.
\end{example} 

An example of a relational structure preserved by a chain of quasi J\'{o}nsson operations but not by a quasi near-unanimity operation may be found among structures with a first-order definition in $(\mathbb{N}, =)$ where $\mathbb{N}$ is the set of natural numbers. These templates are usually called \emph{equality languages}  and have been investigated in detail  (up to primitive positive definability)  in~\cite{BodirskyCP10}. The desired equality language will be described by its polymorphisms.

\begin{example}
Let $1 \leq i \leq n$. An $n$-ary operation $f$ is \emph{injective in the $i$-th direction} if
$$x_i \neq \widehat{x}_i \implies f(x_1, \ldots, x_i, \ldots, x_n) \neq f(x_1, \ldots, \widehat{x}_i, \ldots, x_n)$$
for all $x_1, \ldots, x_i, \widehat{x}_i \ldots, x_n \in \mathbb{N}$. The operation $f$ is \emph{injective in one direction} if it is injective in the $i$-th direction for some $i \in \{ 1,\ldots, n \}$. We define $\mathbb{R}$ to be the structure containing all relations which are invariant under all operations that are injective in one direction. It is shown in~\cite{ScheckThesis} that $\mathbb{R}$ is invariant neither under a quasi near-unanimity operation nor under chains of idempotent Pixley or idempotent directed J\'{o}nsson operations; but it is invariant both under chains of quasi Pixley and quasi directed J\'{o}nsson operations, and hence also by chains of quasi J\'{o}nsson operations. 
\end{example}

\mic{Adopting the strategy for quasi near-unanimity operations from \cite{Wrona:2020b},}  
  we prove that any CSP template $\sA$ which is a first-order expansion of a ground structure $\relstr B$ with the properties described  above and which is invariant under a chain of quasi J\'{o}nsson operations has limited expressive power in the form of \emph{implicational simplicity}. We will show that this condition is equivalent to saying that $\sA$ does not primitively positively define an injective relation which entails a formula of the form $R(x_1,\dots,x_k)\Rightarrow R(x_{i_1},\dots,x_{i_k})$ for some relation $R$ primitively positively definable from $\sA$.

\begin{restatable}{theorem}{implsimple}\label{thm:implsimple}
Let $k\geq 2$, let $\sB$ be $k$-neoliberal, and suppose that $\sB$ has finite duality. Suppose that $\relstr A$ is a CSP template which is an expansion of $\sB$ by first-order definable relations (i.e., Boolean combinations of the relations of $\sB$). 
If $\relstr A$ is invariant under a chain of quasi J\'{o}nsson operations, then it is implicationally simple on injective instances.
\end{restatable}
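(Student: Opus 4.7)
I would proceed by contradiction. Assume that $\sA$ is not implicationally simple on injective instances: there is an injective pp-definable relation $S$, a pp-definable relation $R$ of some arity $k$, and a reordering $\sigma=(i_1,\dots,i_k)$ such that $S$ entails $R(x_1,\dots,x_k)\Rightarrow R(x_{i_1},\dots,x_{i_k})$, yet the implication fails somewhere in $\sA$, i.e.\ there is a tuple $\tuple b$ with $R(\tuple b)$ and $\neg R(\sigma\tuple b)$. Fix also an injective $\tuple a\in S$, whose existence follows from the injectivity of $S$. The goal is to derive a contradiction from the presence of a chain $(J_1,\dots,J_{2n+1})$ of quasi J\'onsson operations in $\Pol(\sA)$.

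The core idea is to use the J\'onsson chain to interpolate between $\tuple a$ and $\tuple b$ through a sequence of tuples that remain, step by step, inside the pp-definable relation $S$ (which is preserved by every $J_i$). The axis identity $J_i(x,y,x)=J_i(x,x,x)$ pins the value of $J_i$ along the diagonal; the endpoint identities $J_1(x,x,y)=J_1(x,x,x)$ and $J_{2n+1}(x,y,y)=J_{2n+1}(y,y,y)$ make the first operation act like a first-coordinate projection and the last like a third-coordinate projection on the relevant shapes; and the linking identities $J_{2i-1}(x,y,y)=J_{2i}(x,y,y)$ and $J_{2i}(x,x,y)=J_{2i+1}(x,x,y)$ let us pass between consecutive operations. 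Applied in parallel to carefully chosen inputs built from $\tuple a$, $\tuple b$, and $\Aut(\sB)$-translates thereof, this should produce a tuple lying in $S$ which nevertheless still realises the pattern $R(\cdot)\wedge\neg R(\sigma\cdot)$, contradicting the assumed entailment.

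The bridge to the hypotheses on $\sB$ is where neoliberality and finite duality enter. By $k$-neoliberality, the $\Aut(\sB)$-orbit of any tuple is fully determined by the pattern of relations of $\sB$ on its $k$-subtuples; there are no lower-arity restrictions and no algebraic dependencies among the entries of an injective tuple that could obstruct matching. Finite duality then provides a \emph{finite} list of forbidden homomorphic images characterising which such $k$-ary patterns are realised in $\sB$, reducing the existence of appropriate interpolating tuples to a purely combinatorial question about avoiding finitely many forbidden patterns. Combined, these let us manipulate tuples coordinate-wise on the $k$-ary level, apply the J\'onsson identities there, and know that the resulting patterns are still embeddable into $\sB$.

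The main obstacle, I expect, is controlling the ``seams'' of the chain: the identities $J_{2i-1}(x,y,y)=J_{2i}(x,y,y)$ and $J_{2i}(x,x,y)=J_{2i+1}(x,x,y)$ hold only on special inputs, so between $J_i$ and $J_{i+1}$ one must readjust the auxiliary tuples by an automorphism of $\sB$ while preserving both membership in $S$ and the $R/\neg R$ pattern on the distinguished coordinates. The finite-domain J\'onsson argument (for congruence distributivity) handles this trivially, but here the adjustment must be done in the infinite structure $\sB$ and must simultaneously respect every pp-definable constraint. This is precisely where $k$-neoliberality (reducing the alignment problem to $k$-subtuples, with no hidden lower-arity obstructions) and finite duality (providing a compactness-style guarantee that consistent local patterns amalgamate inside $\sB$) are essential, and why the proof must depart from the standard finite-domain template.
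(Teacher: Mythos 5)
Your proposal is a plan, not a proof, and the parts you leave out are exactly where the mathematical content of the theorem lies.

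The first gap concerns the reduction from ``implicationally hard on injective instances'' to a concrete pp-definable object on which one can run the Jónsson-chain argument. You move directly to ``fix $\tuple b$ with $R(\tuple b)$ and $\neg R(\sigma\tuple b)$ and interpolate,'' but a single such tuple is far too little to work with. To make the J\'onsson induction close, one needs the implication to be \emph{saturated}: informally, for every choice of orbits on the two distinguished $k$-tuples consistent with the cyclic structure, the witnessing assignment must actually lie in the relation, and it must be choosable injectively. This is what the paper packages into \emph{complete} implications and then \emph{critical} relations, and producing one from a mere cycle in the injective implication graph requires composing implications (\Cref{lemma:kimpl}), iterating to a fixed point that makes every strongly connected component of the orbit digraph complete (\Cref{lemma:kcomplete}, \Cref{observation:sinksource}), and then using finite duality precisely to show that the high power $\psi^{\circ d}$ contains \emph{all} injective assignments of the right orbit types -- this is the construction of $\sX$ and the homomorphism extension argument in \Cref{cor:critical}. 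Your reading of finite duality as ``a compactness-style guarantee'' points in the right direction, but the place it is used is this saturation step, not the J\'onsson recursion itself.

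The second gap is the J\'onsson recursion, which you explicitly flag as the ``main obstacle'' and do not resolve. The paper runs an induction on the index $i$ of $J_i$ over vectors of two carefully defined ``shapes'' (in $C^2 \times D$ versus $C \times D^2$), using the middle identity $J_i(x,y,x)=J_i(x,x,x)$ to replace a coordinate by a fresh auxiliary value while staying inside the relation, and using the alternating linking identities to pass between even and odd indices. Each step requires fresh auxiliary elements realizing specific $k$-ary orbits, produced from $(k-1)$-transitivity and no $k$-algebraicity; none of this can be improvised, and the whole induction only closes because the underlying relation is the critical one produced in the first stage. Finally, you do not address the case $k=2$, where $I^B_2$ is not pp-definable by transitivity alone and must be derived from the J\'onsson chain itself (\Cref{lemma:diseqJonsson}) before the above machinery even applies. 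In short, the strategy you describe matches the paper's at the level of slogans (``contradiction via J\'onsson identities, neoliberality and finite duality''), but the two decisive steps -- extracting a saturated critical implication and then running the shaped J\'onsson induction on it -- are precisely the ones you acknowledge not knowing how to perform.
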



As a corollary of~\Cref{thm:implsimple} and results from~\cite{SmoothApproximations, hypergraphs,MarimonPinskerMinimalOps}, we obtain a bound on the amount of local consistency needed to solve CSPs of the templates under consideration.

\begin{restatable}{corollary}{kmain}\label{thm:kmain}
Let $k\geq 2$, let $\sB$ be $k$-neoliberal, and suppose that $\sB$ has finite duality. Suppose that $\relstr A$ is a CSP template which is an expansion of $\sB$ by first-order definable relations (i.e., Boolean combinations of the relations of $\sB$). If $\relstr A$ is invariant under a chain of quasi J\'{o}nsson operations, then it has relational width $(k,\max(k+1,b_{\sB}))$, and hence $\Csp(\sA)$ is polynomial-time solvable.
\end{restatable}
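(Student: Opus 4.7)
The plan is to derive the corollary directly from \Cref{thm:implsimple} by feeding its conclusion into the machinery developed in~\cite{SmoothApproximations,hypergraphs,MarimonPinskerMinimalOps}, which characterizes relational width in terms of the non-existence of certain primitive positive interpretations/definitions of implicational structure on injective instances. First, from the hypotheses of the corollary, \Cref{thm:implsimple} applies verbatim: $\relstr A$ is an expansion of a $k$-neoliberal ground structure $\sB$ of finite duality by first-order definable relations, and invariance under a chain of quasi J\'{o}nsson operations yields implicational simplicity of $\relstr A$ on injective instances.

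Next, I would invoke the general width bound for expansions of finitely bounded homogeneous structures established via smooth approximations: for templates within the Bodirsky--Pinsker scope, implicational simplicity on injective instances (together with the absence of any finite-domain obstructions to bounded width, which here is automatic because the polymorphism clone contains the chain of quasi J\'{o}nsson operations and thus excludes projection-type minor-preserving maps to the clone of projections over a finite set) suffices to conclude relational width $(k,\ell)$ for $\ell=\max(k+1,b_{\sB})$. The parameter $k$ arises as the maximal arity of the relations of $\sB$ (since $\sB$ is $k$-neoliberal), while $b_\sB$ is the size of the largest forbidden substructure witnessing finite boundedness; together these control the arity at which consistency needs to be enforced so that any locally consistent partial solution extends, using the quasi J\'{o}nsson chain as the algebraic certificate of extendability.

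Finally, having relational width $(k,\max(k+1,b_{\sB}))$ immediately yields polynomial-time solvability of $\Csp(\relstr A)$ by the standard $(k,\ell)$-minimality algorithm: for a fixed template, running $(k,\ell)$-minimality takes time polynomial in the instance size, and the width guarantee ensures that a contradiction is derived whenever the instance is unsatisfiable. The template-dependent constants (the bound $b_\sB$ and the arities of the relations) enter only into the polynomial's degree and constants.

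The only non-routine step in this chain is the application of the results from~\cite{SmoothApproximations,hypergraphs,MarimonPinskerMinimalOps}: one must verify that the hypotheses required there, typically some form of the loop lemma or smooth approximation property, are indeed satisfied under our setup. The main obstacle I anticipate is ensuring that the neoliberality and finite duality of $\sB$, combined with the existence of a quasi J\'{o}nsson chain (a weaker condition than quasi near-unanimity used in~\cite{Wrona:2020b}), are enough to rule out all obstructions to bounded width beyond the implicational one handled by \Cref{thm:implsimple}. I expect this to follow because the quasi J\'{o}nsson identities preclude any minor-preserving map to a projection clone on a two-element set, which by the smooth approximation framework is the sole remaining obstacle once implicational simplicity has been established.
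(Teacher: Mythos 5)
Your high-level plan matches the paper's: apply \Cref{thm:implsimple} to obtain implicational simplicity on injective instances, then convert that into the relational-width bound. The paper's actual derivation, however, is a chain through two lemmas you do not name, and here your proposal has a real gap. The step you describe as ``invoke the general width bound \dots established via smooth approximations'' is \emph{not} an off-the-shelf result from~\cite{SmoothApproximations,hypergraphs,MarimonPinskerMinimalOps}; it is precisely \Cref{prop:implsimple} in the present paper, whose proof is new (an inductive pruning of the injective implication graph, culminating in \Cref{lemma:core-bwidth}). Asserting it as existing literature leaves the central implication unjustified.

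The second missing ingredient is the reduction from $\Csp(\sA)$ to $\Csp_{\injinstances}(\sA)$, which is \Cref{cor:libcores-purelyinj}. Your remark about the quasi J\'onsson chain ``excluding projection-type minor-preserving maps'' gestures in the right direction but does not describe how this is actually used: the chain rules out a uniformly continuous clone homomorphism to projections, which (via \Cref{lemma:binessen,lemma:bininj2}) yields a binary injection, and that binary injection together with pp-definability of $I^B_2$ is what permits passing to injective instances. For $k\geq 3$ the pp-definability of $I^B_2$ is automatic from $2$-transitivity, but for $k=2$ it requires \Cref{lemma:diseqJonsson}, a point your plan does not address. So the proposal is right in spirit, but the bridge lemmas are in the paper, not in the cited references, and you must name and use them rather than outsource the argument.
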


\subsection{Related work}

\Cref{thm:kmain} provides both a tractability result and a 
\mic{bound on the amount of local consistency required to guarantee a solution to an instance.} \cht{Moreover, this bound is easily seen to be optimal for any structure under consideration.} 
\mic{As mentioned above, the only previous general results providing tractability within the Bodirsky-Pinsker conjecture by imposing conditions on polymorphisms were the one on quasi near-unanimity operations~\cite{BodirskyDalmau} (which is a straightforward generalization from the finite) and the one on canonical pseudo-Siggers operations~\cite{ReductsUnary} (which is not purely algebraic in the sense that it requires topology to be stated). 
We thus perceive~\Cref{thm:kmain} as the first non-trivial result on the tractability of infinite-domain CSPs originating from purely algebraic (height~1)  conditions.}

Building on results in~\cite{FederVardi,DalmauPearson,BartoKozikBoundedWidth}, Libor Barto has shown in~\cite{BartoCollapse} that the  relational width of any finite structure is either $(1,1)$ or $(2,3)$. We do not have a full understanding of this phenomenon over infinite structures. There is a plethora of interesting results, though.    
In~\cite{SmoothApproximations}, bounded width was characterized for CSP templates $\sA$ over several ``ground structures'' $\sB$ (in which they are first-order definable), using \emph{weak near-unanimity} identities satisfied by \emph{canonical polymorphisms}; this amounts to assuming weaker identities than quasi near-unanimity identities, but the additional (non-algebraic) property of canonicity.
The conditions given there were applied  in~\cite{SymmetriesEnough} to obtain a general  upper bound on the relational width of CSP templates satisfying them, and the bound was shown to be optimal for many templates. The bound on the relational width in~\Cref{thm:kmain} coincides with the bound proven in~\cite{SymmetriesEnough} for CSP templates which posses canonical \emph{pseudo-totally symmetric polymorphisms} of all arities $n\geq 3$. 
The results in~\cite{SymmetriesEnough} have been extended to CSP templates over finitely bounded homogeneous $k$-uniform hypergraphs for  $k\geq 3$ in~\cite{hypergraphs}.
\mic{The very first bounds as in}
~\Cref{thm:kmain} over infinite structures within Bodirsky-Pinsker conjecture were, however, obtained in~\cite{Wrona:2020a,Wrona:2020b}, where a similar upper bound was given for first-order expansions $\sA$ of certain binary structures $\sB$  under the assumption that $\sA$ is invariant under  a quasi near-unanimity operation.


\section{Preliminaries}\label{sect:Prelims}

\subsection{Relational structures and permutation groups}

Let $\sB$ be a relational structure, and let $\phi$ be a first-order formula over the signature of $\sB$. We identify the interpretation $\phi^{\sB}$ of $\phi$ in $\sB$ with the set of satisfying assignments for $\phi^{\sB}$. Let $V$ be the set of free variables of $\phi$, and let $\tuple u$ be a tuple of elements of $V$. We define $\proj_{\tuple u}(\phi^{\sB}):=\{f(\tuple u)\mid f\in \phi^{\sB}\}$. 
A \emph{first-order expansion} of a structure $\sB$ is an expansion of  $\sB$ by relations which are first-order definable in $\sB$, i.e., of the form $\phi^\sB$.

A first-order formula  is called  \emph{primitive positive} (pp) if it is built exclusively from  atomic formulae, existential quantifiers, and conjunctions. A  relation is \emph{pp-definable} in a relational structure  if it is first-order definable by a pp-formula. 

\begin{definition}\label{defn:fb}
Let $\sB$ be a structure over a finite relational signature $\tau$. We say that $\sB$ is \emph{finitely bounded} if there exists a finite set $\mathcal{F}$ of finite $\tau$-structures such that for every finite $\tau$-structure $\sC$, $\sC$ embeds to $\sB$ if no $\sF\in\mathcal{F}$ embeds to $\sC$.
Let $\mathcal{F}_{\sB}$ be a set witnessing the finite boudnedness of $\sB$ such that the size of the biggest structure contained in $\mathcal{F}_{\sB}$ is the smallest possible among all choices of $\mathcal F$; we write $b_{\sB}$ for this size.

We say that $\sB$ has a \emph{finite duality} if there exists a finite set $\mathcal{F}$ of finite $\tau$-structures such that for every finite $\tau$-structure $\sC$, $\sC$ maps homomorphically to $\sB$ if no $\sF\in\mathcal{F}$ maps homomorphically to $\sC$.

Finally, $\sB$ is \emph{homogeneous} if partial isomorphism between finite induced substructures extends to an automorphism of $\sB$. 
\end{definition}

Let $\gG$ be a permutation group acting on a set $A$, let $k\geq 1$,  and let $\tuple a\in A^k$. The \emph{orbit} of $\tuple a$ under $\gG$ is the set $\{g(\tuple a)\mid g\in \gG\}$. An orbit of a pair of elements is often called an \emph{orbital}. We say that $\gG$ is \emph{oligomorphic} if for every $k\geq 1$, $\gG$ has only finitely many orbits of $k$-tuples in its action on $A$. We say that a relational structure $\sA$ is \emph{$\omega$-categorical} if its automorphism group is oligomorphic.
Let $k\geq 1$. We say that $\fG$ is \emph{$k$-transitive} if it has only one orbit in its action on injective $k$-tuple of elements of $A$. 
$\fG$ is \emph{$k$-homogeneous} if for every $\ell\geq k$, the orbit of every $\ell$-tuple under $\fG$ is uniquely determined by the orbits of its $k$-subtuples. 
$\fG$ has \emph{no $k$-algebraicity} if the only fixed points of any stabilizer of $\fG$ by $k-1$ elements are these elements themselves. 
The \emph{canonical $k$-ary structure of $\fG$} is the relational structure on $A$ that has a relation for every orbit of $k$-tuples under $\fG$.

\begin{definition}
    Let $k\geq 2$, and let $\fG$ be a permutation group acting on a set $A$. 
    We say that $\fG$ is \emph{$k$-neoliberal} if it is oligomorphic, $(k-1)$-transitive, $k$-homogeneous, and has no $k$-algebraicity.

    A relational structure $\sB$ is \emph{$k$-neoliberal} if it is the canonical $k$-ary structure of a $k$-neoliberal permutation group.
\end{definition}

We remark that if a relational structure $\sB$ is $k$-neoliberal, then: by $(k-1)$-transitivity for every $\ell<k$, the only relations of arity $\ell$ which are first-order definable from $\sB$ are Boolean combinations of equalities and non-equalities; by definition, all $k$-ary first-order definable relations are unions of relations in $\sB$; and it follows from  $k$-homogeneity and oligomorphicity that any relation of arity $\ell> k$ first-order definable from $\sB$ is a  Boolean combination of the ($k$-ary) relations.

The notion of $k$-neoliberality is inspired by the notion of liberal binary cores from~\cite{Wrona:2020b} -- every liberal binary core is $2$-neoliberal. However, the opposite is not true -- the automorphism group of the universal homogeneous $\sK_3$-free graph (i.e., the unique homogeneous graph having the bounds $\mathcal F=\{\sK_3, \rightarrow, \circlearrowright \}$ in~\Cref{defn:fb}, where $\rightarrow$ is a directed edge and $\circlearrowright$ a loop) is easily seen to be $2$-neoliberal, whence its expansion by the equality relation and by the relation containing all pairs of distinct  elements which are not connected by an edge is $2$-neoliberal, but it is a binary core which is not liberal. This is because a liberal binary core is supposed to be finitely bounded and the set of forbidden bounds should not contain any structure of size $k$ whenever $3\leq k\leq 6$. However, $\sK_3$ is a $3$-element graph which does not embed into $\sB$ but all its subgraphs of size at most $2$ do, and hence $\sK_3$ has to be contained in any set of forbidden bounds for the universal homogeneous $\sK_3$-free graph.

\begin{example}
For every $k\geq 2$, the automorphism group of the universal homogeneous $k$-uniform hypergraph is $k$-neoliberal.

Let $\sC_\omega^2$ be the countably infinite equivalence relation where every equivalence class contains precisely $2$ elements. Then $\Aut(\sC_\omega^2)$ is oligomorphic, $1$-transitive, and $2$-homogeneous, but it is not $2$-neoliberal. Indeed, for any element $a$ of $\sC_\omega^2$, the stabilizer of $\Aut(\sC_\omega^2)$ by $a$ fixes also the unique element of $\sC_\omega^2$ which is in the same equivalence class as $a$.

On the other hand, the automorphism group of the countably infinite equivalence relation with equivalence classes of  fixed size $m>2$ is easily seen to be $2$-neoliberal.
\end{example}

Note that if $\gG$ is a permutation group acting on a set $A$ which is $k$-neoliberal for some $k\geq 2$ and which is not equal to the group of all permutations on $A$, then the number $k$ is uniquely determined. Indeed, $k=\min \{i\geq 1\mid \gG\text{ is not }i\text{-transitive}\}$.

\subsection{Constraint satisfaction problems and bounded width}

For $k\geq 1$, we write $[k]$ for the set $\{1,\ldots,k\}$. 
Let $k,\ell\geq 1$, let $A$ be a non-empty set, let $i_1,\ldots,i_{\ell}\in[k]$, and let $R\subseteq A^k$ be a relation. We write $\proj_{(i_1,\ldots,i_{\ell})}(R)$ for the $\ell$-ary relation $\{(a_{i_1},\ldots,a_{i_{\ell}})\mid (a_1,\ldots,a_k)\in R\}$.
For a tuple $\tuple t\in A^k$, we write $\scope(\tuple t)$ for its \emph{scope}, i.e., for the set of all entries of $\tuple t$.
We write $I_k^A$ for the relation containing all injective $k$-tuples of elements of $A$. 

Let $\sA$ be a relational structure. An \emph{instance of $\Csp(\sA)$} is a pair $\mathcal \instance=(\V,\constraints)$, where $\V$ is a finite set of variables and $\mathcal C$ is a finite set of \emph{constraints}; for every constraint $C\in \mathcal C$, there exists a non-empty set $U\subseteq \V$ called the \emph{scope} of $C$ such that $C\subseteq A^U$, and such that $C$ can be viewed as a relation of $\sA$ by totally ordering $U$; i.e., there exists an enumeration $u_1,\ldots,u_k$ of the elements of $U$ and a $k$-ary relation $R$ of $\sA$ such that for all $f\colon U\to A$, it holds that  
$f\in C$ if, and only if, $(f(u_1),\dots,f(u_k))\in R$. The relational structure $\sA$ is called the~\emph{template} of the CSP.
A~\emph{solution} of a CSP instance~$\instance = (\V,\constraints)$ is a mapping $f\colon \V\rightarrow A$ such that for every $C\in\constraints$ with scope $U$, $f|_U\in C$.

An instance $\instance=(\V,\constraints)$ of $\Csp(\sA)$ is \emph{non-trivial} if it does not contain any empty constraint; otherwise, it is \emph{trivial}.
Given a constraint $C\subseteq A^U$ with $U\subseteq \V$ and a tuple $\tuple{v}\in U^k$ for some $k\geq 1$, the \emph{projection of $C$ onto $\tuple{v}$} is defined by $\proj_{\tuple{v}}(C):=\{f(\tuple{v})\colon f\in C\}$.

We denote by $\Csp_{\injinstances}(\sA)$ the restriction of $\Csp(\sA)$ to those instances where for every constraint $C$ and for every pair of distinct variables $u,v$ in its scope, $\proj_{(u,v)}(C)\subseteq I_2^A$.

\begin{definition}\label{def:minimality}
Let $1\leq k\leq \ell$. We say that an instance $\instance=(\V,\constraints)$ of $\Csp(\sA)$ is \emph{$(k,\ell)$-minimal} if both of the following hold:
\begin{itemize}
\item the scope of every tuple of elements of $\V$ of length at most $\ell$ is contained in the scope of some constraint in $\constraints$;
\item for every $m\leq k$, for every tuple $\tuple u\in \V^m$, and for all constraints $C_1, C_2 \in \constraints$ whose scopes contain the scope of $\tuple u$, the projections of $C_1$ and $C_2$ onto $\tuple u$ coincide.
\end{itemize}
We say that an instance $\instance$ is \emph{$k$-minimal} if it is $(k,k)$-minimal.
\end{definition}

Let $1\leq k$. If $\instance=(\V,\constraints)$ is a $k$-minimal instance and $\tuple u$ is a tuple of variables of length at most $k$, then there exists a constraint in $\constraints$ whose scope contains the scope of $\tuple u$, and all the constraints who do have the same projection onto $\tuple u$.
We write $\proj_{\tuple u}(\instance)$ for this projection, and call it the \emph{projection of $\instance$ onto $\tuple u$}.

Let $1\leq k\leq \ell$, let $\sA$ be an $\omega$-categorical relational structure, and let $p$ denote the maximum of $\ell$ and the maximal arity of the relations of $\sA$. 
Clearly not every instance $\instance=(\V,\mathcal{C})$ of $\Csp(\sA)$ is $(k,\ell)$-minimal.
However, every instance $\instance$ is \emph{equivalent} to a $(k,\ell)$-minimal instance $\instance'$ of $\Csp(\sA')$ where $\sA'$ is the expansion of $\sA$ by all at most $p$-ary relations pp-definable in $\sA$ in the sense that $\instance$ and $\instance'$ have the same solution set.
In particular we have that if $\instance'$ is trivial, then $\instance$ has no solutions. Moreover, $\Csp(\sA')$ has the same complexity as $\Csp(\sA)$ and the instance $\instance'$ can be computed from $\instance$ in polynomial time (see e.g., Section~2.3 in~\cite{SymmetriesEnough} for the description of the algorithm).

\begin{definition}
Let $1\leq k\leq \ell$. 
A relational structure $\sA$ has \emph{relational width $(k,\ell)$} if every non-trivial $(k,\ell)$-minimal instance equivalent to an instance of $\Csp(\sA)$ has a solution. $\sA$ has \emph{bounded width} if it has relational width $(k,\ell)$ for some $k,\ell$.
\end{definition}

If $\sA$ has relational width $(k,\ell)$, then we will also say that $\Csp(\sA)$ has relational width $(k,\ell)$. We say that $\Csp_{\injinstances}(\sA)$ has relational width $(k,\ell)$ if every non-trivial $(k,\ell)$-minimal instance equivalent to an instance of $\Csp_{\injinstances}(\sA)$ has a solution.


\subsection{Polymorphisms}

Let $A$ be a set, let $k,n\geq 1$, and let $R\subseteq A^k$. A function $f\colon A^n\rightarrow A$ \emph{preserves} the relation $R$ if for all tuples $(a^1_1,\dots,a^1_k),\dots,(a^n_1,\dots,a^n_k)$
$\in R$, it holds that the tuple $(f(a^1_1,\dots,a^n_1),\dots,f(a^1_k,\dots,a^n_k))$ is contained in $R$ as well. 
The function is a \emph{polymorphism} of a relational structure $\sA$ if it preserves all relations of $\sA$. The set of all polymorphisms of $\sA$ is denoted by $\Pol(\sA)$. The importance of polymorphisms is based on the fact that for $\omega$-categorical $\sA$, the pp-definable relations are precisely those preserved by all polymorphisms of $\sA$~\cite{BN}.




The proof of the following useful observation and 
all other omitted proofs may be found in the appendix.

\begin{observation}
\label{obs:idemJonsson}
Let $\sA$ be an $\omega$-categorical model-complete core preserved by a chain of quasi J\'{o}nsson operations $(J_1, \ldots, J_{2n+1})$. Then for every finite subset $B \subseteq A$ there exists a chain of quasi J\'{o}nsson operations $(J'_1, \ldots, J'_{2n+1})$ such that each $J'_i$ with $i \in [2n+1]$ is idempotent on $B$, i.e. $J'_{i}(b, \ldots, b) = b$ for all $b \in B$.
\end{observation}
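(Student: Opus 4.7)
The plan is to replace each $J_i$ by its composition with a single, carefully chosen automorphism $\alpha^{-1}$ of $\sA$. The strategy works because the chain identities force the operations $J_i$ to agree on the diagonal, so one automorphism can simultaneously ``undo'' all of them on $B$.

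First I would observe that, despite appearances, the diagonal values $J_i(b,b,b)$ do not depend on $i$. Substituting $x=y=b$ into the identity $J_{2i-1}(x,y,y)=J_{2i}(x,y,y)$ gives $J_{2i-1}(b,b,b)=J_{2i}(b,b,b)$, and substituting $x=y=b$ into $J_{2i}(x,x,y)=J_{2i+1}(x,x,y)$ gives $J_{2i}(b,b,b)=J_{2i+1}(b,b,b)$. Chaining these equalities yields $J_1(b,b,b)=\cdots=J_{2n+1}(b,b,b)$ for every $b \in A$; call this common value $e(b)$. The map $e$ is a unary polymorphism of $\sA$, hence an endomorphism. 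Since $\sA$ is an $\omega$-categorical model-complete core, every endomorphism agrees on any finite set with some automorphism of $\sA$; in particular, there exists $\alpha \in \Aut(\sA)$ such that $\alpha(b)=e(b)=J_i(b,b,b)$ for every $b \in B$ and every $i \in [2n+1]$.

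Now define $J'_i := \alpha^{-1} \circ J_i$ for each $i \in [2n+1]$. Each $J'_i$ is a polymorphism of $\sA$ as a composition of polymorphisms, and for every $b \in B$ we have $J'_i(b,b,b)=\alpha^{-1}(\alpha(b))=b$, so $J'_i$ is idempotent on $B$. To check that $(J'_1,\ldots,J'_{2n+1})$ is still a chain of quasi J\'onsson operations, one applies $\alpha^{-1}$ to both sides of each defining identity of $(J_1,\ldots,J_{2n+1})$: the single-operation identities $J_1(x,x,y)=J_1(x,x,x)$, $J_i(x,y,x)=J_i(x,x,x)$, and $J_{2n+1}(x,y,y)=J_{2n+1}(y,y,y)$ transfer in an obvious way, while the linking identities $J_{2i-1}(x,y,y)=J_{2i}(x,y,y)$ and $J_{2i}(x,x,y)=J_{2i+1}(x,x,y)$ survive precisely because the same $\alpha^{-1}$ is applied on both sides.

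The main subtlety, which motivates the whole approach, is that the naive attempt of picking a separate automorphism $\alpha_i$ for each $i$ with $\alpha_i(b)=J_i(b,b,b)$ on $B$ would in general destroy the linking identities, since there is no reason for $\alpha_{2i-1}$ and $\alpha_{2i}$ to coincide on the image of $J_{2i-1}(x,y,y)$. The collapse of the diagonal values $J_i(b,b,b)$ to a single function $e$ is therefore the crux: it is what allows one automorphism to serve all indices at once and keep the chain structure intact.
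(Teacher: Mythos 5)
Your proof is correct and follows essentially the same strategy as the paper's: use the model-complete core property to find an automorphism agreeing with the diagonal map $J_i(x,x,x)$ on $B$, then compose. However, your version is argued more carefully in one respect worth noting. The paper's proof as written picks a \emph{separate} automorphism $\alpha_i$ for each $i$ and then simply asserts that $(\alpha_1^{-1}\circ J_1,\ldots,\alpha_{2n+1}^{-1}\circ J_{2n+1})$ is still a chain of quasi J\'onsson operations. But, as you correctly point out, the linking identities (\ref{eq:J2i-12i}) and (\ref{eq:J2i2i+1}) do not survive composition with unrelated $\alpha_{2i-1}^{-1}$ and $\alpha_{2i}^{-1}$: one would need them to agree on the (generally infinite) range of $(x,y)\mapsto J_{2i-1}(x,y,y)$, and automorphisms agreeing on the finite set $B$ need not agree elsewhere. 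Your observation that the diagonal maps $J_1(x,x,x)=\cdots=J_{2n+1}(x,x,x)$ coincide globally --- a direct consequence of the chain identities --- is precisely what justifies choosing a \emph{single} $\alpha$ for all indices, and is precisely what makes the linking identities survive the composition. So you have filled in the step that the paper's one-line proof glosses over, and done so with the right idea.
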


\noindent
The set of automorphisms of $\sA$ is denoted by $\Aut(\sA)$.
\section{Proof of the main result}

\subsection{Implicationally simple structures}

We introduce the notion of an implication and several related concepts  that will play a key role in the proof of~\Cref{thm:implsimple}. 

\begin{definition}\label{def:implication}
Let $\sA$ be a relational structure. Let $V$ be a set of variables, let $\tuple u,\tuple v$ be injective tuples of variables in $V$ of length $k<|V|$ and $m<|V|$, respectively, such that $\scope(\tuple u)\cup \scope(\tuple v)=V$. Let $C\subseteq A^k$ and $D\subseteq A^m$ be pp-definable from $\sA$ and non-empty. We say that a pp-formula $\phi$ over the signature of $\sA$ with free variables from $V$ is a \emph{$(C,\tuple u,D,\tuple v)$-implication in $\sA$} if all of the following hold:
\begin{enumerate}
    \item for  all \chmichal{$x \neq y$} in $V$, $\proj_{(x,y)}(\phi^{\sA})\not\subseteq\{(a,a)\mid a\in A\}$,
    \item $C\subsetneq\proj_{\tuple u}(\phi^{\sA})$,
    \item $D\subsetneq\proj_{\tuple v}(\phi^{\sA})$,
    \item for \chmichal{all} $f\in \phi^{\sA}$, it holds that $f(\tuple u)\in C$ implies $f(\tuple v)\in D$,
    \item for every $\tuple a\in D$, there exists $f\in\phi^{\sA}$ such that $f(\tuple u)\in C$ and $f(\tuple v)=\tuple a$.
\end{enumerate}

We say that $\phi$ is a \emph{$(C,\tuple u,D,\tuple v)$-pre-implication} if it satisfies items (2)-(5).
We will call $\phi$ a \emph{$(C,D)$-implication} if it is a $(C,\tuple u,D,\tuple v)$-implication for some $\tuple u\in I^V_k, \tuple v\in I^V_m$. We say that an implication $\phi$ is \emph{injective} if $\phi^{\sA}$ contains only injective mappings.

Let $\gG$ be a permutation group acting on $A$, and let $f\in\phi^{\sA}$. If $O,P$ are orbits under $\fG$ such that $f(\tuple u)\in O$, $f(\tuple v)\in P$, then we say that $f$ is an \emph{$OP$-mapping}.
\end{definition}

\begin{example}
Let $\sA$ be a relational structure, let $k\geq 1$, and let $O$ be an orbit of $k$-tuples under $\Aut(\sA)$. Suppose that $\sA$ pp-defines the equivalence of orbits of $k$-tuples under $\Aut(\sA)$. Then the formula defining this equivalence is an $(O,O)$-pre-implication in $\sA$. If $\sA$ is such that $\Aut(\sA)$ does not have any fixed point in its action on $A$, this pre-implication is an implication. For all orbits $P,Q$ of $k$-tuples under $\Aut(\sA)$, $\phi^{\sA}$ contains an $PQ$-mapping if, and only if, $P=Q$.
\end{example}

\begin{definition}
Let $\sA$ be a relational structure, and let $k\geq 1$.

The \emph{$k$-ary implication graph of $\sA$}, to be denoted by $\mathcal{G}_{\sA}$, is a directed graph defined as follows.

\begin{itemize}
    \item The set of vertices is the set of pairs $(C_1,C)$ where $\emptyset\neq C\subsetneq C_1\subseteq A^k$ and $C,C_1$ are pp-definable from $\sA$.
    \item There is an arc from $(C_1,C)$ to $(D_1,D)$ if there exists a $(C,\tuple u,D,\tuple v)$-implication $\phi$ in $\sA$ such that $\proj_{\tuple u}(\phi^{\sA})=C_1$, $\proj_{\tuple v}(\phi^{\sA})=D_1$.
\end{itemize}

The \emph{$k$-ary injective implication graph of $\sA$}, denoted by $\mathcal{G}_{\sA}^{\injinstances}$, is the (non-induced) subgraph of $\mathcal{G}_{\sA}$ 
that contains precisely the vertices $(C_1,C)$ where $C_1$ is injective and which contains an arc from $(C_1,C)$ to $(D_1,D)$ if $(C_1,C)\neq (D_1,D)$ and if there exists an injective $(C,\tuple u,D,\tuple v)$-implication $\phi$ in $\sA$ with $\proj_{\tuple u}(\phi^{\sA})=C_1$, $\proj_{\tuple v}(\phi^{\sA})=D_1$.

We say that $\sA$ is \emph{implicationally simple (on injective instances)} if the (injective) implication graph $\mathcal{G}_{\sA}$ ($\mathcal{G}_{\sA}^{\injinstances}$) is acyclic. Otherwise, $\sA$ is \emph{implicationally hard (on injective instances)}.
\end{definition}

Note that by item (1) in~\Cref{def:implication}, the implication graph does not necessarily contain all loops -- e.g., the formula over variables $\{x_1,\ldots,x_{2k}\}$ defined by $\bigwedge\limits_{i\in[k]}(x_i=x_{i+k})$ is not an implication.

The following is essentially subsumed by Lemma 3.3 in~\cite{SymmetriesEnough} but we provide the reformulation to our setting for the convenience of the reader.

\begin{lemma}\label{lemma:core-bwidth}
Let $k\geq 2$, let  $\sB$ be $k$-neoliberal, and suppose that $\sB$ is finitely bounded. Let $\instance=(\V,\constraints)$ be a non-trivial, $(k,\max(k+1,b_{\sB}))$-minimal instance of $\Csp(\sB)$ such that for every $\tuple v\in\V^k$, $\proj_{\tuple v}(\instance)$ contains precisely one orbit under $\fG$. Then $\instance$ has a solution.
\end{lemma}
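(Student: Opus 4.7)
The plan is to encode the local orbital data of $\instance$ into a single $\tau$-structure $\sS$ on $\V$, to verify via the $(k, \max(k+1, b_\sB))$-minimality of $\instance$ that every substructure of $\sS$ of size at most $b_\sB$ embeds into $\sB$, and then to use the finite boundedness of $\sB$ to produce a global embedding $\sS \to \sB$ which will yield a solution of $\instance$.

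First I would construct $\sS$. For every $\tuple v \in \V^k$ the projection $\proj_{\tuple v}(\instance)$ is well-defined by $k$-minimality, and by assumption it coincides with a single $\fG$-orbit $O_{\tuple v}$; since $\sB$ is the canonical $k$-ary structure of $\fG$, there is a unique relation symbol $R_{\tuple v}$ of $\sB$ with $R_{\tuple v}^{\sB} = O_{\tuple v}$, and I declare $\tuple v \in R^{\sS}$ iff $R = R_{\tuple v}$. Next I verify that every induced substructure $\sS|_U$ with $|U| \leq b_\sB$ embeds into $\sB$: the first clause of $(k, \max(k+1, b_\sB))$-minimality supplies a constraint $C \in \constraints$ whose scope contains $U$, and any satisfying assignment $g$ of $C$ (existing by non-triviality) restricts to a map $f := g|_U \colon U \to A$ with $f(\tuple v) \in \proj_{\tuple v}(C) = \proj_{\tuple v}(\instance) = O_{\tuple v}$ for every $\tuple v \in U^k$, where the equality $\proj_{\tuple v}(C) = \proj_{\tuple v}(\instance)$ comes from $k$-minimality. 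Since orbits of $k$-tuples partition $A^k$, this says exactly that $f$ preserves the relations of $\sS|_U$ in both directions, i.e.~$\sS|_U$ embeds into $\sB$. No forbidden bound from $\mathcal{F}_\sB$ can therefore embed into the finite structure $\sS$ (any such embedding would compose with the above to land in $\sB$), so finite boundedness yields an embedding $h \colon \sS \to \sB$.

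It remains to verify that $h$ is a solution of $\instance$. Fix any $C \in \constraints$ of arity $\ell'$ with scope $\tuple u$; as a pp-definable relation of $\sB$, $C$ is $\fG$-invariant and is thus a union of $\fG$-orbits of $\ell'$-tuples. By $k$-minimality and the single-orbit hypothesis every $k$-subprojection of $C$ is a single orbit, so by $k$-homogeneity of $\fG$ the relation $C$ is itself a single $\fG$-orbit. Since $h$ preserves all relations of $\sS$, the $k$-subprojections of $h(\tuple u)$ coincide with those of any element of $C$, and $k$-homogeneity once more forces $h(\tuple u) \in C$.

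The main technical obstacle I foresee is the careful treatment of forced equalities between distinct variables: if some orbit $O_{\tuple v}$ identifies some of its coordinates, then the maps $f$ above (and hence $h$) need not be injective, so ``embedding'' has to be interpreted appropriately. This is cleanly handled by first quotienting $\V$ by the $\fG$-invariant equivalence relation encoded in the $O_{\tuple v}$'s, whose well-definedness and consistency follow from $(k, \max(k+1, b_\sB))$-minimality together with the single-orbit hypothesis, and then running the whole argument on the quotient before lifting the resulting assignment back to $\V$ via the quotient map.
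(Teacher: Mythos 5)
Your proposal is correct and follows essentially the same approach as the paper: encode the orbital data as a $\tau$-structure, use minimality to rule out embedded bounds, invoke finite boundedness to obtain a global embedding, and read off a solution via $k$-homogeneity. The only structural difference is cosmetic: the paper quotients $\V$ by the equality-forcing equivalence relation up front and then builds the $\tau$-structure on the quotient, whereas you first run the argument on $\V$ itself and then acknowledge in your final paragraph that the quotient is needed to handle non-injectivity; your proposed fix is precisely what the paper does, and the well-definedness of the quotient is where the $\max(k+1,b_{\sB})$-minimality (rather than mere $k$-minimality) gets used, so it would be worth making that dependence explicit when filling in that step.
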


\begin{proposition}\label{prop:implsimple}
Let $k\geq 2$, let $\sB$ be $k$-neoliberal, and suppose that $\sB$ has finite duality. Let $\sA$ be a first-order expansion of $\sB$ which is implicationally simple on injective instances. 
Then $\sA$ has relational width $(k,\max(k+1,b_{\sB}))$.
\end{proposition}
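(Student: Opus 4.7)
The plan is to take a non-trivial $(k,\max(k+1,b_\sB))$-minimal instance $\instance=(\V,\constraints)$ of $\Csp(\sA)$ and produce a solution by gradually refining its constraints until every projection onto an injective $k$-tuple of variables becomes a single orbit under $\Aut(\sB)$. Once this is achieved, the instance can be viewed as an instance of $\Csp(\sB)$ (since $k$-neoliberality recovers all higher-arity first-order definable relations from their $k$-ary projections, and the $\sA$-constraints are Boolean combinations of $\sB$-relations), and \Cref{lemma:core-bwidth} then provides a solution. As a preprocessing step, I would merge variables forced by $\instance$ to take equal values: since $\sB$ is $k$-neoliberal and $k\geq 2$, equality is first-order definable over $\sB$, and such forced equalities are already witnessed by the $2$-ary projections by $k$-minimality. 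After this, the only projections that may fail to be a single orbit are those onto injective $k$-tuples of distinct variables, because $(k-1)$-transitivity of $\Aut(\sB)$ handles all smaller injective tuples.

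The main step is an iterative refinement driven by the acyclicity of $\mathcal{G}_{\sA}^{\inj}$. So long as there exists an injective $k$-tuple $\tuple u$ of variables whose projection $C_1:=\proj_{\tuple u}(\instance)$ contains more than one orbit, pick an orbit $C\subsetneq C_1$ and consider the pair $(C_1,C)$ as a vertex of $\mathcal{G}_{\sA}^{\inj}$. Since $\sA$ is $\omega$-categorical (as $\Aut(\sB)$ is oligomorphic, being $k$-neoliberal), $\mathcal{G}_{\sA}^{\inj}$ is finite; by implicational simplicity it is acyclic, so by following outgoing arcs from $(C_1,C)$ for finitely many steps I reach a \emph{sink} $(C_1',C')$. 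I would then restrict the projection of $\instance$ onto the relevant injective tuple to $C'$ and re-establish $(k,\max(k+1,b_\sB))$-minimality using the standard algorithm. The sink property --- absence of outgoing arcs --- means that no implication can trigger a further forced restriction of another projection to a proper subset; in particular, no projection becomes empty and the refined instance stays non-trivial.

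Since each such refinement strictly decreases the finite total number of orbits appearing in the projections of $\instance$ onto injective $k$-tuples, the procedure terminates. At termination every projection onto an injective $k$-tuple is a single orbit under $\Aut(\sB)$, so (after identifying equal variables) $\instance$ is a non-trivial $(k,\max(k+1,b_\sB))$-minimal instance of $\Csp(\sB)$, and \Cref{lemma:core-bwidth} hands us a solution. The main obstacle I anticipate is the rigorous verification that restricting at a sink vertex and re-minimizing does not empty any projection: the sink property directly rules out restrictions witnessed by a $(C,\tuple u, D,\tuple v)$-implication, but one must argue that every forced restriction arising during re-minimization actually corresponds to such an implication in the sense of \Cref{def:implication} (satisfying in particular conditions (1) and (5)). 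This is where finite duality of $\sB$, rather than merely finite boundedness, is used to translate local consistency into global homomorphic realizability and thereby guarantee that the witnessing pp-formula $\phi$ indeed has an interpretation $\phi^\sA$ realizing every tuple in $D$, closing the argument.
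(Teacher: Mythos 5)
Your overall strategy coincides with the paper's: iteratively shrink the projections of a $(k,\ell)$-minimal instance onto injective $k$-tuples of variables, using the acyclicity of $\mathcal{G}_{\sA}^{\injinstances}$ to choose safe restrictions, until every such projection is a single orbit, and then invoke \Cref{lemma:core-bwidth}. However, there are two genuine gaps. First, merging variables that are forced to be equal does not produce an instance of $\Csp_{\injinstances}(\sA)$: after merging, a pair of distinct variables $(u,v)$ may still have $\proj_{(u,v)}(\instance)$ containing the diagonal alongside some injective orbits, so $\proj_{\tuple v}(\instance)$ for a $k$-tuple $\tuple v$ of distinct variables need not lie inside $I^A_k$, in which case $(\proj_{\tuple v}(\instance),C)$ is simply not a vertex of $\mathcal{G}_{\sA}^{\injinstances}$ and your refinement step has nothing to act on. The paper's reduction to injective instances (\Cref{prop:libcores-purelyinj}, \Cref{cor:libcores-purelyinj}) additionally intersects the constraints with the injectivity relation and needs a \emph{binary injective polymorphism} to show the resulting $(k,\ell)$-minimal instance remains non-trivial. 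It is there, via \Cref{lemma:bininj2}, that finite duality is really used, not --- as you conjecture --- in verifying items (1) and (5) of \Cref{def:implication}: in the paper's argument those items come directly from $k$-minimality, the definition of projection, and the fact that the instance is already injective.

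Second, your sink-finding step is not well-defined as stated. Following outgoing arcs in the abstract graph $\mathcal{G}_{\sA}^{\injinstances}$ from $(C_1,C)$ lands at a sink $(C_1',C')$ that need not be of the form $(\proj_{\tuple v}(\instance),F)$ for any tuple $\tuple v$ of variables of $\instance$: the arcs correspond to arbitrary pp-formulas over $\sA$, not to subformulas of the instance's constraints, so there is no ``relevant injective tuple'' onto which to restrict. The paper instead passes to the subgraph $\mathcal{G}_i$ of $\mathcal{G}_{\sA}^{\injinstances}$ on vertices whose first component is a current projection $\proj_{\tuple v}(\instance_{i-1})$, takes a sink \emph{there}, and shows in \Cref{claim:projs} that any further shrinkage of some $\proj_{\tuple v}(\instance_{i-1})$ after restriction and re-minimization would produce an outgoing arc from that sink inside $\mathcal{G}_i$, a contradiction. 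Your intuition that ``no implication can trigger a further forced restriction'' is exactly the right claim, but it is only licensed for a sink of $\mathcal{G}_i$, not of the full graph $\mathcal{G}_{\sA}^{\injinstances}$.
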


\subsection{Binary injections and quasi J\'{o}nsson operations}

Here, we restate some results about binary operations from~\cite{SmoothApproximations,MarimonPinskerMinimalOps} and prove a new result that will enable us to use Lemma 14 from~\cite{hypergraphs} in order to prove~\Cref{cor:libcores-purelyinj}.



\begin{lemma}
\label{lemma:binessen}[Corollary 3.7 in~\cite{MarimonPinskerMinimalOps}]
Let $\sB$ be an $\omega$-categorical countable model-complete core such that $\Aut(\sB)$ has $\leq 2$
orbits. Then, if $\Pol(\sB)$ has an essential polymorphism, it also has a binary essential polymorphism.
\end{lemma}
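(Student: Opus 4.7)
The plan is to prove the statement by taking an essential polymorphism of minimal arity and showing that this minimal arity must be $2$, using the structural restriction that $\Aut(\sB)$ has at most two orbits on $\sB$. So suppose for contradiction that $f\in\Pol(\sB)$ is an essential polymorphism of minimal arity $n\geq 3$. For every pair of indices $i<j$ in $[n]$, I would consider the $(n-1)$-ary polymorphism $f_{ij}$ obtained from $f$ by identifying the $i$-th and $j$-th variables. By the minimality of $n$, each $f_{ij}$ must be \emph{essentially unary}: since $\sB$ is a model-complete countable $\omega$-categorical core, this means that $f_{ij}$ agrees, on every finite subset of its domain, with a composition of an automorphism of $\sB$ with a projection. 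Consequently, each $f_{ij}$ is associated (locally) with some distinguished coordinate $k_{ij}\in[n-1]$.

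Next, I would use the hypothesis that $\Aut(\sB)$ has at most two orbits on $B$, which means that the orbit of any tuple is determined by the pattern indicating, coordinate by coordinate, to which of the (at most two) orbits of elements each entry belongs. In particular, for any $a,b\in B$ in distinct orbits (or merely distinct if $\Aut(\sB)$ is transitive), one can evaluate $f$ on the $n$ tuples of the form $(b,\dots,b,a,b,\dots,b)$ in which $a$ occupies position $i$, for $i\in[n]$, and read off from the orbits of the resulting values precisely the set of coordinates on which $f$ genuinely depends. Combining this information with the description of every $f_{ij}$ as essentially unary would then yield a rigid combinatorial constraint on the tuple of coordinates $(k_{ij})_{i<j}$; a case analysis should show that any consistent choice of these coordinates forces $f$ itself to be essentially unary, contradicting its essentiality. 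This contradiction yields $n=2$, as desired.

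The main obstacle will be the combinatorial bookkeeping required in the case analysis, particularly in the situation where $\Aut(\sB)$ has \emph{exactly} two orbits and $f$ sends mixed tuples (i.e.\ tuples whose entries straddle both orbits) into either of the two orbits in a non-uniform way; here the local pictures provided by the $f_{ij}$ may superficially disagree and need to be reconciled via further identifications and compositions with automorphisms. The assumption that $\sB$ is a model-complete core is used crucially to pass from the abstract notion of essentially unary to the concrete form $\alpha\circ \pi_k$ for $\alpha\in\Aut(\sB)$, while the at-most-two-orbits hypothesis is what keeps the number of orbit patterns on arbitrary tuples finite and the case analysis tractable.
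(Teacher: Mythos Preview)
The paper does not prove this lemma; it is imported directly as Corollary~3.7 of~\cite{MarimonPinskerMinimalOps}, so there is no in-paper argument to compare your proposal against.

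Your outline follows a standard opening move---take an essential polymorphism of minimal arity $n\geq 3$ and note that every two-variable identification $f_{ij}$ is essentially unary, hence locally of the form $\alpha\circ\pi_{k_{ij}}$ in a model-complete core---but it stops short of the actual content. Two concrete issues. First, the device you propose for exploiting the $\leq 2$-orbits hypothesis does not do what you claim: recording the orbit of $f(b,\dots,b,a,b,\dots,b)$ for each position of $a$ yields one bit per coordinate, which cannot in general ``read off precisely the set of coordinates on which $f$ depends'' (and carries no information at all in the transitive case). Second, and more importantly, the promised case analysis on the family $(k_{ij})_{i<j}$ is the heart of the matter and you do not carry it out; asserting that ``any consistent choice forces $f$ to be essentially unary'' is not a proof. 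This is exactly where the $\leq 2$-orbits assumption has to be used non-trivially, and the argument in~\cite{MarimonPinskerMinimalOps} does real work at this point. As written, your proposal identifies the right objects but leaves the essential step undone.
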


\begin{lemma}[Proposition 24 in~\cite{SmoothApproximations}]\label{lemma:bininj2}
Let $\sA$ be a first-order reduct of an $\omega$-categorical structure $\sB$ such that $\Aut(\sB)$ is $1$-transitive and such that its canonical binary structure has finite duality. If $\Pol(\sA)$ contains a binary essential function preserving $I_2^B$, then it contains a binary injection.
\end{lemma}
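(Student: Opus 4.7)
The plan is to canonize the given binary essential polymorphism $f$ with respect to $\Aut(\sB)$ and then extract a binary injection by analyzing the resulting finite table of possible canonical behaviors. I would start by forming, via a standard interpolation argument, a sequence of polymorphisms $\alpha_n \circ f \circ (\beta_n \times \gamma_n)$ with $\alpha_n,\beta_n,\gamma_n \in \Aut(\sB)$ chosen so that the behavior becomes canonical on a chain of finite substructures exhausting $B$. A compactness argument in the pointwise topology, using $\omega$-categoricity of $\sB$, yields a limit polymorphism $f^\star\in\Pol(\sA)$ that is canonical: the $\Aut(\sB)$-orbit of $(f^\star(a_1,b_1),\dots,f^\star(a_k,b_k))$ depends only on the joint $\Aut(\sB)$-orbit of $(a_1,\dots,a_k,b_1,\dots,b_k)$. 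The finite duality of the canonical binary structure is what keeps the limit inside $\Pol(\sA)$: it turns preservation of each relation of $\sA$ into avoidance of finitely many forbidden homomorphic images, a condition stable under pointwise limits. Essentiality and preservation of $I_2^B$ are local properties, witnessed on finite tuples, and therefore transfer to $f^\star$.

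Next, I would exploit $1$-transitivity. The orbits of pairs under $\Aut(\sB)$ consist of the diagonal together with the orbitals inside $I_2^B$, so the canonical behavior of $f^\star$ is captured by a finite table on pairs of orbitals. Preservation of $I_2^B$ forces that on any pair of non-diagonal orbitals, $f^\star$ lands in a non-diagonal orbital. The key step is to show that $f^\star$ is actually injective: if $f^\star(a,b) = f^\star(a',b')$ for some distinct pairs, then $I_2^B$-preservation forces either $a=a'$ with $b\neq b'$ or the symmetric case, and canonicity propagates such a collapse uniformly across orbital configurations of the same type. With $\Aut(\sB)$ being $1$-transitive this propagation covers enough of $B^2$ to force $f^\star$ to be essentially unary, contradicting essentiality. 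In those canonical behaviors where the propagation does not cover all configurations directly, I would iterate: form $g(x,y) := f^\star(f^\star(x,y), f^\star(y,x))$ or a similar composition and rerun the analysis, using $I_2^B$-preservation at each application to control the outputs.

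The main obstacle will be the canonization step. Running the interpolation in the absence of a Ramsey hypothesis for $\sB$ requires careful use of finite duality of the canonical binary structure to guarantee that the limit polymorphism lies in $\Pol(\sA)$ and not merely in $\Pol(\sB)$. Once a canonical $f^\star$ is in hand, the combinatorial analysis needed to extract the injection is tight but essentially finite, being governed by the finite table of orbital transitions made available by $1$-transitivity together with $I_2^B$-preservation.
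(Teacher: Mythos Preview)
The paper does not prove this lemma; it is quoted verbatim as Proposition~24 of~\cite{SmoothApproximations} and used as a black box. So there is no ``paper's own proof'' to compare against here, and your proposal must be judged on its own.

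There is a genuine gap in your canonization step. You propose to obtain a canonical polymorphism $f^\star$ by taking a pointwise limit of functions of the form $\alpha_n\circ f\circ(\beta_n\times\gamma_n)$, and you say that finite duality of the canonical binary structure ``is what keeps the limit inside $\Pol(\sA)$''. This conflates two separate issues. First, $\Pol(\sA)$ is always closed in the topology of pointwise convergence for any relational structure $\sA$, so limits of polymorphisms are polymorphisms regardless of duality; nothing is gained there. Second, and more seriously, the difficulty is not closure but \emph{canonicity}: producing a single subsequence along which the behavior stabilizes on \emph{all} finite configurations simultaneously. The standard tool for this is a Ramsey property of (an expansion of) $\sB$, and finite duality is not a substitute for it. Finite duality controls which finite $\tau$-structures admit homomorphisms to $\sB$; it says nothing about colorings of tuples by the action of $f$, and gives you no pigeonhole mechanism to force convergence of canonical behavior. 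Without Ramsey or an equivalent hypothesis, the interpolation you describe does not go through.

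Even granting a canonical $f^\star$, your injectivity argument is too loose. Preservation of $I_2^B$ only gives $f^\star(a,b)\neq f^\star(c,d)$ when $a\neq c$ \emph{and} $b\neq d$; it allows collapses along rows or columns. Your claim that such a collapse ``propagates uniformly'' to force essential unarity needs a concrete case analysis on orbitals, and the fallback of iterating with $g(x,y)=f^\star(f^\star(x,y),f^\star(y,x))$ does not obviously terminate or improve matters without that analysis. The actual argument in~\cite{SmoothApproximations} uses finite duality in a different way---to realize certain configurations and thereby build the injection directly---rather than as a surrogate for Ramsey-type canonization.
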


\begin{proposition}\label{prop:libcores-purelyinj}
Let $\ell\geq k\geq 2$, let $\sA$ be an $\omega$-categorical relational structure, and suppose that $I_2^A$ is pp-definable from $\sA$ and that $\Pol(\sA)$ contains a binary injection. Then $\sA$ has relational width $(k,\ell)$ if, and only if, $\Csp_{\injinstances}(\sA)$ has relational width $(k,\ell)$.
\end{proposition}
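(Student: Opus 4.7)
The forward direction is immediate since every non-trivial $(k,\ell)$-minimal instance equivalent to an instance of $\Csp_{\injinstances}(\sA)$ is in particular equivalent to an instance of $\Csp(\sA)$. For the converse, I will assume $\Csp_{\injinstances}(\sA)$ has relational width $(k,\ell)$ and aim to show that every non-trivial $(k,\ell)$-minimal instance $\instance=(\V,\constraints)$ equivalent to an instance of $\Csp(\sA)$ admits a solution. The overall plan is to transform $\instance$ into a non-trivial $(k,\ell)$-minimal injective instance and then invoke the hypothesis. First I would identify the variables of $\instance$ that are forced to be equal by defining $u\sim v$ iff $\proj_{(u,v)}(\instance)\subseteq \{(a,a)\mid a\in A\}$. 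Since $\instance$ is $(k,\ell)$-minimal and $k\geq 2$, this is an equivalence relation, and the quotient instance $\instance_0$ is again a non-trivial $(k,\ell)$-minimal instance equivalent to $\instance$ in which every pair of distinct variables $u\neq v$ satisfies that $\proj_{(u,v)}(\instance_0)$ contains at least one injective pair.

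Next, I would form the instance $\instance^\star$ by replacing each constraint $C$ of $\instance_0$ having scope $U$ by $C\cap I_{|U|}^A$; this is valid because $I_{|U|}^A$ is pp-definable from $I_2^A$. To argue that $\instance^\star$ is non-trivial I would use the binary injection $j\in\Pol(\sA)$: by $(k,\ell)$-minimality of $\instance_0$ together with the property established above, for every pair of distinct variables $u,v\in U$ there is a tuple $\tuple a_{uv}\in C$ with $\tuple a_{uv}(u)\neq \tuple a_{uv}(v)$. Iteratively combining these tuples via $j$ produces a tuple in $C$ injective on every pair in $U$ simultaneously; the underlying principle is that whenever at least one input tuple separates two coordinates, so does the output of $j$ on these coordinates. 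Applying $(k,\ell)$-minimality to $\instance^\star$ then yields an equivalent $(k,\ell)$-minimal instance $\instance^{\star\star}$, all of whose projections on pairs are contained in $I_2^A$, and which is hence equivalent to an instance of $\Csp_{\injinstances}(\sA)$.

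The main obstacle is establishing that $\instance^{\star\star}$ is non-trivial. The key observation is that every constraint produced during the $(k,\ell)$-minimality process remains pp-definable in $\sA$, and therefore closed under $j$. The plan is to show inductively that throughout the refinement, for every pair of current constraints $\hat C_1,\hat C_2$ with scopes $U_1,U_2$ and every $k$-tuple $\tuple v$ with $\scope(\tuple v)\subseteq U_1\cap U_2$, the projections $\proj_{\tuple v}(\hat C_1)$ and $\proj_{\tuple v}(\hat C_2)$ intersect non-trivially. The base case uses $(k,\ell)$-minimality of $\instance_0$ together with $j$: given representatives $\tuple c_i\in C_i^\star$ with $\tuple c_i|_{\tuple v}=\tuple a_i$, a tuple $\tuple b_2\in C_2$ with $\tuple b_2|_{\tuple v}=\tuple a_1$ exists by $(k,\ell)$-minimality of $\instance_0$, and then $j(\tuple b_2,\tuple c_2)$ lies in $C_2^\star$ and projects to $j(\tuple a_1,\tuple a_2)$; the symmetric construction on the other side yields the common element $j(\tuple a_1,\tuple a_2)$. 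The inductive step, which I expect to be the technical heart of the argument, requires a careful iteration of this combination idea while tracking how refinements affect projections. Once non-triviality of $\instance^{\star\star}$ is established, the hypothesis supplies a solution of $\instance^{\star\star}$, which pulls back to a solution of $\instance_0$ and hence of $\instance$.
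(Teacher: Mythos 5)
Your overall strategy mirrors the paper's proof exactly: quotient out variables forced to be equal, intersect constraints with injectivity to obtain an instance of $\Csp_{\injinstances}(\sA)$, re-establish $(k,\ell)$-minimality, and then invoke the hypothesis. Your argument for non-triviality of $\instance^\star$ itself (combining pair-separating tuples via the binary injection $j$, using that the output of $j$ separates a pair whenever one input does) is correct and matches what is needed. However, the step you yourself flag as ``the technical heart'' --- that $\instance^{\star\star}$, the $(k,\ell)$-minimalization of $\instance^\star$, remains non-trivial --- is left as an unproven sketch, and the proposed invariant (pairwise non-empty intersection of projections) would need a careful argument that it survives each tightening step of the minimalization algorithm; as stated it is far from obviously preserved, since tightening one constraint against a third can shrink its projection on $\tuple v$ in a way your invariant does not control without further use of $j$. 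The paper does not carry out this induction; it cites Lemma~18 of~\cite{hypergraphs}, which delivers precisely the claim that the $(k,\ell)$-minimal instance equivalent to $\instance'$ is non-trivial in the presence of a binary injection. So your approach is the same as the paper's, but you have a genuine gap exactly where the paper uses that external lemma; to complete your version you would either need to finish the inductive argument (essentially reproving that lemma) or cite it.
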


\Cref{lemma:binessen,lemma:bininj2,prop:libcores-purelyinj} immediately yield the following corollary, which will enable us to reduce $\Csp(\sA)$ for any structure $\sA$ in the scope of~\Cref{thm:implsimple} to $\Csp_{\injinstances}(\sA)$. The reason for $k\geq 3$ is that $I_2^B$ is pp-definable from $\sB$ by the $2$-transitivity  of $\Aut(\sB)$. 
On the other hand, for $k=2$, the structures  considered in~\Cref{thm:implsimple} are only $1$-transitive and do not automatically 
pp-define $I_2^B$. We will, however, show in the following subsections that they do pp-define $I_2^B$ under the assumption that they are preserved by a chain of quasi J\'{o}nsson operations, see~\Cref{lemma:diseqJonsson}.

\begin{corollary}\label{cor:libcores-purelyinj}
Let $k\geq 2$, let 
$\sB$ be $k$-neoliberal
,
of finite duality and such that it pp-defines $I^B_2$. Let $\sA$ be a first-order expansion of $\sB$, and suppose that $\sA$ is preserved by a chain of quasi J\'{o}nsson operations. 
Then $\sA$ has relational width $(k,\max(k+1,b_{\sB}))$ if, and only if, $\Csp_{\injinstances}(\sA)$ has relational width $(k,\max(k+1,b_{\sB}))$.
\end{corollary}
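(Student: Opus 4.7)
The forward direction is immediate, since any instance of $\Csp_{\injinstances}(\sA)$ is also an instance of $\Csp(\sA)$, and both $(k,\ell)$-minimality and equivalence are preserved. For the converse, my plan is to verify the three hypotheses of \Cref{prop:libcores-purelyinj}: $\omega$-categoricity of $\sA$ (which follows from oligomorphicity of $\Aut(\sB)$), pp-definability of $I_2^A$ from $\sA$ (immediate since $\sA$ expands $\sB$, which pp-defines $I_2^B = I_2^A$), and the existence of a binary injection in $\Pol(\sA)$. The last hypothesis is the nontrivial one, and I intend to obtain it by chaining \Cref{lemma:binessen} with \Cref{lemma:bininj2}, as anticipated by the remark preceding the corollary.

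The key step will be to extract an essential polymorphism of $\sA$ from the given quasi J\'onsson chain $(J_1, \ldots, J_{2n+1})$. Since $I_2^B$ is pp-definable from $\sA$ and disjoint from the diagonal, no polymorphism of $\sA$ can be constant. I would then suppose for contradiction that every $J_i$ is essentially unary, so that $J_i(x_1,x_2,x_3) = g_i(x_{k_i})$ for some non-constant unary $g_i$ and some $k_i \in \{1,2,3\}$. The identity $J_i(x,y,x) = J_i(x,x,x)$ excludes $k_i = 2$; the identity $J_1(x,x,y) = J_1(x,x,x)$ excludes $k_1 = 3$, forcing $k_1 = 1$; and dually the identity $J_{2n+1}(x,y,y) = J_{2n+1}(y,y,y)$ forces $k_{2n+1} = 3$. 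Hence there must be an index $i_0 \in [2n]$ with $k_{i_0} = 1$ and $k_{i_0+1} = 3$. If $i_0 = 2j-1$ is odd, the chain identity $J_{2j-1}(x,y,y) = J_{2j}(x,y,y)$ reduces to $g_{2j-1}(x) = g_{2j}(y)$ for all $x,y$, making both $g_{2j-1}$ and $g_{2j}$ constant; if $i_0 = 2j$ is even, the identity $J_{2j}(x,x,y) = J_{2j+1}(x,x,y)$ yields the analogous contradiction. Consequently, some $J_i$ must be essential.

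Assuming $\sA$ is an $\omega$-categorical model-complete core (standard in the Bodirsky-Pinsker setting) and observing that $\Aut(\sA) = \Aut(\sB)$ is $(k-1)$-transitive with $k\geq 2$, hence has a single orbit on $A$, \Cref{lemma:binessen} will produce a binary essential polymorphism $f$ of $\sA$. This $f$ automatically preserves the pp-definable relation $I_2^B$. The canonical binary structure of $\sB$ has finite duality: for $k = 2$ it coincides with $\sB$, which is assumed to have finite duality, and for $k \geq 3$ it is the structure on $B$ with only the relations $=$ and $I_2^B$, trivially of finite duality. \Cref{lemma:bininj2} then delivers a binary injection in $\Pol(\sA)$, completing the verification of the hypotheses of \Cref{prop:libcores-purelyinj}.

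The hard part will be the case analysis in the second paragraph ruling out that every operation in the quasi J\'onsson chain is essentially unary; the remaining steps are a direct sequencing of \Cref{lemma:binessen}, \Cref{lemma:bininj2}, and \Cref{prop:libcores-purelyinj}.
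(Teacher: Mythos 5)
Your proof is correct and follows the paper's overall decomposition exactly: obtain an essential polymorphism, pass to a binary essential one via \Cref{lemma:binessen}, upgrade to a binary injection via \Cref{lemma:bininj2}, and finish with \Cref{prop:libcores-purelyinj}. The one place where you genuinely deviate is the opening step: the paper obtains the essential polymorphism by appealing to the general clone-theoretic fact that satisfying a nontrivial set of height~1 identities (here, the quasi J\'onsson chain) precludes a uniformly continuous clone homomorphism onto the projection clone, and that a clone consisting entirely of essentially unary operations would admit such a homomorphism. You replace this black box with a self-contained case analysis on the chain, and it checks out: the identities $J_i(x,y,x)=J_i(x,x,x)$ kill dependence on the middle slot, $J_1(x,x,y)=J_1(x,x,x)$ pins $J_1$ to the first slot, $J_{2n+1}(x,y,y)=J_{2n+1}(y,y,y)$ pins $J_{2n+1}$ to the third, and the intermediate-value argument locates an adjacent pair where the chain identity forces constancy -- a contradiction with $I_2^B$ being preserved. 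Your version has the virtue of being entirely elementary and not requiring the reader to know the clone-homomorphism machinery; what it loses is brevity and generality, since the paper's appeal to the general fact would also cover, say, quasi Pixley or quasi directed J\'onsson chains without a new case analysis. One small point worth making explicit, though both you and the paper leave it implicit: \Cref{lemma:binessen} requires a model-complete core, and this holds automatically here because any endomorphism of $\sA$ preserves $I_2^B$ (hence is injective) and preserves all orbits of $k$-tuples (hence, by $k$-homogeneity, all orbits), so it locally agrees with an automorphism.
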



\subsection{Some implications which are not preserved by quasi J\'{o}nsson operations}

In this section, we first prove that if a structure $\sA$ pp-defines certain implications, then it is not preserved by any chain of quasi J\'{o}nsson operations 
(\Cref{lemma:l+1=rel,lemma:l+2=rel}). This will enable us to prove that if $\sA$ 
is preserved by a chain of quasi J\'{o}nsson operations, and if a relation pp-definable in $\sA$ contains a tuple with certain properties, then this relation contains an injective tuple with the same properties (\Cref{cor:noequality}).

\begin{lemma}\label{lemma:l+1=rel}
Let $k\geq 2$, and let $\sB$ be $k$-neoliberal, and suppose that $\sB$ has finite duality. Let $\sA$ be a first-order expansion of $\sB$, let $\ell\in\{2,\ldots,k\}$, let $T\subseteq I_{\ell}^B$, and let $\phi$ be a $(T,=)$-implication in $\sA$ with $\ell+1$ variables. Then $\sA$ is not preserved by any chain of quasi J\'{o}nsson operations.
\end{lemma}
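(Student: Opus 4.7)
The plan is to argue by contradiction, supposing that $\sA$ is preserved by a chain $(J_1,\ldots,J_{2n+1})$ of quasi J\'onsson operations. Since $\tuple u$ is injective of length $\ell$, $\tuple v$ is injective of length $2$, and $|V|=\ell+1$ with $\scope(\tuple u)\cup\scope(\tuple v)=V$, exactly one of $v_1,v_2$ must lie outside $\scope(\tuple u)$; without loss of generality, $v_1=u_1$ and $v_2=w$ with $w\notin\scope(\tuple u)$.

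First I would extract three landmark mappings from $\phi^{\sA}$. By item~(5) of~\Cref{def:implication}, for every $a\in A$ there is $f_a\in\phi^{\sA}$ with $f_a(\tuple u)\in T$ and $f_a(v_1)=f_a(v_2)=a$. By item~(3), there is $g\in\phi^{\sA}$ with $b_1:=g(v_1)\neq g(v_2)=:b_2$; by item~(4), $g(\tuple u)\notin T$. Setting $f:=f_{b_1}$ and $f':=f_{b_2}$, we have $f(v_1)=f(v_2)=b_1$ and $f'(v_1)=f'(v_2)=b_2$, with $f(\tuple u),f'(\tuple u)\in T$. Invoking~\Cref{obs:idemJonsson}, I may assume that $(J_i)$ is idempotent on the finite set of values attained by $f,f',g$.

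Second, I would analyze the sequence $p_i:=J_i(f,g,f')\in\phi^{\sA}$ for $i\in[2n+1]$, which lie in $\phi^{\sA}$ by preservation. Identity~(\ref{eq:Ji}) combined with idempotence gives $p_i(v)=f(v)$ at every variable $v$ where $f(v)=f'(v)$. The boundary identities~(\ref{eq:J1}) and~(\ref{eq:J2n+1}) yield the endpoints $p_1(v_1)=b_1$ and $p_{2n+1}(v_2)=b_2$, while the linking identities~(\ref{eq:J2i-12i}), (\ref{eq:J2i2i+1}) force the pairs $(p_i(v_1),p_i(v_2))$ to evolve in a zig-zag pattern in which consecutive indices share exactly one coordinate. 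The objective is to locate an index $i^\star$ with $p_{i^\star}(\tuple u)\in T$ but $p_{i^\star}(v_1)\neq p_{i^\star}(v_2)$, which would contradict item~(4).

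The main obstacle is controlling $p_i(\tuple u)$: since $g(\tuple u)\notin T$, preservation of $T$ by $(J_i)$ does not immediately imply $p_i(\tuple u)\in T$. To handle this I would exploit the $k$-neoliberality of $\sB$, in particular $(k-1)$-transitivity---which reduces orbits of $\ell$-tuples for $\ell\leq k-1$ to their injectivity pattern, and for $\ell=k$ to the $k$-ary relations of $\sB$---in order to refine $f$ and $f'$ so that they agree on $u_2,\ldots,u_\ell$. Identity~(\ref{eq:Ji}) then forces $p_i(u_j)=f(u_j)$ for $j\geq 2$, so that $p_i(\tuple u)$ is determined entirely by $p_i(v_1)$. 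Combining this with the finite duality of $\sB$---which provides enough rigidity on orbits meeting $T$ versus those meeting $\proj_{\tuple u}(\phi^{\sA})\setminus T$---and the zig-zag evolution, I expect an index $i^\star$ to appear at which the $\tuple u$-projection stays in $T$ while the two $\tuple v$-coordinates diverge, yielding the desired contradiction.
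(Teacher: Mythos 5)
Your high-level setup is sound and in the spirit of the paper's proof (working with triples on which idempotence fixes all but two coordinates, and trying to push a contradiction through the quasi J\'onsson chain), but the key mechanism is missing and the argument as sketched is circular. Let me spell out where it breaks.

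In your notation, at coordinate $v_1$ the triple evaluates to $(b_1,b_1,b_2)$ and at $v_2$ to $(b_1,b_2,b_2)$. Write $\alpha_i:=J_i(b_1,b_1,b_2)=p_i(v_1)$ and $\beta_i:=J_i(b_1,b_2,b_2)=p_i(v_2)$. The J\'onsson identities give you exactly: $\alpha_1=b_1$, $\beta_{2n+1}=b_2$, $\beta_{2i-1}=\beta_{2i}$, and $\alpha_{2i}=\alpha_{2i+1}$. The implication item~(4) propagates in \emph{one direction only}: if $p_i(\tuple u)\in T$, then $\alpha_i=\beta_i$. With $v_1$ a coordinate of $\tuple u$ and the other $\ell-1$ coordinates of $\tuple u$ frozen by idempotence, $p_i(\tuple u)\in T$ is controlled by $\alpha_i$ alone. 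So the deduction you can make at each step is: ``$\alpha_i=b_1\ \Rightarrow\ p_i(\tuple u)\in T\ \Rightarrow\ \beta_i=b_1$.'' Starting from $\alpha_1=b_1$ you get $\beta_1=b_1$, hence $\beta_2=b_1$. But to continue you now need $\alpha_2=b_1$ (to re-enter $T$), and the only fact available is $\beta_2=b_1$, which says nothing about $\alpha_2$ because the implication has no converse. The zig-zag stalls at every even index, so there is no ``index $i^\star$'' you can reach; the whole contradiction evaporates.

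This is precisely the gap the paper fills with an extra induction that your proposal omits. The paper does not work with a single triple $(f,g,f')$; it constructs, using $(k-1)$-transitivity, no $k$-algebraicity, and (when $k=2$) finite duality, a finite \emph{chain} of witnesses $a^1_\ell,\dots,a^n_\ell$ (and a mirrored chain $b^1_\ell,\dots,b^n_\ell$) together with tuples $\tuple t_{i,i+1}\in R$ linking consecutive members. The crucial second item of their Claim~\ref{claim:equall+1}, namely $J_k(a^1_\ell,a^n_\ell,a^n_\ell)=a^1_\ell$ (your $\alpha$-type fact at even indices), is established not from the J\'onsson identities alone but by a nested inner induction along this chain, at each inner step applying $J_k$ to three tuples of $R$ and invoking the implication. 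The mirrored chain is used to go ``the other way'' (from the $(c,d,d)$ pattern back to $(c,c,d)$) at the even steps of the outer induction. Your use of finite duality is also too vague to serve this purpose: in the paper it is invoked concretely only in the $k=2$ case, to guarantee the existence of a sufficiently long chain realizing the prescribed orbital pattern (for $k\geq 3$, $2$-transitivity already gives $n=2$). Without the chain and the inner induction, the one-directional implication cannot be leveraged often enough to carry $b_1$ all the way to $J_{2n+1}$, and the proof does not close.
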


\begin{proof}
Enumerate the variables of $\phi$ by $x_1,\ldots,x_{\ell+1}$. Without loss of generality, $\tuple u=(x_1,\ldots,x_{\ell})$ and $\tuple v=(x_{\ell},x_{\ell+1})$ are such that $\phi$ is an $(T,\tuple u,=,\tuple v)$-implication in $\sA$. The set $\phi^{\sA}$ can then be viewed as an $(\ell+1)$-ary relation $R(x_1,\ldots,x_{\ell+1})$.

Using the $k$-neoliberality and finite duality of $\sB$, we can find $n\geq 1$ and $S := \{a^i_j\in B\mid 1\leq i\leq n-1, 1\leq j\leq \ell\}
\cup \{b^i_j\in B\mid 1\leq i\leq n-1, 1\leq j\leq \ell\} \cup \{ a^n_{\ell}, b^n_{\ell} \} \cup \{ d_1, \ldots, d_{\ell-1} \}
$ 
such that all of the following hold:
\begin{itemize}
    \item $a_{\ell}^1 = b_{\ell}^n,a_{\ell}^n = b_{\ell}^1$,
    \item $a^i_\ell\neq a^{i+1}_\ell$, $b^i_\ell\neq b^{i+1}_\ell$ for all $i\in[n-1]$,
    \item $(d_1,\ldots, d_{\ell-1}, a^1_\ell )\in T$,
    \item $(d_1,\ldots, d_{\ell-1}, a^n_\ell)\in T$,
    \item $(a^i_1,\ldots,a^i_{\ell-1}, a^i_\ell, a^{i+1}_\ell) \in R$ for all $i \in [n-1]$,
    \item $(b^i_1,\ldots, b^i_{\ell-1},b^i_\ell, b^{i+1}_\ell) \in R$ for all $i \in [n-1]$.
\end{itemize}

To see this, let $(d_1,\ldots,d_{\ell -1}, a^1_{\ell})\in T$ be arbitrary.
The fact that the automorphism group of $\sB$ has no $k$-algebraicity implies that there exists $a^n_{\ell}\in B$, the choice of the value of $n$ is made later,  which is distinct from $a^1_{\ell}$ but which lies in the same orbit under the stabilizer of $\Aut(\sB)$ by $a^1_1,\ldots, a^1_{\ell-1}$. In particular, $(d_1,\ldots,d_{\ell - 1}, a^1_{\ell})$ and $(d_1,\ldots, d_{\ell - 1}, a^n_{\ell})$ lie in the same orbit under $\Aut(\sB)$, and hence $(d_1,\ldots, d_{\ell - 1}, a^n_{\ell}) \in T$. 
If $k\geq 3$, we set $n:=2$, $b^n_\ell:=a^1_\ell$, and $b^1_\ell:=a^n_\ell$. Since $\Aut(\sB)$ is $2$-transitive and $\proj_{(\ell,\ell+1)}(R)\not\subseteq \{(a,a)\mid a\in A\}$, 
it follows that $I_2^B\subseteq\proj_{(\ell,\ell+1)}(R)$, and hence we can find $a^1_1,\ldots,a^1_{\ell-1}$ and 
$b^1_1,\ldots,b^1_{\ell-1}$
such that $(a^1_1,\ldots,a^1_{\ell-1},a^1_\ell,a^n_{\ell})\in R$ and
$(b^1_1,\ldots,b^1_{\ell-1},b^1_\ell,b^n_{\ell}) \in R$.

If $k=2$, then $\ell=2$. In this case, 
we take $n$ such that for every finite structure $\sX$ in the signature $\tau$ of $\sB$, it holds that if every substructure of $\sX$ of size at most $n-1$ maps homomorphically to $\sB$, then so does $\sX$. This is possible by the finite duality of $\sB$. Let us define a $\tau$ structure on a set $\{x^i\mid 1\leq i\leq n\} \cup \{ y^i\mid 1 \leq i\leq n\}$ with $x^1 = y^n$ and $x^n = y^1$ as follows. Recall that the relations from $\tau$ correspond to the orbits of injective pairs under $\Aut(\sB)$. Let $O$ be the orbit of $(a^1_2,a^n_2)$, and let $P$ be an arbitrary injective orbit contained in $\proj_{(\ell,\ell+1)}(R)$. Such an orbit exists since $\proj_{(\ell,\ell+1)}(R)\not\subseteq \{(a,a)\mid a\in A\}$. We first define $R^{\sX}$ to be an empty relation for every $R\in \tau$, and we add the tuple $(x^1,x^n)$ to $O^{\sX}$ and all the tuples $(x^i,x^{i+1})$  as well as $(y^i,y^{i+1})$ for $i\in[n-1]$
to $P^{\sX}$. Observe that every substructure of $\sX$ of size at most $n-1$ has a homomorphism to $\sB$ by the $1$-transitivity of $\Aut(\sB)$. It follows by the choice of $n$ that $\sX$ has a homomorphism $h$ to $\sB$. Moreover, by the $2$-homogeneity of $\sB$, we can assume that $h(x^1)=a^1_2, \ldots, h(x^n)=a^n_2$ and 
$h(y^1)=b^1_2, \ldots, h(y^n)=b^n_2$. Thus, we set $a^i_2:=h(x^i)$ and  
 $b^i_2:=h(y^i)$
for every $i\in[n]$. It follows that for every $i\in[n-1]$, $(c^i_2,c^{i+1}_2)\in\proj_{(2,3)}(R)$ for $c \in \{a,b\}$, whence we can find $c^i_1$ such that $(c^i_1,c^i_2,c^{i+1}_2)\in R$ as desired.

Suppose for contradiction that the structure $\sA$ is preserved by a chain $(J_1, \ldots, J_{2m+1})$ of quasi J\'{o}nsson operations. Then by~\Cref{obs:idemJonsson}, we may assume without loss of generality that all the operations
$J_1, \ldots, J_{2m+1}$ are idempotent on $S$. 
Striving for contradiction, we first show in~\Cref{claim:equall+1} 
that $J_{2m+1}(a^1_\ell, a^n_\ell, a^n_\ell) = a^1_\ell$. Afterwards, we show that
$J_{2m+1}(a^n_\ell, a^n_\ell, a^n_\ell) = a^1_\ell$
which contradicts the idempotency of $J_{2m+1}$.

\begin{claim}
\label{claim:equall+1}
For all $k \in [2m+1]$ we have both of the following.
\begin{itemize}
\item $J_k(a^1_\ell, a^1_\ell, a^n_\ell) = a^1_\ell$ and
\item $J_k(a^1_\ell, a^n_\ell, a^n_\ell) = a^1_\ell$.
\end{itemize}
\end{claim}

\begin{proof}
The proof goes by induction on $k \in [2m+1]$. 

\textbf{(Base case: $k = 1$)}
For the first item we simply use the idempotency of $J_1$ on $S$ and~(\ref{eq:J1}) in Definition~\ref{def:QuasiJonsson}. In order to show the second item, we prove by  induction on $i \in [n]$ that $J_1(a^1_\ell, a^i_\ell, a^n_\ell) = a^1_\ell$.
By the first item, we have that $J_1(a^1_\ell, a^1_\ell, a^n_\ell) = a^1_\ell$.  For the induction step, we assume 
 $J_1(a^1_\ell,a^{i}_\ell, a^{n}_\ell) = a^1_{\ell}$
for some $i\in[n- 1]$, and show
  $J_1(a^1_\ell,a^{i+1}_\ell, a^{n}_\ell) = a^1_{\ell}$.

By the discussion above and since $\phi$ is a $(T,=)$-implication, we have that 
\begin{align*}
   \tuple t_{1,1} := \begin{pmatrix}
       d_1 \\
       \ldots \\
       d_{\ell-1} \\
       a^1_\ell \\
       a^1_\ell
    \end{pmatrix},
    \tuple t_{i,i+1} := \begin{pmatrix}
       a^{i}_1 \\
       \ldots \\
       a^{i}_{\ell-1} \\
       a^i_\ell \\
       a^{i+1}_\ell
    \end{pmatrix},
      \tuple t_{n,n} := \begin{pmatrix}
       d_1 \\
       \ldots \\
       d_{\ell-1} \\
       a^n_\ell \\
       a^n_\ell
    \end{pmatrix}.
\end{align*}
 for all $i \in [n-1]$ are in $R$. 
Since $f$ preserves $R$, it follows that $$f(\tuple t_{1,1}, \tuple t_{i,i+1},\tuple t_{n,n})\in R,$$ for all $i \in [n-1]$. This means that
\begin{align*}
    \begin{pmatrix}
       d_1 &=& J_1(d_1,a^i_1,d_1) \\
       \ldots && \ldots \\
       d_{\ell-1} &=& J_1(d_{\ell-1},a^i_{\ell-1},d_{\ell-1}) \\
       a^1_{\ell} &=& J_1(a^1_\ell,a^i_\ell,a^n_\ell) \\
    && J_1(a^1_\ell,a^{i+1}_\ell,a^n_\ell)
    \end{pmatrix} \in R.
\end{align*}
where the first $\ell-1$ equalities follow by~(\ref{eq:Ji}) in Definiton~\ref{def:QuasiJonsson} and the last of the displayed disequalities by the induction hypothesis. 
Since $(d_{1},\ldots, d_{\ell - 1}, a^1_\ell)\in T$ and $\phi$ is an $(T,=)$-implication, we obtain $J_1(a^1_\ell,a^i_\ell, a^n_\ell) =  J_1(a^1_\ell,a^{i+1}_\ell, a^n_\ell) = a^{1}_{\ell}$, as desired.

\noindent
\textbf{(Induction step.)}
We now assume that 
both items in the formulation of the claim hold for some $k \in [2m]$ and prove that both items hold for $k+1$. The proof depends on whether $k$ is odd or even. In the latter case we have $J_{k+1}(a^1_\ell, a^1_\ell, a^n_\ell) = J_{k}(a^1_\ell, a^1_\ell, a^n_1) = a^1_1$ by~(\ref{eq:J2i2i+1}). To obtain 
$J_{k+1}(a^1_\ell, a^n_\ell, a^n_\ell) = a^1_\ell$ we proceed in the same way as in the base case. 

On the other hand, if $k$ is odd, then by~(\ref{eq:J2i-12i}) we have $J_{k+1}(a^1_{\ell}, a^n_{\ell}, a^n_{\ell}) = J_{k}(a^1_{\ell}, a^n_{\ell}, a^n_{\ell}) = a^1_{\ell}$. Since $a^n_{\ell} = b^1_{\ell}$, it means that $J_{k+1}(a^1_{\ell}, b^1_{\ell}, a^n_{\ell}) = a^1_{\ell}$. 
By 
induction on $i \in [n]$, we will now show that $J_{k+1}(a^1_{\ell}, b^i_{\ell}, a^n_{\ell}) = a^1_{\ell}$. Since $b^n_\ell$ is $a^1_\ell$
we will have $J_{k+1}(a^1_{\ell}, a^1_{\ell}, a^n_{\ell}) = a^1_{\ell}$ as desired.
The base case is demonstrated. For the induction step 
assume that $J_{k+1}(a^1_{\ell}, b^i_{\ell}, a^n_{\ell}) = a^1_{\ell}$ for some $i$. We will show that $J_{k+1}(a^1_{\ell}, b^{i+1}_{\ell}, a^n_{\ell}) = a^1_{\ell}$. 
To this end, we observe that 
\begin{align*}
   \tuple s_{1,1} := \begin{pmatrix}
       d_1 \\
       \ldots \\
       d_{\ell-1} \\
       a^1_\ell \\
       a^1_\ell
    \end{pmatrix},
    \tuple s_{i,i+1} := \begin{pmatrix}
       b^i_1 \\
       \ldots \\
       b^i_{\ell-1} \\
       b^i_\ell \\
       b^{i+1}_\ell
    \end{pmatrix},
      \tuple s_{n,n} := \begin{pmatrix}
       d_1 \\
       \ldots \\
       d_{\ell-1} \\
       a^n_\ell \\
       a^n_\ell
    \end{pmatrix}.
\end{align*}
are 
tuples in $R$ by construction. 
Since $J_{k+1}$ preserves $R$, it follows that $$J_{k+1}(\tuple s_{1,1}, \tuple s_{i, i+1},\tuple s_{n,n}) \in R,$$ for all $i \in [n-1]$. This means that
\begin{align*}
    \begin{pmatrix}
       d_1&=& J_{k+1}(d_1,b^i_1,d_1) \\
       \ldots && \ldots \\
       d_{\ell-1} &=& J_{k+1}(d_{\ell-1},b^i_{\ell-1},d_{\ell-1}) \\
       a^1_{\ell} &=& J_{k+1}(a^1_\ell,b^i_\ell,a^n_\ell) \\
    &&J_{k+1}(a^1_\ell,b^{i+1}_\ell,a^n_\ell)
    \end{pmatrix} \in R.
\end{align*}
where the first $\ell-1$ equalities follow by~(\ref{eq:Ji}) and the last but one by the induction hypothesis. 
Since $(d_1,\ldots,d_{\ell-1},a^1_{\ell}) \in T$ and $\phi$ is an $(T,=)$-implication, we obtain $J_{k+1}(a^1_\ell,b^i_\ell, a^n_\ell) =  J_{k+1}(a^1_\ell,b^{i+1}_\ell, a^n_\ell) = a^{1}_{\ell}$, as desired. 
It completes the proof of the claim.
\end{proof}

By the claim, we have $J_{2m+1}(a^1_\ell, a^n_\ell, a^n_\ell) = a^1_\ell$. In order to complete the proof of the lemma, we will show by induction
on $i \in [n]$ that $J_{2m+1}(a^i_\ell, a^n_\ell, a^n_\ell) = a^1_\ell$ for all $i \in [n]$. The base case is already done. 
For the induction step assume that  we have $J_{2m+1}(a^i_\ell, a^n_\ell, a^n_\ell) = a^1_\ell$ for some $i \in [n]$.

Since $J_{2m+1}$ preserves $R$, it follows that $$J_{2m+1}(\tuple t_{i,i+1}, \tuple t_{n,n},\tuple t_{n,n})\in R,$$ for all $i \in [1, n-1]$. This means that
\begin{align*}
    \begin{pmatrix}
       d_1 &=& J_{2m+1}(a^i_1,d_1,d_1) \\
       \ldots && \ldots \\
       d_{\ell-1} &=& J_{2m+1}(a^i_{\ell-1},d_{\ell-1},d_{\ell-1}) \\
       a^1_{\ell} &=& J_{2m+1}(a^i_\ell,a^n_\ell,a^n_\ell) \\
    &&J_{2m+1}(a^{i+1}_\ell,a^n_\ell,a^n_\ell)
    \end{pmatrix} \in R.
\end{align*}
where the first $\ell-1$ equalities follow by~(\ref{eq:J2n+1}) and the last one by the induction hypothesis. 
Since $(d_1,\ldots,d_{\ell-1},a^1_\ell)\in T$ and $\phi$ is a $(T,=)$-implication, we obtain $J_{2m+1}(a^i_\ell,a^{n}_\ell, a^n_\ell) =  J_{2m+1}(a^{i+1}_\ell,a^{n}_\ell, a^n_\ell) = a^{1}_{\ell}$, as desired. 
It completes the proof of the lemma.
\end{proof}

\begin{lemma}\label{lemma:l+2=rel}
Let $k\geq 2$, an let $\sB$ be $k$-neoliberal, and suppose that $\sB$ has finite duality. Let $\sA$ be a first-order expansion of $\sB$, let $\ell\in\{2,\ldots,k\}$, let $T\subseteq I_{\ell}^B$, and let $\phi$ be an $(T,=)$-implication in $\sA$ with $\ell+2$ variables. Then $\sA$ is not preserved by any chain of quasi J\'{o}nsson operations.
\end{lemma}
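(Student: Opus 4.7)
The plan is to mimic the proof of \Cref{lemma:l+1=rel} with minor modifications reflecting the disjointness of $\tuple u$ and $\tuple v$. Enumerate the variables of $\phi$ as $x_1, \dots, x_{\ell+2}$, assume $\tuple u = (x_1, \dots, x_\ell)$ and $\tuple v = (x_{\ell+1}, x_{\ell+2})$, and view $\phi^{\sA}$ as an $(\ell+2)$-ary relation $R$, so the implication property reads: for every $(e_1, \dots, e_{\ell+2}) \in R$, if $(e_1, \dots, e_\ell) \in T$ then $e_{\ell+1} = e_{\ell+2}$. Fix $(d_1, \dots, d_\ell) \in T$.

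I first construct the witness tuples, adapting the construction from the proof of \Cref{lemma:l+1=rel} (no $k$-algebraicity when $k \geq 3$, finite duality when $k = 2$). Pick an integer $n \geq 2$, chain values $a^1_{\ell+1}, \dots, a^n_{\ell+1} \in B$ and $b^1_{\ell+1}, \dots, b^n_{\ell+1} \in B$ with $a^1_{\ell+1} = b^n_{\ell+1}$, $a^n_{\ell+1} = b^1_{\ell+1}$, and consecutive entries distinct, together with $\ell$-tuple fillers $(a^i_1, \dots, a^i_\ell)$ and $(b^i_1, \dots, b^i_\ell)$, such that the tuples
\begin{itemize}
    \item $\tuple t_{1,1} := (d_1, \dots, d_\ell, a^1_{\ell+1}, a^1_{\ell+1})$ and $\tuple t_{n,n} := (d_1, \dots, d_\ell, a^n_{\ell+1}, a^n_{\ell+1})$, and
    \item $\tuple t_{i,i+1} := (a^i_1, \dots, a^i_\ell, a^i_{\ell+1}, a^{i+1}_{\ell+1})$ and $\tuple s_{i,i+1} := (b^i_1, \dots, b^i_\ell, b^i_{\ell+1}, b^{i+1}_{\ell+1})$ for $i \in [n-1]$,
\end{itemize}
all lie in $R$. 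The first two exist by item~(5) of \Cref{def:implication}; the chain is realized by item~(3) combined with $k$-homogeneity of $\sB$; and the first $\ell$ coordinates of the chain tuples are automatically outside $T$ since the last two coordinates differ. The disjointness of $\tuple u$ and $\tuple v$ actually simplifies the construction relative to \Cref{lemma:l+1=rel}, since the first $\ell$ entries are independent of the chain.

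Assume toward contradiction that $\sA$ is preserved by a chain $(J_1, \dots, J_{2m+1})$ of quasi J\'{o}nsson operations, idempotent on the finite set of witness elements by \Cref{obs:idemJonsson}. The key observation is that applying $J_j$ componentwise to $(\tuple t_{1,1}, \tuple t_{i,i+1}, \tuple t_{n,n})$ or $(\tuple t_{1,1}, \tuple s_{i,i+1}, \tuple t_{n,n})$ forces the first $\ell$ coordinates of the result to equal $(d_1, \dots, d_\ell) \in T$ via identity~(\ref{eq:Ji}) and idempotency, so the implication yields $J_j(a^1_{\ell+1}, a^i_{\ell+1}, a^n_{\ell+1}) = J_j(a^1_{\ell+1}, a^{i+1}_{\ell+1}, a^n_{\ell+1})$ and the analogous statement for the $b$-chain. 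From here the proof of the analog of \Cref{claim:equall+1} transplants verbatim with $a^i_\ell$ and $b^i_\ell$ replaced by $a^i_{\ell+1}$ and $b^i_{\ell+1}$ respectively: a nested induction on $j \in [2m+1]$ and $i \in [n]$ using identities~(\ref{eq:J1}),~(\ref{eq:J2i-12i}), and~(\ref{eq:J2i2i+1}) establishes $J_j(a^1_{\ell+1}, a^1_{\ell+1}, a^n_{\ell+1}) = a^1_{\ell+1}$ and $J_j(a^1_{\ell+1}, a^n_{\ell+1}, a^n_{\ell+1}) = a^1_{\ell+1}$ for every $j \in [2m+1]$.

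The final step mirrors the last paragraph of the proof of \Cref{lemma:l+1=rel}: induction on $i \in [n]$ applied to the triple $(\tuple t_{i,i+1}, \tuple t_{n,n}, \tuple t_{n,n})$, where identity~(\ref{eq:J2n+1}) together with idempotency now drives the first $\ell$ coordinates of the result into $T$, yields $J_{2m+1}(a^i_{\ell+1}, a^n_{\ell+1}, a^n_{\ell+1}) = a^1_{\ell+1}$ for every $i \in [n]$. At $i = n$ this gives $J_{2m+1}(a^n_{\ell+1}, a^n_{\ell+1}, a^n_{\ell+1}) = a^1_{\ell+1} \neq a^n_{\ell+1}$, contradicting the idempotency of $J_{2m+1}$ on $a^n_{\ell+1}$. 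The main obstacle I anticipate is the witness construction in the case $k = 2$, where the finite-duality gadget from \Cref{lemma:l+1=rel} must be re-run with the chain living on the last two coordinates of the $(\ell+2)$-ary relation $R$; but since these coordinates are genuinely independent of $\tuple u$, the necessary adjustment is routine.
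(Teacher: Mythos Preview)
There is a gap in the witness construction. You assert that both $\tuple t_{1,1}=(d_1,\dots,d_\ell,a^1_{\ell+1},a^1_{\ell+1})$ and $\tuple t_{n,n}=(d_1,\dots,d_\ell,a^n_{\ell+1},a^n_{\ell+1})$ lie in $R$ with the \emph{same} prefix $(d_1,\dots,d_\ell)$, citing item~(5) of \Cref{def:implication}. But item~(5) only guarantees that for each value of $\tuple v$ there is \emph{some} tuple in $T$ for $\tuple u$; it does not say that one fixed $(d_1,\dots,d_\ell)$ works for two distinct values $a^1_{\ell+1}\neq a^n_{\ell+1}$. To obtain a second value in the orbit of $a^1_{\ell+1}$ under the pointwise stabilizer of $d_1,\dots,d_\ell$ you would need no $(\ell+1)$-algebraicity, whereas the hypothesis only supplies no $k$-algebraicity (control over stabilizers of $k-1$ points), and this is insufficient when $\ell=k$. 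Your remark that disjointness of $\tuple u$ and $\tuple v$ ``simplifies'' things is therefore backwards: in \Cref{lemma:l+1=rel} the overlap at $x_\ell$ means only $d_1,\dots,d_{\ell-1}$ must be fixed when moving $a^1_\ell$ to $a^n_\ell$, which stays within the no-$k$-algebraicity budget; here all $\ell$ coordinates of $\tuple u$ must be pinned down simultaneously.

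The paper's proof confronts exactly this obstacle. It stabilizes only $d_2,\dots,d_\ell$ when producing $a^n_{\ell+1}$, at the price of replacing $d_1$ by some $c_1$ in $\tuple t_{n,n}$. Consequently the first coordinate of $J_j(\tuple t_{1,1},\tuple t_{i,i+1},\tuple t_{n,n})$ becomes $J_j(d_1,a^i_1,c_1)$, which identity~(\ref{eq:Ji}) no longer collapses; an additional argument---applying $J_j$ to three auxiliary $\ell$-tuples in $T$ that agree on coordinates $2,\dots,\ell$, and using that $T$ is pp-definable and hence preserved---is needed to show the resulting prefix lies in $T$. The paper also first disposes of the case where no $a^1_{\ell+1}$ distinct from all $d_i$ exists by reducing to \Cref{lemma:l+1=rel}. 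These two ingredients, absent from your sketch, are precisely where the $\ell+2$ case goes beyond a verbatim transplant of the $\ell+1$ argument.
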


The following corollary follows from~\Cref{lemma:l+1=rel,lemma:l+2=rel}.

\begin{corollary}\label{cor:noequality}
Let $k\geq 2$, let $\sA$ be a first-order expansion of a $k$-neoliberal $\sB$ with finite duality and suppose that $\sA$ is preserved by a chain of quasi J\'{o}nsson operations. Let $\phi$ be a pp-formula over the signature of $\sA$ with variables from a set $V$ such that for all distinct $x,y\in V$, $\proj_{(x,y)}(\phi^{\sA})\not\subseteq\{(a,a)\mid a\in A\}$, and let $g\in\phi^{\sA}$. Then there exists an injective $h\in\phi^{\sA}$ with the property that for every $r\geq 1$ and for every $\tuple v\in I^V_r$, if $g(\tuple v)$ is injective, then $g(\tuple v)$ and $h(\tuple v)$ belong to the same orbit under $\Aut(\sB)$.
\end{corollary}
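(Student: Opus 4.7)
The plan is to augment $\phi$ with pp-definable orbit constraints that record the behaviour of $g$ on its injective $k$-sub-tuples, and then show that the enlarged formula has an injective solution by using \Cref{lemma:l+1=rel} and \Cref{lemma:l+2=rel}. Concretely, I would define
$\psi := \bigwedge\{R_{O_{\tuple w}}(\tuple w) : \tuple w\in V^k,\ g(\tuple w)\in I_k^B\}$,
where $O_{\tuple w}$ is the orbit of $g(\tuple w)$ under $\Aut(\sB)$ and $R_{O_{\tuple w}}$ is the corresponding relation of $\sB$ (available by $k$-neoliberality, hence pp-definable in $\sA$). Setting $\Phi := \phi\wedge\psi$, we have $g\in\Phi^{\sA}$, and any injective $h\in\Phi^{\sA}$ witnesses the conclusion: for $\tuple v\in I_r^V$ with $g(\tuple v)$ injective, $h(\tuple v)$ lies in the orbit of $g(\tuple v)$ by $(k-1)$-transitivity when $r<k$, directly by $\psi$ when $r=k$, and by $k$-homogeneity when $r>k$.

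It therefore suffices to produce an injective element of $\Phi^{\sA}$. I would first harmlessly extend $V$ by fresh existentially-quantified variables assigned by $g$ to new distinct values in $B$ so that $|g(V)|\geq k$, which guarantees that every $g$-disequal pair is contained in some $g$-injective $k$-sub-tuple and hence is automatically separated by every solution of $\Phi$. Then I would induct on the number of $g$-collapsed pairs, showing that from any non-injective $g\in\Phi^{\sA}$ one can find $g'\in\Phi^{\sA}$ with strictly fewer collapses; since disequalities of $g$ are preserved under $\Phi$, it is enough to separate one $g$-collapsed pair $(x,y)$, i.e.~to establish $\Phi\not\Rightarrow x=y$.

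To prove this last statement, suppose for contradiction that $\Phi\Rightarrow x=y$ for some $g$-collapsed pair $(x,y)$, while $\phi\not\Rightarrow x=y$ by hypothesis. My aim would be to exhibit a $k$-tuple $\tuple u\in V^k$ with $g(\tuple u)$ injective such that the pp-formula
$\phi' := \exists (V\setminus(\scope(\tuple u)\cup\{x,y\}))\colon \phi\wedge R_{O_{\tuple u}}(\tuple u)$
is a $(T,\tuple u,=,(x,y))$-implication in the sense of \Cref{def:implication} with $T := O_{\tuple u}\subseteq I_k^B$. Because $g(\tuple u)$ is injective while $g(x)=g(y)$, $\scope(\tuple u)\cap\{x,y\}$ has size at most one, so $\phi'$ has $k+1$ or $k+2$ variables; \Cref{lemma:l+1=rel} or \Cref{lemma:l+2=rel} with $\ell=k$ then contradicts preservation of $\sA$ by a chain of quasi J\'onsson operations. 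Non-degeneracy of $\phi'$ as an implication follows because the hypothesis forbids forced equalities in $\phi$, so $\proj_{(a,b)}(\phi'^{\sA})\not\subseteq\{(c,c):c\in A\}$ for all distinct $a,b\in\scope(\phi')$; strictness of the containments in conditions (2) and (3) uses that $\phi$ alone forces neither $\tuple u\in T$ nor $x=y$; and condition (5) is seen by applying 1-transitivity of $\Aut(\sB)$ to translate $g$.

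The principal obstacle is the existence of such a single $\tuple u$: it may happen that several orbit constraints from $\psi$ are needed jointly to force $x=y$ while no single one does. I plan to overcome this by taking a minimal subcollection $\mathcal{W}'\subseteq\{\tuple w\in V^k:g(\tuple w)\in I_k^B\}$ with $\phi\wedge\bigwedge_{\tuple w\in\mathcal{W}'}R_{O_{\tuple w}}(\tuple w)\Rightarrow x=y$ and peeling off one constraint $\tuple u\in\mathcal{W}'$; by minimality, the reduced formula admits solutions separating $(x,y)$. Viewing the remaining conjunction with $\phi$ as the new base formula, which still fulfils the non-degeneracy hypothesis of \Cref{cor:noequality}, one obtains a $(T,=)$-implication in this enriched base; iterating this peeling-off procedure, at each stage the corresponding formula is implicationally hard, and eventually a single-constraint implication on $k+1$ or $k+2$ variables is isolated to which the lemmas apply directly.
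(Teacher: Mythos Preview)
Your approach is essentially the paper's: both conjoin $\phi$ with orbit atoms recording the behaviour of the given solution on injective sub-tuples, pass to a minimal subset of such atoms that still forces some equality $x=y$, peel off one atom $O_{\tuple u}(\tuple u)$, and project the remaining formula to $\scope(\tuple u)\cup\{x,y\}$ to obtain a $(T,=)$-implication with $\ell+1$ or $\ell+2$ free variables to which \Cref{lemma:l+1=rel} or \Cref{lemma:l+2=rel} applies. The paper phrases the outer loop as ``take a maximal injective tuple and derive a contradiction'' rather than your induction on collapsed pairs, and it allows the peeled-off atom to be either a $k$-ary orbit or a binary disequality, whereas your trick of padding $V$ so that $|g(V)|\geq k$ lets you work uniformly with $\ell=k$; these are cosmetic differences.

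One point needs care. Your assertion that the enriched base $\phi_0:=\phi\wedge\bigwedge_{\tuple w\in\mathcal W'\setminus\{\tuple u\}}R_{O_{\tuple w}}(\tuple w)$ ``still fulfils the non-degeneracy hypothesis'' is not true in general: $\phi_0$ may very well force equalities among $g$-collapsed pairs. What you actually need is only item~(1) of \Cref{def:implication} for the \emph{projected} formula on $\scope(\tuple u)\cup\{x,y\}$, and this can be secured directly: any forced equality there must involve one entry of $\tuple u$ and one of $x,y$ (since $g(\tuple u)$ is injective and $\phi_0\not\Rightarrow x=y$), and a short case analysis shows there is at most one such, so after a single identification you still have $k+1$ or $k+2$ variables and the lemmas apply. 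No iteration is required, and the phrase ``implicationally hard'' in your last sentence is a misnomer here and should be dropped.
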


\subsection{Composition of implications}
We introduce composition of implications which will play an important role in the rest of the article.

\begin{definition}\label{def:composition}
Let $\sA$ be a relational structure, let $k\geq 1$, let $C,D,E\subseteq A^{k}$ be non-empty, let $\phi_1$ be a $(C,\tuple u^1,D,\tuple v^1)$-implication in $\sA$, and let $\phi_2$ be a $(D,\tuple u^2,E,\tuple v^2)$-implication in $\sA$. Let us rename the variables of $\phi_2$ so that $\tuple v^1=\tuple u^2$ and so that $\phi_1$ and $\phi_2$ do not share any other variables.
We define $\phi_1\circ\phi_2$ to be the pp-formula arising from the formula $\phi_1\wedge \phi_2$ by existentially quantifying all variables that are not contained in $\scope(\tuple u^1)\cup \scope(\tuple v^2)$.

Let $\psi$ be a $(C,\tuple u,C,\tuple v)$-implication. For $n\geq 2$, we write $\psi^{\circ n}$ for the pp-formula $\psi\circ \cdots \circ \psi$ where $\psi$ appears exactly $n$ times.
\end{definition}

\begin{lemma}\label{lemma:kimpl}
Let $k\geq 2$, let $\sA$ be a first-order expansion of the canonical $k$-ary structure $\sB$ of a permutation group $\fG$. 
Let $\phi_1,\phi_2$ be as in~\Cref{def:composition}, and suppose that $\proj_{\tuple v^1}(\phi_1)= \proj_{\tuple u^2}(\phi_2)$.
Then $\phi:=\phi_1\circ \phi_2$ is a $(C,\tuple u^1,E,\tuple v^2)$-pre-implication in $\sA$. Moreover, for all orbits $O_1\subseteq \proj_{\tuple u^1}(\phi_1^{\sA})$, $O_3\subseteq \proj_{\tuple v^2}(\phi_2^{\sA})$ under $\gG$, $\phi^{\sA}$ contains an $O_1 O_3$-mapping if, and only if, there exists an orbit $O_2$ under $\gG$ such that $\phi_1^{\sA}$ contains an $O_1 O_2$-mapping and $\phi_2^{\sA}$ contains an $O_2 O_3$-mapping.

Suppose moreover that $\gG$ is $k$-neoliberal, and that $\sB$ has finite duality and that $\sA$ is preserved by a chain of quasi J\'{o}nsson operations. Then there exists an injective $O_1O_3$-mapping in $\phi^{\sA}$ provided 
$\phi_1^{\sA}$ contains an injective $O_1 O_2$-mapping, and $\phi_2^{\sA}$ contains an injective $O_2 O_3$-mapping.
 
\end{lemma}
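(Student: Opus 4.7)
The plan is to prove the three assertions in sequence.

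For the pre-implication property, I would verify items (2)--(5) of \Cref{def:implication} directly. The hypothesis $\proj_{\tuple v^1}(\phi_1^{\sA}) = \proj_{\tuple u^2}(\phi_2^{\sA})$ ensures that every satisfying assignment of $\phi_1$ glues with one of $\phi_2$ and vice versa, so $\proj_{\tuple u^1}(\phi^{\sA}) = \proj_{\tuple u^1}(\phi_1^{\sA})$, which strictly contains $C$; symmetrically for item (3). Item (4) is a direct chase: $f(\tuple u^1) \in C$ forces $f(\tuple v^1) \in D$ by the implication property of $\phi_1$, and then $f(\tuple v^2) \in E$ by that of $\phi_2$. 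For item (5), I would chain the witness given by item (5) for $\phi_2$ with the one given by item (5) for $\phi_1$, gluing on the shared tuple.

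For the orbit characterization, the forward direction extracts $O_2$ as the orbit of the value on the middle block of any extension of $f \in \phi^{\sA}$ to a satisfying assignment of $\phi_1 \wedge \phi_2$, and restricts to get $O_1 O_2$- and $O_2 O_3$-mappings. For the backward direction, given $g_1$ and $g_2$, both $g_1(\tuple v^1)$ and $g_2(\tuple u^2)$ lie in $O_2$, so one can choose $\alpha \in \fG$ with $\alpha(g_1(\tuple v^1)) = g_2(\tuple u^2)$. Since $\fG \subseteq \Aut(\sA)$, the map $\alpha \circ g_1$ is in $\phi_1^{\sA}$ and sends $\tuple u^1$ into $O_1$; gluing with $g_2$ on the shared block yields a satisfying assignment of $\phi_1 \wedge \phi_2$ whose projection lies in $\phi^{\sA}$ and is the desired $O_1 O_3$-mapping.

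For the injective upgrade, I would carry out the same gluing from injective witnesses $g_1, g_2$ to obtain $f \in \phi^{\sA}$ which is an $O_1 O_3$-mapping but possibly not injective on $\scope(\tuple u^1) \cup \scope(\tuple v^2)$, since $(\alpha \circ g_1)(x)$ might coincide with $g_2(y)$ for variables lying outside the shared block. I would then invoke \Cref{cor:noequality} on the formula $\phi$ with $g := f$: as $\tuple u^1$ and $\tuple v^2$ are injective tuples and $f$ is injective on each (being obtained from injective maps), the produced injective $h \in \phi^{\sA}$ satisfies that $h(\tuple u^1)$ lies in the same orbit as $f(\tuple u^1) \in O_1$, and $h(\tuple v^2)$ in the orbit of $f(\tuple v^2) \in O_3$, so $h$ is the required injective $O_1 O_3$-mapping.

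The main obstacle will be verifying the hypothesis of \Cref{cor:noequality} for $\phi$, namely that $\proj_{(x,y)}(\phi^{\sA}) \not\subseteq \{(a,a)\mid a \in A\}$ for all distinct free variables $x, y$ of $\phi$. For pairs entirely inside $\scope(\tuple u^1)$ or $\scope(\tuple v^2)$ this transfers from item (1) of $\phi_1$ or $\phi_2$ using the projection equality to extend witnesses across the glue. The delicate case is a cross-pair $x \in \scope(\tuple u^1) \setminus \scope(\tuple v^2)$ and $y \in \scope(\tuple v^2) \setminus \scope(\tuple u^1)$; here I plan to use the freedom in the choice of $\alpha$, exploiting that the stabilizer of $g_2(\tuple u^2)$ in $\fG$ acts with non-trivial orbits on $A \setminus \scope(g_2(\tuple u^2))$ by the no-$k$-algebraicity assumption, allowing $(\alpha \circ g_1)(x)$ to be moved off $g_2(y)$ without breaking membership in $\phi^{\sA}$.
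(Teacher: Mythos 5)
For the first two parts (pre-implication and orbit characterization), your approach is essentially the same as the paper's: verify items (2)--(5) by gluing compatible satisfying assignments across the shared block, using elements of $\fG$ to align them. This is correct.

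For the injective upgrade, there is a genuine gap. You correctly identify that applying \Cref{cor:noequality} to $\phi$ as a black box requires item (1) of \Cref{def:implication} (no-diagonal projections) for $\phi$. But the lemma only asserts that $\phi$ is a \emph{pre}-implication, which deliberately omits item (1); it is not guaranteed, and indeed \Cref{cor:kimpl} only upgrades it to an actual implication under the extra hypothesis that $\phi_1,\phi_2$ are injective implications. Your attempt to establish item (1) for cross pairs $x \in V_1 \setminus V_2$, $y \in V_2 \setminus V_1$ invokes the no-$k$-algebraicity assumption to conclude that the stabilizer of $g_2(\tuple u^2)$ (a $k$-element set, since $\tuple u^2$ is an injective $k$-tuple) acts with non-trivial orbits outside its support. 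But no $k$-algebraicity is a statement about $(k-1)$-element stabilizers only: the paper's own example of a $2$-neoliberal structure --- the countable equivalence relation with classes of fixed size $m>2$ --- has a $2$-element stabilizer with an extra fixed point (the stabilizer of two elements in a common class fixes the remaining elements of that class). So the stabilizer you need can have trivial orbits, and the freedom in $\alpha$ you rely on can be absent. Whether the finite-duality hypothesis rules this situation out is a separate and non-trivial question you would have to address, and your sketch does not.

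The paper takes a different route that avoids item (1) for $\phi$ entirely. Rather than applying \Cref{cor:noequality} to $\phi$, it introduces a stronger pp-formula $\psi$ over the full variable set $V_1 \cup V_2$ which pins down the orbit of every injective tuple of length at most $k$ from $V_1$ according to $f$ and from $V_2$ according to $g$; all these orbits are injective because $f,g$ are injective. It then \emph{proceeds as in the proof} of \Cref{cor:noequality} (the maximality argument leading to a forbidden $(T,=)$-implication via \Cref{lemma:l+1=rel,lemma:l+2=rel}) applied to $\psi$, rather than invoking the corollary's statement, to obtain an injective $h \in \psi^{\sA}$. By $k$-homogeneity, $h|_{V_1}$ lies in the orbit of $f$ and $h|_{V_2}$ in that of $g$, so $h \in (\phi_1 \wedge \phi_2)^{\sA}$ and its restriction is the desired injective $O_1O_3$-mapping. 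The within-$V_1$ and within-$V_2$ injectivity constraints are enforced by $\psi$ from the outset, and the cross-pair issue is handled by the maximality/contradiction mechanism rather than by a preliminary verification of item (1). You should either adopt this route or find a genuinely valid argument that item (1) holds for $\phi$ under the lemma's hypotheses.
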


\noindent
Here comes a straightforward corollary of the  lemma above.
\begin{corollary}\label{cor:kimpl}
Let $\sA$, $\fG$, $\phi_1,\phi_2, \phi$ be as in~\Cref{lemma:kimpl}.
If $\phi_1$ and $\phi_2$ are injective implications. Then $\phi$ is a $(C,\tuple u^1,E,\tuple v^2)$-implication in $\sA$. Restricting $\phi^{\sA}$ to injective mappings, one moreover obtains an injective $(C,\tuple u^1,E,\tuple v^2)$-implication, which for all injective orbits $O_1\subseteq C, O_3\subseteq E$ under $\fG$ contains an $O_1 O_3$-mapping if, and only if, $\phi^{\sA}$ contains such mapping.
\end{corollary}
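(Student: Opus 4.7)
The plan is to derive this as a straightforward consequence of Lemma~\ref{lemma:kimpl} together with Corollary~\ref{cor:noequality}, exploiting the injectivity of $\phi_1$ and $\phi_2$ to discharge the remaining conditions. By Lemma~\ref{lemma:kimpl}, $\phi$ is already a $(C,\tuple u^1,E,\tuple v^2)$-pre-implication, so items~(2)--(5) of Definition~\ref{def:implication} hold for $\phi$. To upgrade to a genuine implication I need to verify item~(1): for every pair of distinct free variables $x,y$, $\proj_{(x,y)}(\phi^{\sA})$ must contain a non-diagonal tuple. Since every mapping in $\phi_i^{\sA}$ is injective, picking any compatible $g_1\in\phi_1^{\sA}, g_2\in\phi_2^{\sA}$ with $g_1(\tuple v^1)=g_2(\tuple u^2)$ (which exists thanks to the assumption $\proj_{\tuple v^1}(\phi_1^{\sA})=\proj_{\tuple u^2}(\phi_2^{\sA})$) and denoting the $\fG$-orbits of $g_1(\tuple u^1), g_1(\tuple v^1), g_2(\tuple v^2)$ by $O_1,O_2,O_3$, the moreover-part of Lemma~\ref{lemma:kimpl} produces an injective mapping in $\phi^{\sA}$; this mapping separates every pair of distinct free variables and establishes item~(1).

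Turning to the restriction of $\phi^{\sA}$ to its injective mappings, items~(1) and~(4) of Definition~\ref{def:implication} are immediate from injectivity and from item~(4) for $\phi$. For items~(2), (3) and~(5), I would start with a witness $g\in\phi^{\sA}$ supplied by the corresponding item for $\phi$, then apply Corollary~\ref{cor:noequality} (whose hypothesis is item~(1) for $\phi$, just established) to obtain an injective $h\in\phi^{\sA}$ with $h(\tuple u^1)$ and $h(\tuple v^2)$ lying in the same $\fG$-orbits as $g(\tuple u^1)$ and $g(\tuple v^2)$. Items~(2) and~(3) then follow directly, since they only require orbit-level containments and the pp-definable sets $C,E$ are $\Aut(\sA)$-invariant. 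For item~(5), where the precise target tuple $\tuple a\in E$ must be attained, I would post-compose $h$ with an automorphism of $\sA$ carrying $h(\tuple v^2)$ to $\tuple a$.

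The main obstacle I anticipate lies in this last step: Corollary~\ref{cor:noequality} guarantees orbit-equivalence under $\fG=\Aut(\sB)$, whereas item~(5) asks for a specific tuple to be hit by a mapping in $\phi^{\sA}$, a set invariant only under $\Aut(\sA)$. Resolving this depends on the compatibility of the $\Aut(\sA)$-orbits with the relevant $\fG$-orbits, which is typically routine in the neoliberal setting of the paper but still deserves care. Finally, the orbit-correspondence clause of the corollary reduces to the moreover-part of Lemma~\ref{lemma:kimpl}: the only-if direction is trivial, and whenever $\phi^{\sA}$ contains an $O_1O_3$-mapping the non-moreover part yields an intermediate orbit $O_2$ with $O_i O_{i+1}$-mappings in $\phi_i^{\sA}$, which are automatically injective by hypothesis, so the moreover-part delivers an injective $O_1O_3$-mapping in $\phi^{\sA}$ as required.
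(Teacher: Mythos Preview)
Your argument is correct and aligns with what the paper leaves implicit as a ``straightforward corollary'' of Lemma~\ref{lemma:kimpl}. The obstacle you anticipate in item~(5) dissolves once you note that $\Aut(\sA)=\Aut(\sB)=\fG$: since $\sA$ is a first-order expansion of the $\omega$-categorical structure $\sB$, every relation of $\sA$ is $\Aut(\sB)$-invariant, so post-composition by an element of $\fG$ keeps $h$ inside $\phi^{\sA}$ and keeps $h(\tuple u^1)$ inside the $\fG$-invariant set $C$. One small point you use implicitly and which is worth making explicit: for any $g\in\phi^{\sA}$ the tuples $g(\tuple u^1)$ and $g(\tuple v^2)$ are automatically injective (extend $g$ to $\tilde g\in(\phi_1\wedge\phi_2)^{\sA}$ and use that $\tilde g|_{V_i}\in\phi_i^{\sA}$ is injective), which is what lets Corollary~\ref{cor:noequality} apply to these subtuples. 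The detour through Corollary~\ref{cor:noequality} is in fact optional: combining the two parts of Lemma~\ref{lemma:kimpl} already shows that whenever $\phi^{\sA}$ contains an $O_1O_3$-mapping it contains an injective one, and items~(2), (3), (5) for the injective restriction then follow directly from the $\fG$-invariance of $C$ and $E$.
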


The following observation states a few properties of implications and their compositions which will be used later.

\begin{observation}\label{observation:impl_properties}
    Let $\sA$ be a relational structure, let $k\geq 2$, let $C\subseteq A^k$, let $\phi_1$ be a $(C,\tuple u^1,C,\tuple v^1)$-implication in $\sA$, and let $\phi_2$ be a $(C,\tuple u^2,C,\tuple v^2)$-implication in $\sA$. Let $p_1$ be the number of variables of $\phi_1$, let $p_2$ be the number of variables of $\phi_2$, and let $p$ be the number of variables of $\phi:=\phi_1\circ \phi_2$. Then all of the following hold.

    \begin{enumerate}
        \item $p\geq\max(p_1,p_2)$.
        \item $p=p_1=p_2$ if, and only if, $\scope(\tuple u^1)\cap\scope(\tuple v^2) = \scope(\tuple u^1)\cap\scope(\tuple v^1)=\scope(\tuple u^1)\cap\scope(\tuple u^2)\cap\scope(\tuple v^2)$.
        \item Suppose that $\phi_1=\phi_2$ and $p=p_1$. Then for every $i\in[k]$, it holds that if $v^1_i$ is contained in the intersection in (2), then so is $u^1_i$.
    \end{enumerate}
\end{observation}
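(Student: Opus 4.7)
The plan rests on the key combinatorial fact about the composition in Definition~\ref{def:composition}: after renaming the variables of $\phi_2$ so that $\scope(\tuple u^2)=\scope(\tuple v^1)$ and the only variables shared between $\phi_1$ and $\phi_2$ are those of $\scope(\tuple v^1)$, the free variable set of $\phi=\phi_1\circ\phi_2$ is exactly $\scope(\tuple u^1)\cup \scope(\tuple v^2)$. Crucially, since every variable appearing in both $\phi_1$ and $\phi_2$ must lie in $\scope(\tuple v^1)=\scope(\tuple u^2)$, we obtain the inclusion
\begin{equation*}
\scope(\tuple u^1)\cap\scope(\tuple v^2)\ \subseteq\ \scope(\tuple v^1).
\end{equation*}
Writing $U_1,V_1,U_2,V_2$ for the scopes of $\tuple u^1,\tuple v^1,\tuple u^2,\tuple v^2$ (all of cardinality $k$), this immediately yields $U_1\cap V_2\subseteq U_1\cap V_1$ (intersect with $U_1$) and $U_1\cap V_2\subseteq V_1\cap V_2$ (intersect with $V_2$). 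First I would establish this inclusion explicitly from the renaming convention; everything else flows from it.

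For item (1), I would simply compute $p=|U_1\cup V_2|=2k-|U_1\cap V_2|$, $p_1=2k-|U_1\cap V_1|$, and $p_2=2k-|V_1\cap V_2|$, and read off $p\geq p_1$ and $p\geq p_2$ directly from the two set inclusions above. For item (2), since the inclusions above hold, the equalities of cardinalities that express $p=p_1$ and $p=p_2$ are equivalent to equalities of the corresponding sets, and the triple equality stated is then easily checked to be equivalent by noting that $U_1\cap V_2=U_1\cap V_1\cap V_2$ is automatic (from $U_1\cap V_2\subseteq V_1$), so the stated condition reduces to $U_1\cap V_2=U_1\cap V_1$ together with $U_1\cap V_1\subseteq V_2$, and these two conditions together are exactly what the size equalities demand.

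For item (3), assuming $\phi_1=\phi_2$, I would make the renaming explicit: the bijection that sends $u^1_j\mapsto v^1_j$ on $U_1$ and sends each variable of $V_1\setminus U_1$ to a fresh variable. Applying this bijection to $\tuple v^2=\tuple v^1$ gives that, for each $i\in[k]$, the $i$-th entry of the renamed $\tuple v^2$ equals $v^1_j$ precisely when $v^1_i=u^1_j$ for some (necessarily unique) $j$, and is fresh otherwise. Thus a variable $v^1_i\in U_1\cap V_1$ belongs to $\scope(\tuple v^2)$ (and so to the intersection of (2)) if and only if $u^1_i\in V_1$, i.e.\ $u^1_i\in U_1\cap V_1$. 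Under the hypothesis $p=p_1$ (equivalently $U_1\cap V_2 = U_1\cap V_1$), every such $v^1_i$ does lie in $\scope(\tuple v^2)$, and the desired implication drops out.

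The only mildly delicate point is bookkeeping: one has to be careful that the renaming is applied uniformly to the whole formula $\phi_2$ (so an element of $V_1\cap U_1$ appearing both in $\tuple u^2$ and in $\tuple v^2$ gets relabelled consistently), which is what forces the central inclusion $U_1\cap V_2\subseteq V_1$. Once that is handled, all three items are purely combinatorial consequences of sizes of symmetric-difference-style unions, with no use of the polymorphism hypotheses or of the structure $\sA$ itself.
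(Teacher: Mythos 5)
Your approach is the same as the paper's, and your arguments for items (1) and (3) are correct. Item (2), however, has a gap: you observe that the stated triple equality reduces, using the automatic inclusion $\scope(\tuple u^1)\cap\scope(\tuple v^2)\subseteq\scope(\tuple v^1)$, to the conjunction of $\scope(\tuple u^1)\cap\scope(\tuple v^2)=\scope(\tuple u^1)\cap\scope(\tuple v^1)$ and $\scope(\tuple u^1)\cap\scope(\tuple v^1)\subseteq\scope(\tuple v^2)$, and then assert that this conjunction is exactly $p=p_1=p_2$. But these two conditions are equivalent to each other, each being equivalent to $p=p_1$ alone: the first implies the second by the inclusion into $\scope(\tuple v^2)$, and the second implies the first since $\scope(\tuple u^1)\cap\scope(\tuple v^2)=\scope(\tuple u^1)\cap\scope(\tuple v^1)\cap\scope(\tuple v^2)$ holds automatically. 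The condition $p=p_2$, which amounts to $\scope(\tuple v^1)\cap\scope(\tuple v^2)\subseteq\scope(\tuple u^1)$, is an independent requirement not captured by the stated equalities. Concretely, with $k=2$, $\tuple u^1=(a,b)$, $\tuple v^1=\tuple u^2=(c,d)$, $\tuple v^2=(c,e)$ with $e$ fresh, all three intersections in the stated condition are empty, so the condition holds, yet $p=p_1=4$ while $p_2=3$.

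This slip is inherited from the paper, whose own statement and proof of (2) commit the same error; the third term was presumably meant to read $\scope(\tuple u^2)\cap\scope(\tuple v^2)$ rather than $\scope(\tuple u^1)\cap\scope(\tuple u^2)\cap\scope(\tuple v^2)$, and with that correction both your argument and the paper's go through. The only place item (2) is used is in the proof of item (3), where the hypothesis $\phi_1=\phi_2$ forces $p_1=p_2$ anyway and only the ``only if'' direction is invoked, so nothing downstream is affected. Still, you should have noticed that your two derived conditions collapse to one and therefore cannot encode both size equalities.
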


\subsection{Digraphs of implications}

We reformulate the notion of digraph of implications from~\cite{Wrona:2020b} and prove a few auxiliary statements about these digraphs.

\begin{definition}\label{def:impl}
Let $k\geq 2$, 
let $\sA$ be a first-order expansion of a $k$-neoliberal relational structure $\sB$. Let $\emptyset\neq C\subseteq A^k$, and let $\phi$ be $(C,\tuple u,C,\tuple v)$-implication in $\sA$ such that $\proj_{\tuple u}(\phi^{\sA})=\proj_{\tuple v}(\phi^{\sA})=:E$. Let $\mathbf{Vert}(E)$ be the set of all orbits under $\Aut(\sB)$ contained in $E$.
Let $\mathcal{B}_{\phi}\subseteq \mathbf{Vert}(E)\times \mathbf{Vert}(E)$ be the directed graph such that $\mathcal{B}_{\phi}$ contains an arc $(O,P)\in \mathbf{Vert}(E)\times \mathbf{Vert}(E)$ if $\phi^{\sA}$ contains an $OP$-mapping.

We say that $S \subseteq \mathbf{Vert}(E)$ is a \emph{strongly connected component} if it is a maximal set with respect to inclusion such that for all (not necessary distinct) vertices $O,P\in S$, there exists a path in $\mathcal{B}_{\phi}$ connecting $O$ and $P$. We say that $S$ is a \emph{sink} in $\mathcal B_{\phi}$ if every arc originating in $S$ ends in $S$; $S$ is a \emph{source} in $\mathcal B_{\phi}$ if every arc finishing in $S$ originates in $S$.
\end{definition}

Note that the digraph $\mathcal B_\phi$ can be defined also for relational structures which do not satisfy the assumptions on $\sA$ from~\Cref{def:impl}; however, these assumptions are needed in the proof of~\Cref{lemma:kcomplete} so we chose to include them already in~\Cref{def:impl}. Note also that $\mathcal{B}_\phi$ can contain vertices which are not contained in any strongly connected component.

\begin{observation}\label{observation:sinksource}
Let $\phi$ be as in~\Cref{def:impl}.  
Then there exist strongly connected components $S_1\subseteq\mathbf{Vert}(C),S_2\subseteq\mathbf{Vert}(E\backslash C)$ in $\mathcal B_\phi$ such that $S_1$ is a sink in $\mathcal{B}_{\phi}$, and $S_2$ is a source in $\mathcal{B}_{\phi}$.
Moreover, since $\proj_{\tuple u}(\phi^{\sA})=\proj_{\tuple v}(\phi^{\sA})=E$, any vertex $O\in \mathbf{Vert}(E)$ has an outgoing and an incoming arc, i.e., the digraph $\mathcal{B}_{\phi}$ is \emph{smooth}.
\end{observation}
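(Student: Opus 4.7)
The plan is to first note a basic compatibility: since $\sA$ is a first-order expansion of $\sB$ and $\sB$ is $\omega$-categorical, we have $\Aut(\sA)=\Aut(\sB)$, so every pp-definable relation in $\sA$, as well as the set $\phi^{\sA}$, is preserved by $\Aut(\sB)$. In particular both $C$ and $E=\proj_{\tuple u}(\phi^{\sA})=\proj_{\tuple v}(\phi^{\sA})$ are unions of $\Aut(\sB)$-orbits, so $\mathbf{Vert}(E)$ partitions as the disjoint union $\mathbf{Vert}(C)\sqcup \mathbf{Vert}(E\setminus C)$. Oligomorphicity makes $\mathbf{Vert}(E)$ finite, which is what will ultimately let me pass to condensations.

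Next I would verify smoothness. Given $O\in\mathbf{Vert}(E)$, any $\tuple a\in O$ lies in $E=\proj_{\tuple u}(\phi^{\sA})$, so some $f\in\phi^{\sA}$ realizes $f(\tuple u)=\tuple a$; then $f(\tuple v)$ lies in some orbit $P\in\mathbf{Vert}(E)$, producing an outgoing arc $(O,P)$. The incoming-arc half is symmetric, using $\proj_{\tuple v}(\phi^{\sA})=E$.

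For the existence of $S_1$ and $S_2$, the key structural fact is that, by clause~(4) of \Cref{def:implication} applied with $D=C$, any $f\in\phi^{\sA}$ with $f(\tuple u)\in C$ must satisfy $f(\tuple v)\in C$; so arcs from $\mathbf{Vert}(C)$ stay inside $\mathbf{Vert}(C)$. Consequently the subgraph induced on $\mathbf{Vert}(C)$ is nonempty (as $C\neq\emptyset$), finite, and every vertex has an outgoing arc that remains in $\mathbf{Vert}(C)$; its condensation is a finite DAG and hence contains a sink SCC, which gives $S_1$. Because no arcs leave $\mathbf{Vert}(C)$, this $S_1$ is a sink in all of $\mathcal{B}_\phi$. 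For $S_2$, note that $\mathbf{Vert}(E\setminus C)$ is nonempty by clause~(2) of \Cref{def:implication}, and the same closure fact shows that any arc ending in $\mathbf{Vert}(E\setminus C)$ must originate in $\mathbf{Vert}(E\setminus C)$. Combined with smoothness, the induced subgraph on $\mathbf{Vert}(E\setminus C)$ has, at every vertex, an incoming arc from inside, so its condensation is a finite DAG which contains a source SCC, yielding $S_2$; and $S_2$ is a source in $\mathcal{B}_\phi$ as well, since arcs from $\mathbf{Vert}(C)$ cannot reach it.

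I do not anticipate a serious obstacle here: the statement is really a consequence of two ingredients, namely the implication property (4) (forcing closure of $\mathbf{Vert}(C)$ under outgoing arcs) and the surjectivity conditions (forcing smoothness), together with finiteness of $\mathbf{Vert}(E)$. The only subtle point worth double-checking is that $C$ really is a union of full $\Aut(\sB)$-orbits rather than only $\Aut(\sA)$-orbits; this reduces to the identification $\Aut(\sA)=\Aut(\sB)$ noted at the beginning.
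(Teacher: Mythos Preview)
Your proposal is correct and follows essentially the same route as the paper's proof: both argue that $\mathbf{Vert}(C)$ is closed under outgoing arcs (using clause~(4) of \Cref{def:implication}), that $\mathbf{Vert}(E\setminus C)$ is closed under incoming arcs, and then extract the desired sink and source SCCs from the finite induced subgraphs. Your write-up is in fact more explicit than the paper's on several points, such as why $C$ decomposes into $\Aut(\sB)$-orbits and why smoothness together with finiteness guarantees that the terminal components of the condensation are genuine SCCs in the paper's sense (i.e., contain a cycle); the paper simply asserts that the induced subgraph on $\mathbf{Vert}(C)$ is ``finite and smooth'' and leaves the rest implicit.
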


Let $\phi$ be as in~\Cref{def:impl}, and set $I_\phi:=\{i\in[k]\mid u_i\in\scope(\tuple v)\}$. If the number of variables of $\phi$ is equal to the number of variables of $\phi\circ \phi$, then item (3) in~\Cref{observation:impl_properties} yields that $I_\phi=\{i\in[k]\mid v_i\in\scope(\tuple u)\}$. 

\begin{definition}\label{def:kcomplimpl}
Let $\phi$ be as in~\Cref{def:impl}. We say that $\phi$ is \emph{complete} if the number of variables of $\phi\circ \phi$ is equal to the number of variables of $\phi$, $u_i=v_i$ for every $i\in I_\phi$, and each strongly connected component of $\mathcal{B}_{\phi}$ contains all possible arcs including loops.
\end{definition}

The following is a modification of Lemma 36 in \cite{Wrona:2020b}:

\begin{lemma}\label{lemma:kcomplete}
Let $\phi$ be an injective implication as in~\Cref{def:impl}. If $\sB$ pp-defines $I^B_2$, then there exists a complete injective $(C,C)$-implication in $\sA$.
\end{lemma}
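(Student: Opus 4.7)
My plan is to start from the given injective $(C,C)$-implication $\phi$ and iterate its self-composition, refining in three stages so as to achieve each of the three conditions in Definition~\ref{def:kcomplimpl}.

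I will first show that the number of free variables of $\phi^{\circ n}$ is non-decreasing in $n$: the gluing $\tuple v^1 = \tuple u^2$ forces any variable shared between $\scope(\tuple u^1)$ and $\scope(\tuple v^2)$ to be shared through $\tuple v^1$, so $|\scope(\tuple u^1) \cap \scope(\tuple v^2)| \leq |\scope(\tuple u) \cap \scope(\tuple v)|$ and the free-variable count can only weakly increase. Since it is bounded above by $2k$, it stabilizes at some $N_0$, and I set $\psi := \phi^{\circ N_0}$. By Corollary~\ref{cor:kimpl}, $\psi$ is an injective $(C,C)$-implication; by item~(2) of Observation~\ref{observation:impl_properties}, $\psi$ satisfies the first completeness condition.

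Next, for each $i \in I_\psi$ I define $\sigma(i) \in [k]$ by $u_i = v_{\sigma(i)}$; the injectivity of $\tuple u$ and $\tuple v$ implies $\sigma$ is injective on $I_\psi$. From the variable-count equality established above, one verifies that $I_{\psi \circ \psi} = I_\psi$ and consequently $\sigma(I_\psi) \subseteq I_\psi$, making $\sigma$ a permutation of $I_\psi$. Tracing through the gluing, the analogous map for $\psi^{\circ n}$ is $\sigma^n$. Setting $\psi' := \psi^{\circ d}$ with $d$ the order of $\sigma$ yields $u_i = v_i$ for all $i \in I_{\psi'}$, and by Corollary~\ref{cor:kimpl}, $\psi'$ is still an injective $(C,C)$-implication.

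Finally, to complete the strongly connected components: by Observation~\ref{observation:sinksource}, $\mathcal{B}_{\psi'}$ is smooth, and Lemma~\ref{lemma:kimpl} applied inductively identifies $\mathcal{B}_{(\psi')^{\circ M}}$ with the digraph whose arcs are walks of length $M$ in $\mathcal{B}_{\psi'}$. By a standard periodicity result on smooth digraphs, for $M$ a sufficiently large common multiple of the periods of all strongly connected components of $\mathcal{B}_{\psi'}$, each strongly connected component of the $M$-step walk-digraph is a complete digraph containing all loops. Setting $\psi'' := (\psi')^{\circ M}$ gives the desired implication: the first two completeness conditions persist since $\sigma^M = \mathrm{id}$ and the variable count is stable under further iteration. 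The pp-definability of $I^B_2$ is used in the background, ensuring that injectivity is a pp-expressible condition and hence preserved by Corollary~\ref{cor:kimpl} throughout. The main obstacle will be the bookkeeping in the alignment step, namely showing that the variable-count equality forces $\sigma$ to be a permutation of $I_\psi$; once this is established, the rest reduces to a clean finite-order iteration combined with the classical smooth-digraph periodicity argument.
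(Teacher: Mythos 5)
Your proof is correct and follows essentially the same three-stage scheme as the paper: take a power of $\phi$ to stabilize the number of variables, take a further power to make the permutation on $I_\phi$ the identity, and take a final power to make every strongly connected component of the implication digraph a complete digraph with loops. The only (minor) difference is in the last stage, where the paper takes three successive powers of the implication (to maximize the number of strongly connected components, then to add all loops, then to add all arcs) while you directly pick a sufficiently large common multiple of the periods of the strongly connected components and invoke periodicity of smooth digraphs; these are routine reformulations of the same idea.
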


\subsection{Critical relations}

We adapt the notion of a critical relation from~\cite{Wrona:2020b} to our situation and use results from the previous sections to prove that any structure which satisfies the assumptions of~\Cref{thm:implsimple} and which is implicationally hard on injective instances pp-defines a critical relation.

\begin{definition}\label{def:kcritical}
Let $k\geq 2$, and let $\sA$ be a relational structure. Let $C,D\subseteq I_k^A$ be disjoint and pp-definable from $\sA$, let $V$ be a set of $k+1$ variables, and let $\tuple u,\tuple v\in I^V_k$ be such that $\scope(\tuple u)\cup\scope(\tuple v)=V$ and such that $u_1,v_1\notin \scope(\tuple u)\cap \scope(\tuple v)$.
We say that a pp-formula $\phi$ over the signature of $\sA$ with variables from $V$ is \emph{critical in $\sA$ over $(C,D,\tuple u,\tuple v)$} if all of the following hold:
\begin{enumerate}
    \item $\phi$ is a complete $(C,\tuple u,C,\tuple v)$-implication in $\sA$,
    \item 
    $\phi^A$ contains every    $f \in A^{S(\tuple u) \cup S(\tuple v)}$ with $f(\tuple u) \in C$ and $f(\tuple v) \in C$
    and the same holds for $D$ in the place of $C$. 
\end{enumerate}
\end{definition}

\begin{lemma}\label{cor:critical}
    Let $k\geq 2$, let $\sB$ be $k$-neoliberal and pp-define $I^{B}_2$, and suppose that $\sB$ has finite duality. Suppose that $\relstr A$ is a CSP template which is a first-order expansion of $\sB$. Suppose moreover that $\sA$ is implicationally hard on injective instances. 
    Then there exists a pp-formula over the signature of $\sA$ which is critical in $\sA$ over $(C,D,\tuple u,\tuple v)$ for some $C,D\subseteq 
    A^k$.
\end{lemma}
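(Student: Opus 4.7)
The plan is to start from a cycle in the injective implication graph and progressively refine it into a critical relation. Since $\sA$ is implicationally hard on injective instances, $\mathcal{G}_{\sA}^{\injinstances}$ contains a directed cycle. Composing the implications labelling the arcs of this cycle via~\Cref{cor:kimpl} yields an injective $(C_1,\tuple u^*,C_1,\tuple v^*)$-implication $\rho$ for some pp-definable $C_1\subseteq I_k^A$, satisfying $\proj_{\tuple u^*}(\rho^{\sA})=\proj_{\tuple v^*}(\rho^{\sA})=:E$. Applying~\Cref{lemma:kcomplete} to $\rho$ (available because $\sB$ pp-defines $I^B_2$) produces a complete injective $(C_1,\tuple u,C_1,\tuple v)$-implication $\psi$.

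By~\Cref{observation:sinksource}, $\mathcal B_\psi$ has a strongly connected sink $S_1\subseteq\mathbf{Vert}(C_1)$ and a strongly connected source $S_2\subseteq\mathbf{Vert}(E\setminus C_1)$; let $C$ (resp.\ $D$) be the union of the orbits in $S_1$ (resp.\ $S_2$). Both $C$ and $D$ are pp-definable from $\sA$: using completeness of $\psi$, fixing any orbit $O\in S_1$, $C$ equals the set of $\tuple a\in E$ whose $\Aut(\sB)$-orbit has $\mathcal B_\psi$-paths both to and from $O$, and analogously for $D$; these conditions are pp-expressible via suitable compositions of $\psi$. By construction $C$ and $D$ are disjoint subsets of $I_k^A$, and $\psi$ restricts to a complete $(C,\tuple u,C,\tuple v)$-implication and a complete $(D,\tuple u,D,\tuple v)$-implication.

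Let $d\geq 3$ be chosen using the finite duality of $\sB$ so that every finite $\tau$-structure whose substructures of size $\leq d-2$ all map homomorphically to $\sB$ itself maps homomorphically to $\sB$. Set $\phi:=\psi^{\circ d}$. By~\Cref{observation:impl_properties}(3) together with completeness of $\psi$, $\phi$ remains a complete $(C,\tuple u',C,\tuple v')$-implication, and likewise for $D$. The side condition $u'_1,v'_1\notin\scope(\tuple u')\cap\scope(\tuple v')$ required by~\Cref{def:kcritical} is arranged afterwards by permuting coordinates, using that $\phi$ has strictly more than $k$ variables, so that each of $\scope(\tuple u')\setminus\scope(\tuple v')$ and $\scope(\tuple v')\setminus\scope(\tuple u')$ is non-empty.

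The hard part is verifying item~(2) of~\Cref{def:kcritical}: every $f\in A^{\scope(\tuple u')\cup\scope(\tuple v')}$ with $f(\tuple u'),f(\tuple v')\in C$ belongs to $\phi^{\sA}$, and the same for $D$. For this one unrolls $\phi$ as $\psi_1\wedge\cdots\wedge\psi_d$ over intermediate tuples $\tuple w^0=\tuple u',\tuple w^1,\ldots,\tuple w^d=\tuple v'$ (with variables in $\scope(\tuple u)\cap\scope(\tuple v)$ shared across consecutive layers, as permitted by completeness), identifies variables of $\tuple w^0$ and $\tuple w^d$ according to the equalities imposed by $f$ (this is consistent because $f(\tuple u'),f(\tuple v')$ are injective $k$-tuples in $C$), and constructs injective witnesses $f_j\in\psi_j^{\sA}$ connecting the orbit of $f(\tuple u')$ to that of $f(\tuple v')$ through orbits inside $C$, exploiting strong connectedness of $S_1$ and completeness of $\psi$. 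One then encodes the orbital data of these witnesses into a single $\tau$-structure $\sX$ on the union of the $\scope(\tuple w^j)$: any substructure of $\sX$ of size at most $d-2$ omits at least one layer $\tuple w^j$, so the remaining witnesses can be composed via $k$-homogeneity to yield a homomorphism into $\sB$. Finite duality of $\sB$ then produces a homomorphism $\sX\to\sB$ extending $f$, and $k$-homogeneity delivers the desired element of $\phi^{\sA}$. This finite-duality patching, which is essentially the content of the commented-out auxiliary lemma, is the technical heart of the proof.
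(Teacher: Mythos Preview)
Your argument tracks the paper's proof closely and the finite-duality patching you sketch is exactly right. However, there is a genuine gap at the step where you produce the critical formula itself.

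\Cref{def:kcritical} requires the variable set $V$ to have exactly $k+1$ elements. Your $\phi=\psi^{\circ d}$ has the same number of variables as $\psi$ (by completeness), and this number can be anything between $k+1$ and $2k$. Saying that the side condition $u'_1,v'_1\notin\scope(\tuple u')\cap\scope(\tuple v')$ is ``arranged afterwards by permuting coordinates'' does not address this: permuting coordinates does not reduce the number of variables. What the paper does is \emph{identify} $u_i$ with $v_i$ for every $i\neq 1$ (this is consistent because completeness gives $u_i=v_i$ on $I_\phi$, and \Cref{observation:impl_properties}(3) ensures the identification is well-behaved), producing a formula $\rho'$ on exactly $k+1$ variables.

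This identification is not free: one then has to re-verify that $\rho'$ is still a $(C,\tuple u,C,\tuple v)$-implication. Items~(2), (3), (5) of \Cref{def:implication} and item~(2) of \Cref{def:kcritical} are not automatic for the collapsed formula, because a priori the identification could make $(\rho')^{\sA}$ too small. The paper shows that for every orbit $O\subseteq C$ (and similarly $O\subseteq D$) there is an \emph{injective} $g\in A^W$ with $g(\tuple u),g(\tuple v)\in O$, using $(k-1)$-transitivity and no $k$-algebraicity of $\Aut(\sB)$; then the already-established item~(2) for $\rho$ (your finite-duality argument) puts $g$ into $(\rho')^{\sA}$. This is where neoliberality is actually used to close the argument, and it is missing from your outline.

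A minor point: $\psi$ is not a $(D,\tuple u,D,\tuple v)$-implication, since $D$ is a source, not a sink, in $\mathcal B_\psi$; what you need (and what you actually use) is only that $\mathbf{Vert}(D)$ is a strongly connected component, so that injective $PP$-mappings exist for $P\subseteq D$.
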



\begin{proof}
Since the structure $\sA$ is implicationally hard on injective instances, the injective implication graph $\mathcal{G}_{\sA}^{\injinstances}$ contains a directed cycle $(D_1,C_1),\ldots,(D_{n-1},C_{n-1}),(D_n,C_n)=(D_1,C_1)$. This means that for all $i\in[n-1]$, there exists an injective $(C_i,\tuple u^i,C_{i+1},\tuple u^{i+1})$-implication $\phi_i$ in $\sA$ with $\proj_{\tuple u^i}(\phi_i^{\sA})=D_i$, and $\proj_{\tuple u^{i+1}}(\phi_i^{\sA})=D_{i+1}$.

Let us define $\phi:=((\phi_1\circ \phi_2)\circ\ldots\circ \phi_{n-1})$. Restricting $\phi^{\sA}$ to injective mappings, we obtain an injective $(C_1,\tuple u^1,C_1,\tuple u^n)$-implication by~\Cref{lemma:kimpl}. \Cref{lemma:kcomplete} asserts us that there exists a complete injective $(C_1,C_1)$-implication $\psi$ in $\sA$.

\Cref{observation:sinksource} yields that there exist $C\subseteq C_1$, and $D\subseteq D_1\backslash C_1$ such that $\mathbf{Vert}(C)$ is a strongly connected component which is a sink in $\mathcal{B}_{\psi}$, and $\mathbf{Vert}(D)$ is a strongly connected component which is a source in $\mathcal{B}_{\psi}$. 
Observe that since $\sA$ is a first-order expansion of $\sB$ and since $\psi$ is complete, $C$ is pp-definable from $\sA$. Indeed, for any fixed orbit $O\subseteq C$ under $\Aut(\sB)$, $C$ is equal to the set of all orbits $P\subseteq C_1$ such that $\psi^{\sA}$ contains an $O P$-mapping. We can observe in a similar way that $D$ is pp-definable from $\sA$ as well. 
Moreover, $\psi$ is easily seen to be a complete $(C,C)$-implication in $\sA$.

Since $\sB$ has finite duality, there exists a number $d\geq 3$ such that for every finite structure $\sX$ in the signature of $\sB$, it holds that if every substructure of $\sX$ of size at most $d-2$ maps homomorphically to $\sB$, then so does $\sX$. 
Set $\rho:=\psi^{\circ d}$. Let $V$ be the set of variables of $\rho$. It follows from~\Cref{lemma:kimpl} that $\rho$ is
a $(C,\tuple u,C,\tuple v)$-implication in $\sA$ for some $\tuple u,\tuple v$.
\medskip

We are now going to prove that the second bullet in~\Cref{def:kcritical} holds for $\rho$  using the finite duality of $\sB$ and the completeness of $\psi$.
To this end,
let $f\in A^V$ be as specified there. 
Up to renaming variables, we can assume that $\psi$ is a $(C,\tuple u,C,\tuple v)$-implication. 
Completeness of $\psi$ implies that $I_\psi=\{i\in[k]\mid u_i\in\scope(\tuple v)\}=\{i\in[k]\mid v_i\in\scope(\tuple u)\}$, and that $u_i=v_i$ for every $i\in I_\psi$. Let $\tuple w^0,\ldots,\tuple w^d$ be $k$-tuples of variables such that $w^j_i=u_i=v_i$ for all $i\in I_\psi$ and all $j\in\{0,\ldots,d\}$, and disjoint otherwise. We can moreover assume that $\tuple w^0=\tuple u$, $\tuple w^d=\tuple v$. 
For every $j\in\{0,\ldots,d-1\}$, let $\psi_{j+1}$ be the $(C,\tuple w^j,C,\tuple w^{j+1})$-implication obtained from $\psi$ by renaming $\tuple u$ to $\tuple w^j$ and $\tuple v$ to $\tuple w^{j+1}$. 
It follows that $\rho$ is equivalent to the formula obtained from $\psi_1\wedge\dots\wedge \psi_d$ by existentially quantifying all variables that are not contained in $\scope(\tuple u)\cup\scope(\tuple v)$. 
In order to prove that the required propert of a critical relation holds, it is therefore enough to show that $f$ can be extended to a mapping $h\in (\psi_1\wedge\dots\wedge\psi_d)^{\sA}$.

Now, for all $p,q\in[k]$, we identify the variables $w^0_{q}=u_q$ and $w^{d}_{p}=v_p$ if $f(v_{p}) = f(u_{q})$. Observe that $f|_{\scope(\tuple u)\cap \scope(\tuple v)}$ is injective since $f(\tuple u)\in C\subseteq I^A_k$, and hence this identification does not force any variables from $\scope(\tuple u)\cap\scope (\tuple v)$ to be equal. Moreover, since $d\geq 2$, this identification does not identify any variables from the tuples $\tuple w^{d-1}$ and $\tuple w^d$. Let us define $f_0:=f|_{\scope(\tuple w^0)\cup \scope(\tuple w^d)}$. 
It is enough to show that $f_0$ can be extended to a mapping $h\in (\psi_1\wedge \dots\wedge \psi_d)^{\sA}$.

Let $O$ be the orbit of $f(\tuple u)$ under $\Aut(\sB)$, and let $P$ be the orbit of $f(\tuple v)$. Let $f_1\in \psi_1^{\sA}$ be an injective $O P$-mapping, and for every $j\in\{2,\ldots,d\}$, let $f_j\in \psi_j^{\sA}$ be an injective $P P$-mapping. 
Note that such $f_j$ exists for every $j\in[d]$ since $\psi_j$ is complete and since $\mathbf{Vert}(C)$ and $\mathbf{Vert}(D)$ are strongly connected components in $\mathcal{B}_{\psi_j}$. 

Let $\tau$ be the signature of $\sB$. Let $X:=\scope(\tuple w^0)\cup\cdots\cup \scope(\tuple w^d)$, and let us define a $\tau$-structure $\sX$ on $X$ as follows. Recall that the relations from $\tau$ correspond to the orbits of injective $k$-tuples under $\Aut(\sB)$. For every relation $R\in\tau$, we define $R^{\sX}$ to be the set of all tuples $\tuple w\in I^X_k$ such that there exists $j\in \{0,\ldots,d\}$ such that $\scope(\tuple w)\subseteq \scope(\tuple w^j)\cup\scope(\tuple w^{j+1})$ and $f_j(\tuple w)\in R$; here and in the following, the addition $+$ on indices is understood modulo $d+1$. We will show that $\sX$ has a homomorphism $h$ to $\sB$. If this is the case, it follows by the construction and by the $k$-homogeneity of $\Aut(\sB)$ that $h\in (\psi_1\wedge\dots\wedge\psi_d)^{\sA}$. Moreover, we can assume that $h|_{\scope(\tuple w^0)\cup\scope(\tuple w^d)}=f_0$ as desired.

Let now $Y\subseteq X$ be of size at most $d-2$. 
Then for cardinality reasons there exists $j\in[d-1]$ such that $Y\subseteq \scope(\tuple w^0)\cup\cdots\cup \scope(\tuple w^{j-1})\cup \scope(\tuple w^{j+1})\cup\cdots\cup \scope(\tuple w^{d})$. 
Observe that $f_{j+1}$ is a homomorphism from the induced substructure of $\sX$ on $\scope(\tuple w^{j+1})\cup \scope(\tuple w^{j+2})$ to $\sA$. 
Since the orbits of $f_{j+1}(\tuple w^{j+2})$ and $f_{j+2}(\tuple w^{j+2})$ agree by definition, by composing $f_{j+2}$ with an element of $\Aut(\sB)$ we can assume that $f_{j+1}(\tuple w^{j+2})=f_{j+2}(\tuple w^{j+2})$. 
We can proceed inductively and extend $f_{j+1}$ 
to $\scope(\tuple w^0)\cup\cdots\cup \scope(\tuple w^{j-1})\cup \scope(\tuple w^{j+1})\cup\cdots\cup \scope(\tuple w^{d})$ such that it is a homomorphism from the induced substructure of $\sX$ on this set to $\sA$. It follows that the substructure of $\sX$ induced on $Y$ maps homomorphically to $\sA$. Finite duality of $\sB$ yields that $\sX$ has a homomorphism to $\sA$ as desired. It follows that $\rho$ has the property in the second bullet in\Cref{def:kcritical}.

Assume without loss of generality that $1\notin I_\rho$, and identify $u_i$ with $v_i$ for every $i\neq 1$. Note that this is possible by item (3) in \Cref{observation:impl_properties}, and since $\rho$ is easily seen to be complete.
Set $W:=\scope(\tuple u)\cup\scope(\tuple v)$, and let $\rho'$ be the formula arising from $\rho$ by this identification.
Observe that for every orbit $O\subseteq C$ under $\Aut(\sB)$, $(\rho')^{\sA}$ contains an injective $OO$-mapping. Indeed, there exists an injective $g\in A^W$ such that $g(\tuple u)\in O$ and $g(\tuple v)\in O$; this easily follows by the $(k-1)$-transitivity of $\Aut(\sB)$ and by the fact that it has no $k$-algebraicity. Hence 
$g\in\rho^{\sA}$, whence $g\in (\rho')^{\sA}$. 
It follows that $\rho'$ satisfies the second bullet in~\Cref{def:kcritical}. 
We will now argue that $\rho'$ is a $(C,\tuple u,C,\tuple v)$-implication in $\sA$. It immediately follows that $\rho'$ satisfies the items (1)-(3) and (5) from \Cref{def:implication}. Moreover, the satisfaction of item (4) follows immediately from the fact that $\rho$ is a $(C,\tuple u,C,\tuple v)$-implication in $\sA$.
It completes the proof that $\rho'$ is critical in $\sA$ over $(C,D,\tuple u,\tuple v)$ and hence the proof of the lemma.

\end{proof}

\subsection{Proof of \texorpdfstring{\Cref{thm:implsimple}}{Theorem~\ref{thm:implsimple}}}

We prove that no structure which satisfies the assumptions of~\Cref{thm:implsimple} can pp-define a critical relation. Using the results from the previous section, we get that no such structure can be implicationally hard on injective instances which immediately yields~\Cref{thm:implsimple}.
We start from $k$-neoliberal structures with $k \geq 3$ and reach to the case with $k=2$ later.

\begin{lemma}\label{lemma:kcritical}
Let $k\geq 3$, and let $\sA$ be a first-order expansion of a $k$-neoliberal relational structure $\sB$. Suppose that there exists a pp-formula $\phi$ which is critical in $\sA$ over $(C,D,\tuple u,\tuple v)$ for some $k$-ary $C,D$, and some $\tuple u,\tuple v$. Then $\sA$ is not preserved by any chain of quasi J\'{o}nsson operations.
\end{lemma}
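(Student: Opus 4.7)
The plan is to mirror the strategy of Lemma~\ref{lemma:l+1=rel} and derive a contradiction from the assumed existence of a chain $(J_1,\dots,J_{2n+1})$ of quasi J\'onsson polymorphisms of $\sA$. Let $R:=\phi^{\sA}$ be the $(k+1)$-ary relation pp-defined by the critical formula, with variable set $V=\scope(\tuple u)\cup\scope(\tuple v)$ of which exactly $k-1$ are shared between $\tuple u$ and $\tuple v$. By Observation~\ref{obs:idemJonsson}, I may pass to a chain that is idempotent on a suitably large finite witness set $S$ chosen below, which is crucial for producing the final contradiction.

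The first technical step is to set up witness elements inside $S$. Since $C$ and $D$ are disjoint non-empty subsets of $I_k^A$ and $\Aut(\sB)$ is $(k-1)$-transitive with no $k$-algebraicity, I would first exhibit an injective $(k-1)$-tuple $(\alpha_2,\dots,\alpha_k)$ together with distinct elements $\alpha_1,\beta_1$ such that $(\alpha_1,\alpha_2,\dots,\alpha_k)$ lies in some orbit in $C$ and $(\beta_1,\alpha_2,\dots,\alpha_k)$ lies in some orbit in $D$. Next, using the finite duality of $\sB$ in precisely the manner of Lemma~\ref{lemma:l+1=rel}, I would produce a long enough sequence $a^1,\dots,a^n$ of values for the unshared coordinate $u_1$ (and its counterpart $v_1$) together with intermediate tuples in $R$ whose projections onto $(u_1,v_1)$ form a chain $(a^i,a^{i+1})$ connecting $\alpha_1$ back to itself through a detour via $\beta_1$. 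By the completeness clause in Definition~\ref{def:kcritical}, each such consecutive pair lifts to an actual tuple of $R$ whose remaining $k-1$ coordinates are drawn from $S$.

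The second step is an induction on the chain index that parallels Claim~\ref{claim:equall+1}: using the J\'onsson identities, closure of $R$ under each $J_i$, and the fact that whenever the three rows fed into $J_i$ have $\tuple u$-projection in $C$ (respectively $D$), the resulting row has its $\tuple v$-projection in $C$ (respectively $D$), I expect to conclude that $J_{2n+1}(\alpha_1,\beta_1,\beta_1)=\alpha_1$. The alternation identities $J_{2i-1}(x,y,y)=J_{2i}(x,y,y)$ and $J_{2i}(x,x,y)=J_{2i+1}(x,x,y)$ propagate the value through the chain, and the constraint $J_i(x,y,x)=J_i(x,x,x)$ combined with idempotency on $S$ keeps the ``shared'' coordinates constantly equal to the chosen elements of $S$. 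A final forward induction along the chain of tuples then yields $J_{2n+1}(\beta_1,\beta_1,\beta_1)=\alpha_1\neq\beta_1$, contradicting the idempotency of $J_{2n+1}$ on $S$.

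The main obstacle is the bookkeeping: while Lemma~\ref{lemma:l+1=rel} only propagated a single equality $x_\ell=x_{\ell+1}$ forced by a $(T,=)$-implication, here each step of the induction must preserve membership of a full $k$-tuple in $C$ or $D$, not merely a single value. The role of the $k-1$ shared coordinates is to act as \emph{frozen witnesses}: idempotency of the $J_i$ on $S$ keeps them fixed throughout the induction, so the propagation effectively reduces to controlling the single coordinate $v_1$, exactly as in Lemma~\ref{lemma:l+1=rel}. The subtle point will be verifying that the chain supplied by finite duality is long enough to support both forward and backward inductions \emph{and} respects the $C/D$-alternation needed to invoke the critical implication property at every step, which is where the assumption $k\geq 3$ (giving at least two shared coordinates) is likely to be used to produce the required flexibility.
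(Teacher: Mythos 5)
Your proposal does capture the family resemblance to Lemma~\ref{lemma:l+1=rel} (tracking how Jónsson identities propagate membership along a sequence of related tuples, finishing with an idempotency-type contradiction), but there are two concrete problems that prevent it from working as stated.

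First, Lemma~\ref{lemma:kcritical} does \emph{not} assume that $\sB$ has finite duality. Your plan to "produce a long enough sequence $a^1,\dots,a^n$ \dots using the finite duality of $\sB$ in precisely the manner of Lemma~\ref{lemma:l+1=rel}" therefore invokes an unavailable hypothesis. In the paper, the long chain built from finite duality was needed only for the degenerate case $k=2$ (where $\Aut(\sB)$ is merely $1$-transitive); for $k\geq 3$ one has $(k-1)\geq 2$-transitivity, and the paper's proof of Lemma~\ref{lemma:kcritical} exploits this directly — together with no $k$-algebraicity and item (2) of Definition~\ref{def:kcritical} — to manufacture all the auxiliary tuples with at most two "steps" and no chain construction at all. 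So the chain machinery you import from Lemma~\ref{lemma:l+1=rel} is both unavailable (no finite duality) and unnecessary.

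Second, and more fundamentally, the reduction you describe to "controlling the single coordinate $v_1$" does not follow from a $(C,C)$-implication the way it did from a $(T,=)$-implication. In Lemma~\ref{lemma:l+1=rel} the conclusion of the implication is literally an equality between two coordinates, so the induction can record a single element. Here the conclusion of the implication is membership of a full $k$-tuple in $C$, and the inductive invariant in the paper's proof (Claim~\ref{claim:hlp}) is exactly that: one tracks the triple of $k$-tuples fed into the ternary operation (the "shapes" $(c,c,d)$ and $(c,d,d)$), and one must alternate between these two shapes as the parity of the Jónsson index changes, each transition requiring a freshly constructed companion vector $\tuple w^1$ in the $x_1$-slot (obtained from item (2) of Definition~\ref{def:kcritical} and no $k$-algebraicity) plus replacement vectors $\tuple q^3,\dots,\tuple q^{k+1}$ whose middle entries are fresh elements chosen via $(k-1)$-transitivity, so that the identity $J_i(x,y,x)=J_i(x,x,x)$ leaves the shared coordinates unchanged while the implication property forces the $\tuple v$-projection into $C$. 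Your proposal acknowledges that this bookkeeping is "the main obstacle" but does not resolve it; the single-coordinate reduction glosses over exactly the step where the argument is nontrivial. The final contradiction you describe (idempotency forcing $\beta_1=\alpha_1$) is an acceptable variant of the paper's (pp-definability of $D$ versus the derived membership in $C$), so that part is fine — it is the middle of the argument that is missing.
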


\begin{proof}
First of all, observe that the formula $\phi_{\injinstances}:=\phi\wedge I_{k+1}$ is equivalent to a pp-formula over $\sA$ by the $2$-transitivity of $\Aut(\sB)$ and is still critical in $\sA$ over $(C,D,\tuple u,\tuple v)$.

Let $V=\{x_1,\ldots,x_{k+1}\}$ be the set of variables of $\phi$. We assume without loss of generality that 
$\tuple v=(x_2, \ldots,x_k, x_{k+1})$ and $\tuple u = (x_1, x_{3}, \ldots, x_{k+1})$.

Suppose that $\sA$ is preserved by a chain of quasi J\'{o}nsson operations of length $2n+1$, see~\Cref{def:QuasiJonsson}. 
Striving for contradiction, a number of times we will consider vectors
$\tuple w^2,\ldots,\tuple w^{k+1}\in A^3$ such that 
$\tuple w^3,\ldots,\tuple w^{k+1}$ are constant.
The goal is to show that there exist
$\tuple w^2, \tuple w^3, \ldots, \tuple w^{k+1}\in A^3$
such that, this time, all of them are constant and such that we have both 
 $(w^2_i,\ldots, w^{k+1}_i) \in D$ for $i \in [3]$
and $(J_{2n+1}(\tuple w^2),\ldots, J_{2n+1}(\tuple w^{k+1})) \in C$. It contradicts the fact that $D$ is pp-definable in $\sA$ and that $C \cap D = \emptyset$.

We will say that $\tuple w^2,\tuple w^3, \ldots,\tuple w^{k+1}\in A^3$ as specified above (so in particular all of them but the first one are constant) are of shape 
\begin{itemize}
\item $(c,c,d)$ if 
$(w^2_1,\ldots,w^{k+1}_1) = (w^2_2,\ldots,w^{k+1}_2) \in C$

and $(w^2_3,\ldots,w^{k+1}_3) \in D$, and of shape 
\item $(c,d,d)$  if $(w^2_1,\ldots,w^{k+1}_1) \in C$
and 
$(w^2_2,\ldots,w^{k+1}_2) = $

$(w^2_3,\ldots,w^{k+1}_3) \in D$
\end{itemize}
Since $\sB$ is $(k-1)$-transitive,  sequences of vectors of both shapes do exist. 

The proof of the following claim is a huge part of the proof of the lemma.

\begin{claim}\label{claim:hlp}
Let $\tuple w^2,\ldots,\tuple w^{k+1}\in A^3$ be of shape $(c,c,d)$ or of shape $(c,d,d)$. Then for all $i \in [2n+1]$ 
we have that  $(J_i(\tuple w^2),\ldots, J_i( \tuple w^{k+1})) \in C$.
\end{claim}

\begin{proof}
The proof goes by  induction on $i$. 

(BASE CASE) For $i = 0$ consider first $\tuple w^2,\ldots,\tuple w^{k+1}$ of shape 
$(c,c,d)$ and assume without loss of generality that $J_1$ is idempotent on the elements of the vectors, see~\Cref{obs:idemJonsson}. By~(\ref{eq:J1}), we have that $(J_1(\tuple w^2),\ldots, J_1(\tuple w^{k+1})) = (w^2_1,\ldots, w^{k+1}_1)$, and hence the claim follows.

Before we turn to the proof of the base case for the vectors of the shape $(c,d,d)$, we observe that the set $\phi_{\injinstances}^{\sA}$ can be viewed as a $(k+1)$-ary relation $R(x_1,\ldots,x_{k+1})$. Now, by $k$-neoliberality of $\sB$, there exists $\tuple w^1 \in A^3$
such that all of the following hold:
\begin{itemize}
\item  $\tuple w^1, \tuple w^3\ldots,\tuple w^{{k+1}}$ is of shape $(c,c,d)$.
\item $(w^1_i,\ldots,w^{k+1}_i)\in R$ for  $i\in \{ 1, 3\}$, and
\item 
        $w_2^1\neq w^{2}_2$.
\end{itemize}
Indeed, since $\sB$ $(k-1)$-transitive and has no $k$-algebraicity, there exists $w^1_2$ such that $(w^1_2, w^{3}_2, \ldots,w^{k+1}_2) \in C$ and $w^1_2$ does not appear in $\tuple w^2,\ldots,\tuple w^{k+1}$. Hence, in particular, the last bullet is satisfied. In order to satisfy the first and the second condition, we first set $w^1_1 := w^2_1$, and hence $(w^1_1, w^3_1, \ldots,w^{k+1}_1) \in C$. By~(2) in Definition~\ref{def:kcritical}, it follows that  
$(w^1_1, w^2_1, \ldots, w^{k+1}_1)\in R$. Furthermore, by $(k-1)$-transitivity and the second bullet in the definition of the critical relation, we also get $w^1_3$  such that $(w_3^1, w_3^3 \ldots,w_3^{k+1}) \in D$ and $(w^1_3,w^2_3, \ldots, w_{3}^{k+1})\in R$.
Clearly, the vectors are of shape $(c,c,d)$, whence they satisfy the first condition above. 
By the first part of the proof for the base case, we have $(J_1(\tuple w^1), J_1(\tuple w^3), \ldots, J_1(\tuple w^{k+1})) \in C$.

Now, since $k > 2$, by $(k-1)$-transitivity of $\sB$, there exist $a_3, \ldots, $
$a_{k+1}$ such that $(w^1_2, w^2_2, a_3, \ldots, a_{k+1}) \in R$.
Set $\tuple q^i = (w^i_1, a_i , w^i_3)$ for $i \in \{ 3, \ldots, k+1\}$.
By~(\ref{eq:Ji}), it follows that $(J_1(\tuple w^1), J_1(\tuple q^{3}), \ldots, J_1(\tuple q^{k+1}))$
$ = (J_1(\tuple w^1), J_1(\tuple w^{3}), \ldots, J_1(\tuple w^{{k+1}})) \in C$. We again assume that $J_1$ is idempotent on the considered elements.
Since $\phi_{\injinstances}$ is a $(C, \tuple u, C, \tuple v)$-implication, we obtain that $(J_1(\tuple w^2), J_1(\tuple q^3), \ldots, J_1(\tuple q^{k+1})) =  (J_1(\tuple w^2),$
$ \ldots, J_1(\tuple w^{k+1})) \in C.$ It completes the proof of the base case. 

(INDUCTION STEP) For the induction step, we consider two cases of whether $i$ is odd or even. If $i = 2k+1$ is odd, the claim holds for vectors of shape $(c,c,d)$ by the induction hypothesis and~(\ref{eq:J2i2i+1}). The proof for vectors of shape $(c,d,d)$
goes along the lines of the proof for $J_1$ in the base case.

On the other hand, if $i$ is even, the case of vectors of shape $(c,d,d)$ follows from the induction hypothesis by~(\ref{eq:J2i-12i}).
Let $\tuple w^2, \ldots, \tuple w^{k+1}$ be of shape $(c,c,d)$.
Again, by $k$-neoliberality of $\sB$, it follows that there
exists $\tuple w^1 \in A^3$
such that all of the following hold:
\begin{itemize}
\item  $\tuple w^1, \tuple w^{3}, \ldots,\tuple w^{k+1}$ is of shape $(c,d,d)$.
\item $(w^1_i,\ldots,w^{k+1}_i)\in R$ for  $i\in \{ 1, 3\}$, and
\item 
        $w^1_2\neq w^{2}_2$.
\end{itemize}
Indeed, since $\sB$ is  $(k-1)$-transitive and has no $k$-algebraicity, there exists $w^1_2$ such that $(w^1_2, w^{3}_2, \ldots,w^{{k+1}}_2) \in D$ and $w^2_1$ does not appear in $\tuple w^2,\ldots,\tuple w^{k+1}$. It follows that the last condition is satisfied. In order to satisfy the remaining bullets, we set $w^3_1 = w^2_1$. By 
the definition of a critical relation, $(w^3_1,\ldots,w^3_{k+1}) \in R$. It is also possible to find $w^1_1$, by $(k-1)$-transitivity, such that $(w^1_1, w^{3}_1, \ldots,w^{{k+1}}_1) \in C$ and $(w^3_1,\ldots,w^3_{k+1}) \in R$. Thus, all the conditions above are satisfied. Since we are done for the vectors of shape
$(c,c,d)$, it follows that $(J_{i}(\tuple w^1), \tuple w^{3}), \ldots, J_{i}(\tuple w^{{k+1}})) \in C$. 

Since $k > 2$, by $(k-1)$-transitivity of the structure $\sB$, there exist $a_3, \ldots, a_{k+1}$ such that $(w^1_2, w^2_2, a_3, \ldots, a_{k+1})$ is a tuple in $R$.
Set $\tuple q^i = (w^i_1, a_i , w^i_3)$ for $i \in \{ 3, \ldots, k+1\}$.
By~(\ref{eq:Ji}), it follows that $(J_i(\tuple w^1),  J_i(\tuple q^{3}),$
$ \ldots, J_i(\tuple q^{{k+1}})) = (J_i(\tuple w^1),  J_i(\tuple w^{3}), \ldots, J_i(\tuple w^{{k+1}})) \in C$. As in the base case, we assume that $J_i$ is idempotent on the considered vectors.  Since $\phi_{\injinstances}$ is a $(C, \tuple u, C, \tuple v)$-implication, we have that $(J_i(\tuple w^2), J_i(\tuple q^{3}), \ldots, J_i(\tuple q^{k+1})) =  (J_i(\tuple w^2), J_i(\tuple w^{3}) \ldots, J_i(\tuple w^{k+1})) \in C$. It completes the proof of the claim.  
\end{proof}

The final step of the proof of the lemma is to show that 
$(J_{2n+1}(\tuple w^2),$
$ \ldots, J_{2n+1}(\tuple w^{k+1})) \in C$
for the vectors satisfying  
$(w^2_i, \ldots, w^{k+1}_i) = (w^2_j, \ldots, w^{k+1}_j) \in D$
for all $i,j \in [3]$. Again by $k$-neoliberality, we can show that there exists
$\tuple w_1$ satisfying all of the following:
\begin{itemize}
\item  $\tuple w^1, \tuple w^{3}, \ldots, \tuple w^{{k+1}}$ is of shape $(c,d,d)$.
\item $(w^1_i,\ldots,w^{k+1}_i)\in R$ for  $i\in \{ 2, 3\}$, and
\item 
        $w^1_1\neq w^{2}_1$.
\end{itemize}

In the same way as above, we use $(k-1)$-transitivity of $\sB$ to find vectors $\tuple q^i = (a_i, w^2_i, w^3_i)$
with $i \in \{3, \ldots, k+1 \}$ such that $(w^1_1, w^2_1, q^3_1, \ldots, q^{k+1}_1) \in R$. By Claim~\ref{claim:hlp} and item~(\ref{eq:J2n+1}) in Definition~\ref{def:QuasiJonsson}, it follows that  $(J_{2n+1}(\tuple w^1), J_{2n+1}(\tuple q^{3}) \ldots, J_{2n+1}(\tuple q^{{k+1}}) ) = (J_{2n+1}(\tuple w^1), J_{2n+1}(\tuple w^{3}) \ldots, J_{2n+1}(\tuple w^{{k+1}}) )  \in C$. Since $\phi_{\injinstances}$ is a $(C, \tuple u, C,$
$ \tuple v)$-implication, it follows that 
$(J_{2n+1}(\tuple w^2), J_{2n+1}(\tuple q^{3}),  \ldots, J_{2n+1}(\tuple q^{k+1}$
$)) = (J_{2n+1}(\tuple w^2), J_{2n+1}(\tuple w^{3}),  \ldots, J_{2n+1}(\tuple w^{k+1})) \in C$, which contradicts the fact that $D$ is pp-definable in $\sA$ and disjoint from $C$.

\end{proof}

The case where  $\sB$ is $2$-neoliberal is slightly more complicated. We first show that such $\sB$ pp-defines disequality.

\begin{restatable}{lemma}{diseqJonsson}\label{lemma:diseqJonsson}
Let $\sB$ be a $2$-neoliberal structure preserved by a chain of quasi J\'{o}nsson operations. Then $\sB$ pp-defines $I_2^B$.
\end{restatable}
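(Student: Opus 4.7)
My plan is to argue by contradiction. Assume $\sB$ does not pp-define $I_2^B$. Since $\sB$ is $\omega$-categorical (being the canonical 2-ary structure of an oligomorphic permutation group), the $\Inv$-$\Pol$ Galois correspondence produces a polymorphism $f \in \Pol(\sB)$ of some arity $n$ together with tuples $\bar a, \bar b \in B^n$ satisfying $a_i \neq b_i$ for every $i$ but $f(\bar a) = f(\bar b) =: c$.

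As a preliminary observation I would note that every endomorphism of $\sB$ is injective: if $e \in \End(\sB)$ satisfied $e(a) = e(a')$ for some $a \neq a'$, then the injective orbital $O \ni (a,a')$ (a relation of $\sB$) would be mapped to a diagonal pair by $e$, contradicting that $e$ preserves $O$. Hence $f$ depends essentially on at least two variables. Invoking \Cref{lemma:binessen}---applicable since $\sB$ is a model-complete core whose automorphism group has a single orbit on singletons by 1-transitivity---we obtain a binary essential polymorphism in $\Pol(\sB)$. Combining this with the collapsing behaviour of $f$ via standard reductions (composing with projections and, when necessary, with $\Aut(\sB)$-elements), I would produce a binary polymorphism $g \in \Pol(\sB)$ for which there exist distinct $a, b \in B$ with $g(a, b) = g(b, a)$; equivalently, $g$ does not preserve $I_2^B$.

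To derive a contradiction from the existence of such $g$ together with the chain $(J_1, \ldots, J_{2m+1})$ preserving $\sB$, I would first replace the chain by one idempotent on a finite set containing $\{a, b, g(a,b), g(a,a), g(b,b)\}$ via \Cref{obs:idemJonsson}. Applying the $J_i$ coordinatewise and using the polymorphism identity
\[
J_i(g(x_1,y_1), g(x_2,y_2), g(x_3,y_3)) = g\bigl(J_i(x_1,x_2,x_3),\, J_i(y_1,y_2,y_3)\bigr),
\]
together with the Jónsson identities $J_1(x,x,y) = J_1(x,x,x)$, $J_i(x,y,x) = J_i(x,x,x)$, the interlocking identities $J_{2i-1}(x,y,y)=J_{2i}(x,y,y)$ and $J_{2i}(x,x,y)=J_{2i+1}(x,x,y)$, and the terminal $J_{2m+1}(x,y,y) = J_{2m+1}(y,y,y)$, I would inductively propagate the collapse $g(a,b) = g(b,a)$ through the chain, tracing the orbit of the produced pairs. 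The end of the chain should force $g(a,a) = g(b,b)$, contradicting the injectivity of the unary polymorphism $x \mapsto g(x,x)$ established above.

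The main obstacle, and the reason why---as remarked in the paper---the case $k=2$ is "slightly more complicated" than $k\geq 3$, is the absence of finite duality on $\sB$ in the hypothesis of this lemma. In the analogous arguments of \Cref{lemma:l+1=rel} and \Cref{lemma:l+2=rel} for $k\geq 3$, finite duality was crucial to build witness configurations of prescribed orbital type from homomorphism-extension arguments; here, the construction of the binary collapsing $g$ and the bookkeeping through the Jónsson chain must instead be accomplished using only $1$-transitivity, $2$-homogeneity, and the lack of $2$-algebraicity of $\Aut(\sB)$.
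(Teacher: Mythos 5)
Your approach is entirely different from the paper's, which builds a finite directed graph on pp-definable subsets of $I_2^B$, extracts a cycle, composes the corresponding implications, and applies the critical-relation machinery (\Cref{lemma:full_complete-impl}, \Cref{lemma:2ternaryJonsson}, \Cref{lemma:4noJonsson}) to reach a contradiction with preservation by the J\'{o}nsson chain. A direct polymorphism argument, were one available, would indeed be much cleaner, but yours has two gaps that it cannot proceed past.

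First, you claim to produce a binary polymorphism $g$ with $g(a,b)=g(b,a)$ for some distinct $a,b$, but this is not justified. The Galois correspondence yields an $n$-ary polymorphism $f$ collapsing an injective $n$-tuple, and $f$ is indeed essential because unary polymorphisms of $\sB$ are injective; \Cref{lemma:binessen} then produces a binary \emph{essential} polymorphism. But a binary essential polymorphism may perfectly well preserve $I_2^B$ --- indeed may be a binary injection --- and there is no general reduction from ``some $n$-ary polymorphism fails to preserve $I_2^B$'' to ``some binary one does''; the phrase ``standard reductions'' conceals a real obstacle. Second, and fatally, the commutation identity $J_i\bigl(g(x_1,y_1),g(x_2,y_2),g(x_3,y_3)\bigr)=g\bigl(J_i(x_1,x_2,x_3),J_i(y_1,y_2,y_3)\bigr)$ is false: distinct polymorphisms of a structure do not in general commute with one another (take $\min$ and $\max$ on a chain for a counterexample). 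Without it, the J\'{o}nsson identities constrain the $J_i$ among themselves but say nothing about how they interact with $g$, so the induction propagating the collapse $g(a,b)=g(b,a)$ through the chain has no engine to drive it; the whole contradiction rests on this nonexistent identity.
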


We  can now turn to a counterpart of~\Cref{lemma:kcritical} for $2$-neoliberal structures and then to the proof of~\Cref{thm:implsimple}.

\begin{restatable}{lemma}{2critical}\label{lemma:2critical}
Let $\sA$ be a first-order expansion of a $2$-neoliberal relational structure $\sB$. Suppose that there exists a pp-formula $\phi$ which is critical in $\sA$ over $(C,D,\tuple u,\tuple v)$ for some binary $C,D$, and some $\tuple u,\tuple v$. Then $\sA$ is not preserved by any chain of quasi J\'{o}nsson operations.
\end{restatable}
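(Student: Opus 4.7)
The plan is to adapt the strategy of~\Cref{lemma:kcritical} to the case $k=2$, where the loss of $(k-1)$-transitivity (we only have $1$-transitivity) must be compensated by another source of freedom. That freedom is provided by~\Cref{lemma:diseqJonsson}: since $\sA$ is preserved by a chain of quasi J\'onsson operations by assumption, $\sB$ (and hence $\sA$) pp-defines $I_2^B$. In particular, the formula $\phi_{\injinstances} := \phi \wedge I_3$ is pp-definable in $\sA$ and inherits criticality over $(C,D,\tuple u,\tuple v)$.

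First, we preprocess $\phi$. Writing $V = \{x_1, x_2, x_3\}$ with $\tuple u = (x_1, x_3)$ and $\tuple v = (x_2, x_3)$, the pp-definable binary relation
$$G(x_1, x_2) \;:=\; \exists x_3\,\bigl(\phi(x_1, x_2, x_3) \wedge \phi(x_2, x_1, x_3)\bigr)$$
contains precisely those orbitals that participate in a strongly connected component of $\mathcal{B}_\phi$. Intersecting $\phi$ with $G(x_1, x_3) \wedge G(x_2, x_3)$ and iterating, we may assume each orbital in $\proj_{\tuple u}(\phi^\sA)$ lies in some strongly connected component, with $C$ being a sink and $D$ a source  (\Cref{observation:sinksource}), both still pp-definable by completeness.

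Second, we produce a quaternary injective $(C', \tuple s, C', \tuple t)$-implication $\phi'$ with $\tuple s = (y_1, y_3)$, $\tuple t = (y_2, y_4)$ and $C' \subseteq C$ which inherits an analogue of item~(2) of~\Cref{def:kcritical}. This is built by composing $\phi$ with itself according to the pattern $\phi(y_1, y_2, z) \wedge \phi(y_3, y_4, z) \wedge I_4(y_1, y_2, y_3, y_4)$, possibly with an intermediate layer routed through $G$ when $G \supsetneq C$. The extra variable of $\phi'$ provides the combinatorial room that $(k-1)$-transitivity supplied in the proof of~\Cref{lemma:kcritical}: any injective $f$ with $f(\tuple s), f(\tuple t) \in C'$ lies in $(\phi')^\sA$ by the universal property of the critical $\phi$, together with the pp-definability of $I_2^B$ and~\Cref{cor:noequality} ensuring we can freely take the injective versions of all auxiliary witnesses. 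The same holds for $D'$ in place of $C'$, so $\phi'$ plays the role of a ``quaternary critical'' relation.

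Third, with $\phi'$ in hand, we replay the argument of~\Cref{lemma:kcritical} almost verbatim. Fix a chain $(J_1, \ldots, J_{2n+1})$ of quasi J\'onsson operations preserving $\sA$, idempotent on the finite set of elements appearing in the proof (\Cref{obs:idemJonsson}). Define vectors $\tuple w^2, \tuple w^3, \tuple w^4 \in A^3$ of ``shape'' $(c,c,d)$ or $(c,d,d)$ with $\tuple w^3, \tuple w^4$ constant, the first coordinate taking values in $C'$, and the remaining coordinates taking values in $D'$, analogously to the proof of~\Cref{lemma:kcritical}. Whenever that proof invokes $(k-1)$-transitivity to produce auxiliary entries $(a_3,\ldots,a_{k+1})$, we now instead invoke $1$-transitivity of $\Aut(\sB)$ combined with the universal critical property of $\phi'$ and the pp-definability of disequality to produce an auxiliary $\tuple w^1 \in A^3$ playing the corresponding role. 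An induction on $i\in[2n+1]$ analogous to~\Cref{claim:hlp} then shows $(J_i(\tuple w^2), J_i(\tuple w^3), J_i(\tuple w^4)) \in C'$ for both shapes, and the closing step forces a constant $D'$-tuple into $C'$ via $J_{2n+1}$, contradicting pp-definability of $D'$ and disjointness of $C'$ and $D'$.

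The main obstacle is the construction of $\phi'$ in the second step and the verification of its universal property: one has to handle the two sub-cases $G=C$ and $G \supsetneq C$ separately (in the latter, one routes through an intermediate $G$-to-$C$ implication before composing), and check that the resulting quaternary formula remains a complete injective implication satisfying a critical-type property strong enough to reproduce the tuple-filling arguments of~\Cref{lemma:kcritical}.
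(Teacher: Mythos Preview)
Your overall plan---use \Cref{lemma:diseqJonsson} to pp-define $I_2^B$, preprocess via $G$, then reduce to a setting where a \Cref{lemma:kcritical}-style induction works---matches the paper's, and your Step~3 is essentially what the paper carries out once the right implication is available.

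The genuine gap is Step~2. Your proposed pattern $\exists z\,\phi(y_1,y_2,z)\wedge\phi(y_3,y_4,z)\wedge I_4$ is \emph{not} a $(C',(y_1,y_3),C',(y_2,y_4))$-implication: in each copy of $\phi$ the implication runs from $(y_i,z)$ to $(y_{i+1},z)$, and the shared $z$ links the wrong coordinates---nothing forces $(y_1,y_3)\in C'\Rightarrow (y_2,y_4)\in C'$. Your case split ``$G=C$ versus $G\supsetneq C$'' is also moot: item~(2) of \Cref{def:kcritical} gives, for each orbital $O\subseteq D$, an $OO$-mapping in $\phi^{\sA}$ (take $f(x_1)=f(x_2)$), hence a loop at $O$ in $\mathcal B_\phi$; thus $D\subseteq G$ and $G\supsetneq C$ always.

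The paper's route is different. After restricting to a \emph{minimal} pp-definable $G\supseteq C$ for which $\phi_G$ remains critical, it examines
\[
Y_{2,3}:=\proj_{(x_2,x_3)}\bigl(\exists y\,\phi_G(x_1,x_2,y)\wedge C(x_1,x_3)\bigr)
\]
and the symmetric $Y_{1,3}$; minimality forces each to equal $G$ or $C$. If both are $G$, a direct \emph{ternary} induction on $\phi_G$ (not a quaternary relation) already gives the contradiction, exploiting the resulting $O_1 O_D$- and $O_D O_2$-mappings with $O_1,O_2\subseteq C$ and $O_D\nsubseteq C$. Otherwise, say $Y_{2,3}=C$; then $H(x_1,x_2):=\exists x_3\,C(x_1,x_3)\wedge C(x_2,x_3)$ satisfies $H\nsupseteq I_2^B$, and the quaternary implication is the composite $\xi_1\circ\xi_2$ where $\xi_1(x_1,x_2,x_3):=C(x_2,x_3)$ is a $(C,(x_1,x_3),H,(x_1,x_2))$-implication and $\xi_2(x_1,x_2,x_4):=C(x_1,x_4)$ an $(H,(x_1,x_2),C,(x_2,x_4))$-implication. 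The pivot is through $H$, not $G$, and the strict containment $H\subsetneq I_2^B$ (coming from $Y_{2,3}=C$) is exactly what makes $\xi_1,\xi_2$ genuine implications rather than trivialities.
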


\implsimple*

\begin{proof}
Striving for contradiction, suppose that $\sA$ is implicationally hard on injective instances. Furthermore observe that $\sB$ pp-defines  $I^B_2$. Indeed. if it is $2$-neoliberal then the claim follows by~\Cref{lemma:diseqJonsson}. Otherwise   
 $\sB$ is $k$-neoliberal with $k \geq 3$, in particular $2$-transitive and hence $I^B_2$ is also pp-definable.
By~\Cref{cor:critical}, there exists a pp-formula $\phi$ over the signature of $\sA$ which is critical in $\sA$ over $(C,D,\tuple u,\tuple v)$ for some $C,D\subseteq A^k$. \Cref{lemma:kcritical,lemma:2critical} yield that $\sA$ is not preserved by any chain of quasi J\'{o}nsson operations, a contradiction.
\ignore{
Let $\sA$ be a first-order expansion of $\sB$ with bounded strict width. Striving for contradiction, 
suppose that $\sA$ is implicationally hard on injective instances. Then the injective implication graph $\mathcal{G}_{\sA}^{\injinstances}$ contains a directed cycle $(D_1,C_1),\ldots,(D_{n-1},C_{n-1}),(D_n,C_n)=(D_1,C_1)$. This means that for all $i\in[n-1]$, there exists an injective $(C_i,\tuple u^i,C_{i+1},\tuple u^{i+1})$-implication $\phi_i$ in $\sA$ with $\proj_{\tuple u^i}(\phi_i^{\sA})=D_i$, and $\proj_{\tuple u^{i+1}}(\phi_i^{\sA})=D_{i+1}$.

Let us define $\phi:=((\phi_1\circ \phi_2)\circ\ldots\circ \phi_{n-1})$. Restricting $\phi^{\sA}$ to injective mappings, we obtain an injective $(C_1,\tuple u^1,C_1,\tuple u^n)$-implication by~\Cref{lemma:kimpl}. \Cref{lemma:kcomplete} asserts us that there exists a complete injective $(C_1,C_1)$-implication $\psi$ in $\sA$.

\Cref{observation:sinksource} yields that there exist $C\subseteq C_1$, and $D\subseteq D_1\backslash C_1$ such that $\mathbf{Vert}(C)$ is a strongly connected component which is a sink in $\mathcal{B}_{\psi}$, and $\mathbf{Vert}(D)$ is a strongly connected component which is a source in $\mathcal{B}_{\psi}$. 
Observe that since $\sA$ is a first-order expansion of $\sB$ and since $\psi$ is complete, $C$ is pp-definable from $\sA$. Indeed, for any fixed orbit $O\subseteq C$ under $\Aut(\sB)$, $C$ is equal to the set of all orbits $P\subseteq C_1$ such that $\psi^{\sA}$ contains an $O P$-mapping. We can observe in a similar way that $D$ is pp-definable from $\sA$ as well. 
Moreover, $\psi$ is easily seen to be a complete $(C,C)$-implication in $\sA$.

Since $\sB$ has finite duality, there exists a number $d\geq 3$ such that for every finite structure $\sX$ in the signature of $\sB$, it holds that if every substructure of $\sX$ of size at most $d-2$ maps homomorphically to $\sB$, then so does $\sX$. 
Set $\rho:=\psi^{\circ d}$. Let $V$ be the set of variables of $\rho$. It follows from~\Cref{lemma:kimpl} that $\rho$ is
a $(C,\tuple u,C,\tuple v)$-implication in $\sA$ for some $\tuple u,\tuple v$.
We are going to prove the following claim using the finite duality of $\sB$ and the completeness of $\psi$.

\begin{claim}\label{claim:finduality}
    Every $f\in A^{V}$ with $f(\tuple u)\in C$ and $f(\tuple v)\in C$ is an element of $\rho^{\sA}$. The same holds for any  $f\in A^{V}$ with $f(\tuple u) \in D$ and $f(\tuple v)\in D$.
\end{claim}

To prove~\Cref{claim:finduality}, let $f\in A^V$ be as in the statement of the claim. 
Up to renaming variables, we can assume that $\psi$ is a $(C,\tuple u,C,\tuple v)$-implication. 
Completeness of $\psi$ implies that $I_\psi=\{i\in[k]\mid u_i\in\scope(\tuple v)\}=\{i\in[k]\mid v_i\in\scope(\tuple u)\}$, and that $u_i=v_i$ for every $i\in I_\psi$. Let $\tuple w^0,\ldots,\tuple w^d$ be $k$-tuples of variables such that $w^j_i=u_i=v_i$ for all $i\in I_\psi$ and all $j\in\{0,\ldots,d\}$, and disjoint otherwise. We can moreover assume that $\tuple w^0=\tuple u$, $\tuple w^d=\tuple v$. 
For every $j\in\{0,\ldots,d-1\}$, let $\psi_{j+1}$ be the $(C,\tuple w^j,C,\tuple w^{j+1})$-implication obtained from $\psi$ by renaming $\tuple u$ to $\tuple w^j$ and $\tuple v$ to $\tuple w^{j+1}$. 
It follows that $\rho$ is equivalent to the formula obtained from $\psi_1\wedge\dots\wedge \psi_d$ by existentially quantifying all variables that are not contained in $\scope(\tuple u)\cup\scope(\tuple v)$. 
In order to prove~\Cref{claim:finduality}, it is therefore enough to show that $f$ can be extended to a mapping $h\in (\psi_1\wedge\dots\wedge\psi_d)^{\sA}$.

Now, for all $p,q\in[k]$, we identify the variables $w^0_{q}=u_q$ and $w^{d}_{p}=v_p$ if $f(v_{p}) = f(u_{q})$. Observe that $f|_{\scope(\tuple u)\cap \scope(\tuple v)}$ is injective since $f(\tuple u)\in C\subseteq I^A_k$, and hence this identification does not force any variables from $\scope(\tuple u)\cap\scope (\tuple v)$ to be equal. Moreover, since $d\geq 2$, this identification does not identify any variables from the tuples $\tuple w^{d-1}$ and $\tuple w^d$. Let us define $f_0:=f|_{\scope(\tuple w^0)\cup \scope(\tuple w^d)}$. 
It is enough to show that $f_0$ can be extended to a mapping $h\in (\psi_1\wedge \dots\wedge \psi_d)^{\sA}$.

Let $O$ be the orbit of $f(\tuple u)$ under $\Aut(\sB)$, and let $P$ be the orbit of $f(\tuple v)$. Let $f_1\in \psi_1^{\sA}$ be an injective $O P$-mapping, and for every $j\in\{2,\ldots,d\}$, let $f_j\in \psi_j^{\sA}$ be an injective $P P$-mapping. 
Note that such $f_j$ exists for every $j\in[d]$ since $\psi_j$ is complete and since $\mathbf{Vert}(C)$ and $\mathbf{Vert}(D)$ are strongly connected components in $\mathcal{B}_{\psi_j}$. 

Let $\tau$ be the signature of $\sB$. Let $X:=\scope(\tuple w^0)\cup\cdots\cup \scope(\tuple w^d)$, and let us define a $\tau$-structure $\sX$ on $X$ as follows. Recall that the relations from $\tau$ correspond to the orbits of injective $k$-tuples under $\Aut(\sB)$. For every relation $R\in\tau$, we define $R^{\sX}$ to be the set of all tuples $\tuple w\in I^X_k$ such that there exists $j\in \{0,\ldots,d\}$ such that $\scope(\tuple w)\subseteq \scope(\tuple w^j)\cup\scope(\tuple w^{j+1})$ and $f_j(\tuple w)\in R$; here and in the following, the addition $+$ on indices is understood modulo $d+1$. We will show that $\sX$ has a homomorphism $h$ to $\sB$. If this is the case, it follows by the construction and by the $k$-homogeneity of $\Aut(\sB)$ that $h\in (\psi_1\wedge\dots\wedge\psi_d)^{\sA}$. Moreover, we can assume that $h|_{\scope(\tuple w^0)\cup\scope(\tuple w^d)}=f_0$ as desired.

Let now $Y\subseteq X$ be of size at most $d-2$. 
Then for cardinality reasons there exists $j\in[d-1]$ such that $Y\subseteq \scope(\tuple w^0)\cup\cdots\cup \scope(\tuple w^{j-1})\cup \scope(\tuple w^{j+1})\cup\cdots\cup \scope(\tuple w^{d})$. 
Observe that $f_{j+1}$ is a homomorphism from the induced substructure of $\sX$ on $\scope(\tuple w^{j+1})\cup \scope(\tuple w^{j+2})$ to $\sA$. 
Since the orbits of $f_{j+1}(\tuple w^{j+2})$ and $f_{j+2}(\tuple w^{j+2})$ agree by definition, by composing $f_{j+2}$ with an element of $\Aut(\sB)$ we can assume that $f_{j+1}(\tuple w^{j+2})=f_{j+2}(\tuple w^{j+2})$. 
We can proceed inductively and extend $f_{j+1}$ 
to $\scope(\tuple w^0)\cup\cdots\cup \scope(\tuple w^{j-1})\cup \scope(\tuple w^{j+1})\cup\cdots\cup \scope(\tuple w^{d})$ such that it is a homomorphism from the induced substructure of $\sX$ on this set to $\sA$. It follows that the substructure of $\sX$ induced on $Y$ maps homomorphically to $\sA$. Finite duality of $\sB$ yields that $\sX$ has a homomorphism to $\sA$ as desired and \Cref{claim:finduality} follows.
\medskip

Assume without loss of generality that $1\notin I_\rho$, and identify $u_i$ with $v_i$ for every $i\neq 1$. Note that this is possible by item (3) in \Cref{observation:impl_properties}, and since $\rho$ is easily seen to be complete.
Set $W:=\scope(\tuple u)\cup\scope(\tuple v)$, and let $\rho'$ be the formula arising from $\rho$ by this identification.
We will argue that $\rho'$ is critical in $\sA$ over $(C,D,\tuple u,\tuple v)$.
To this end, let us first show that $\rho'$ is a $(C,\tuple u,C,\tuple v)$-implication in $\sA$. Observe that for every orbit $O\subseteq C$ under $\Aut(\sB)$, $(\rho')^{\sA}$ contains an injective $OO$-mapping. Indeed, there exists an injective $g\in A^W$ such that $g(\tuple u)\in O$ and $g(\tuple v)\in O$; this easily follows by the $(k-1)$-transitivity of $\Aut(\sB)$ and by the fact that it has no $k$-algebraicity. Forgetting the identification of variables, we can understand $g$ as an element of $A^V$, and~\Cref{claim:finduality} yields that $g\in\rho^{\sA}$, whence $g\in (\rho')^{\sA}$. 
Now, it immediately follows that $\rho'$ satisfies the items (1)-(3) and (5) from \Cref{def:implication}. Moreover, the satisfaction of item (4) follows immediately from the fact that $\rho$ is a $(C,\tuple u,C,\tuple v)$-implication in $\sA$.

It remains to verify the last three items of \Cref{def:kcritical}. Observe similarly as above that for any orbit $O\subseteq D$ under $\Aut(\sB)$, $(\rho')^{\sA}$ contains an $OO$-mapping, which immediately yields that $D$ is contained both in $\proj_{\tuple u}((\rho')^{\sA})$ and in $\proj_{\tuple v}((\rho')^{\sA})$, it also yields that for every $\tuple a\in D$, there exists
$f\in(\rho')^{\sA}$ such that $f(\tuple u)\in D$ and $f(\tuple v)= \tuple a$. Hence, $\rho'$ is indeed critical in $\sA$ over $(C,D,\tuple u,\tuple v)$, contradicting \Cref{lemma:kcritical}.}
\end{proof}

\kmain*

\begin{proof}
    Let $\sA$ be a first-order expansion of $\sB$ which is preserved by a chain of quasi J\'{o}nsson operations.
    By~\Cref{cor:libcores-purelyinj}, it is enough to prove that $\Csp_{\injinstances}(\sA)$ has relational width $(k,\max(k+1,b_{\sB}))$.
    \Cref{thm:implsimple} yields that $\sA$ is implicationally simple on injective instances and the result follows from~\Cref{prop:implsimple}.
\end{proof}

\bibliographystyle{IEEEtran}
\bibliography{mybib}

\appendix

\section{Omitted Proofs}

\subsection{Proof of \texorpdfstring{\Cref{obs:idemJonsson}}{Observation~\ref{obs:idemJonsson}}}

Since $\sA$ is a model complete core, for every finite $B \subseteq A$ and every  unary operation $J_i(x,x,x)$ with $i \in [2n+1]$ there exists an automorphism $\alpha_i$ such that $\alpha_i$ is equal to $J_i(x,x,x)$ on $B$. Now $J'_i(x,y,z) = \alpha_i^{-1}(J_i(x,y,z))$ with $i \in [2n+1]$ is a chain of quasi J\'{o}nsson operations idempotent on $B$.
$\qed$
\subsection{Proof of \texorpdfstring{\Cref{lemma:core-bwidth}}{Lemma~\ref{lemma:core-bwidth}}}

Let $\sim$ be a binary relation defined on $\V$ such that $u\sim v$ if, and only if, $\proj_{(u,v)}(\instance)\subseteq \{(b,b)\mid b\in B\}$. Since $k\geq 2$, $\sim$ is an equivalence relation.

Let $\tau$ be the signature of the structure $\sB$, and let us define a $\tau$-structure $\sA$ on $\V/\sim$ as follows. Let $R\in\tau$; then $R$ is of arity $k$. We set $([v_1]_{\sim},\ldots,[v_k]_{\sim})\in R^{\sA}$ if $\proj_{(v_1,\ldots,v_k)}(\instance)= R^{\sB}$. Note that by our assumption, for every $(v_1,\ldots,v_k)\in\V^k$, there is precisely one relation of $\sA$ containing the tuple $([v_1]_{\sim},\ldots,[v_k]_{\sim})$.

Let us show that the definition of $\sim$ does not depend on the choice of the representatives $v_1,\ldots,v_k\in\V$. We will show that it does not depend on the choice of $v_1$, the rest can be shown similarly. Let $u_1\sim v_1$, and let $C\in\constraints$ be such that $u_1,v_1,\ldots,v_k$ are contained in its scope. Then $C|_{\{u_1,v_1\}}$ consists of constant maps and it follows that $\proj_{(u_1,v_2,\ldots,v_k)}(\instance)=\proj_{(u_1,v_2,\ldots,v_k)}(C)=\proj_{(v_1,\ldots,v_k)}(C)=\proj_{(v_1,\ldots,v_k)}(\instance)$.

We claim that $\sA$ embeds into $\sB$. Suppose for contradiction that this is not the case. Then there exists a bound $\sF\in\mathcal{F}_{\sB}$ of size $b$ with $b\leq b_{\sB}$ such that $\sF$ embeds into $\sA$. Let $[v_1]_{\sim},\ldots,[v_b]_{\sim}$ be all elements in the image of this embedding. Find a constraint $C\in\constraints$ such that $v_1,\ldots,v_b$ are contained in its scope. Since $C$ is nonempty, there exists $f\in C$. Since all relations in $\tau$ are of arity $k$, since $\instance$ is $k$-minimal and since for every $\tuple v\in \V^k$ such that all variables from $\tuple v$ are contained in the scope of $C$, $\proj_{\tuple v}(C)$ contains precisely one orbit under $\gG$, it follows that $\sF$ embeds into the structure that is induced by the image of $f$ in $\sB$ which is a contradiction.

It follows that there exists en embedding $e\colon \sA\hookrightarrow \sB$ and it is easy to see that $f\colon \V\rightarrow B$ defined by $f(v):=e([v]_{\sim})$ is a solution of $\instance$.
$\qed$

\subsection{Proof of \texorpdfstring{\Cref{prop:implsimple}}{Proposition~\ref{prop:implsimple}}}

By~\Cref{cor:libcores-purelyinj}, it is enough to show that $\Csp_{\injinstances}(\sA)$ has relational width $(k,\max(k+1,b_{\sB}))$. To this end, let $\instance=(\V,\constraints)$ be a non-trivial $(k,\max(k+1,b_{\sB}))$-minimal instance of $\Csp_{\injinstances}(\sA)$; we will show that there exists a satisfying assignment for $\instance$.

For every $i\geq 0$, we define inductively a $(k,\max(k+1,b_{\sB}))$-minimal instance $\instance_i=(\V,\constraints_i)$ of $\Csp_{\injinstances}(\sA)$ with the same variable set as $\instance$ such that $\instance_0=\instance$ and such that for every $i\geq 1$, $\constraints_i$ contains for every constraint $C_{i-1}\in\constraints_{i-1}$ a constraint $C_i$ such that $C_i\subseteq C_{i-1}$, and such that moreover for some $\tuple v\in\V^k$, it holds that $\proj_{\tuple v}(\instance_i)\subsetneq \proj_{\tuple v}(\instance_{i-1})$, or for every $\tuple v\in\V^k$, it holds that $\proj_{\tuple v}(\instance_i)$ contains only one orbit under $\fG$.

Let $\instance_0:=\instance$. Let $i\geq 1$. We define $\mathcal{G}_i$ to be the graph that originates from $\mathcal{G}_{\sA}^{\injinstances}$ by removing all vertices that are not of the form $(\proj_{\tuple v}(\instance_{i-1}),F)$ for some injective $\tuple v\in \V^k$, and some $F\subseteq A^k$. \Cref{claim:projs} implies that $\instance_{i-1}$ is $k$-minimal, and hence $\proj_{\tuple v}(\instance_{i-1})$ is well-defined. If $\mathcal G_i$ does not contain any vertices, let $\instance_i:=\instance_{i-1}$. Suppose now that $\mathcal G_i$ contains at least one vertex. In this case, since $\mathcal{G}_{\sA}^{\injinstances}$ and hence also $\mathcal G_i$ is acyclic, we can find a sink $(\proj_{\tuple v_i}(\instance_{i-1}),F_i)$ in $\mathcal{G}_i$ for some injective $\tuple v_i\in\V^k$. Let us define for every $C_{i-1}\in\constraints_{i-1}$ containing all variables from $\tuple v_i$ in its scope $C_{i}:=\{f\in C_{i-1}\mid f(\tuple v_i)\in F_i\}$, and let $C_i:=C_{i-1}$ for every $C_{i-1}\in\constraints_{i-1}$ that does not contain all variables from $\tuple v_i$ in its scope. Note that in both cases, $C_i\subseteq C_{i-1}$.
Finally, we define $\constraints_{i}=\{C_{i}\mid C_{i-1}\in\constraints_{i-1}\}$.

\begin{claim}\label{claim:projs}
For every $i\geq 1$, $\instance_i$ is non-trivial and $(k,\max(k+1,b_{\sB}))$-minimal. Moreover, for every $\tuple v\in \V^k\backslash\{\tuple v_i\}$, $\proj_{\tuple v}(\instance_{i})=\proj_{\tuple v}(\instance_{i-1})$ and $\proj_{\tuple v_i}(\instance_{i})=F_i$.
\end{claim}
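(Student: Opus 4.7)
I would prove the claim by induction on $i$, with $\instance_0=\instance$ as the base case. For the inductive step, the key is to establish three things: (a)~$\proj_{\tuple v_i}(\instance_i)=F_i$ together with non-triviality, (b)~the preservation equality $\proj_{\tuple v}(\instance_i)=\proj_{\tuple v}(\instance_{i-1})$ for all $\tuple v\in\V^k\setminus\{\tuple v_i\}$, and (c)~the full $(k,\max(k+1,b_{\sB}))$-minimality of $\instance_i$. Parts (a) and (c) should essentially reduce to (b); the genuine technical work lies in~(b).

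\textbf{Direct parts.} For any $C_{i-1}\in\constraints_{i-1}$ whose scope contains $\scope(\tuple v_i)$, the $k$-minimality of $\instance_{i-1}$ gives $\proj_{\tuple v_i}(C_{i-1})=\proj_{\tuple v_i}(\instance_{i-1})\supseteq F_i$. Since $F_i\neq\emptyset$ by the vertex condition of $\mathcal{G}_{\sA}^{\injinstances}$, the restricted constraint $C_i=\{f\in C_{i-1}\mid f(\tuple v_i)\in F_i\}$ is non-empty, and it satisfies $\proj_{\tuple v_i}(C_i)=F_i$. Constraints whose scope does not contain $\scope(\tuple v_i)$ are unchanged and thus remain non-empty. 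The scope-containment condition for $(k,\max(k+1,b_{\sB}))$-minimality is inherited automatically since $\constraints_i$ uses the same scopes as $\constraints_{i-1}$.

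\textbf{Main step: projection preservation.} I would prove by contradiction that for every injective $\tuple v\in \V^k\setminus\{\tuple v_i\}$ and every $C_{i-1}\in\constraints_{i-1}$ with $\scope(C_{i-1})\supseteq \scope(\tuple v)$, one has $\proj_{\tuple v}(C_i)=\proj_{\tuple v}(C_{i-1})$. If $\scope(C_{i-1})\not\supseteq \scope(\tuple v_i)$ then $C_i=C_{i-1}$ and we are done, so assume $\scope(C_{i-1})\supseteq \scope(\tuple v)\cup\scope(\tuple v_i)$. Suppose for contradiction that $F:=\proj_{\tuple v}(C_i)$ is a proper subset of $\proj_{\tuple v}(C_{i-1})=\proj_{\tuple v}(\instance_{i-1})$. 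Consider the pp-formula $\phi$ over the signature of $\sA$ obtained from the relation $C_{i-1}$ by existentially quantifying all variables outside $\scope(\tuple v_i)\cup\scope(\tuple v)$. A straightforward verification of the five items of~\Cref{def:implication} shows that $\phi$ is an injective $(F_i,\tuple v_i,F,\tuple v)$-implication in $\sA$ with $\proj_{\tuple v_i}(\phi^{\sA})=\proj_{\tuple v_i}(\instance_{i-1})$ and $\proj_{\tuple v}(\phi^{\sA})=\proj_{\tuple v}(\instance_{i-1})$; injectivity holds because $\instance\in \Csp_{\injinstances}(\sA)$ forces every constraint and hence $\phi^{\sA}$ to take distinct values on distinct variables. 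Since $F$ is pp-definable (it is the projection of $\phi^{\sA}\wedge F_i(\tuple v_i)$) and non-empty (as $C_i\neq\emptyset$), the pair $(\proj_{\tuple v}(\instance_{i-1}),F)$ is a vertex of $\mathcal{G}_i$, so $\phi$ yields an outgoing arc from the sink $(\proj_{\tuple v_i}(\instance_{i-1}),F_i)$, contradicting its sink property. Non-injective tuples $\tuple v$ are then handled by projecting from an injective $k$-tuple, and the equality $\proj_{\tuple v}(\instance_i)=\proj_{\tuple v}(\instance_{i-1})$ together with the direct computation for $\tuple v_i$ immediately delivers the full $k$-consistency required for~$(k,\max(k+1,b_{\sB}))$-minimality.

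\textbf{Main obstacle.} The crux is the implication construction in the contradiction step: one must carefully verify all five clauses of~\Cref{def:implication}, ensure the required injectivity and projection equalities, and confirm that the target vertex $(\proj_{\tuple v}(\instance_{i-1}),F)$ is genuinely distinct from the source $(\proj_{\tuple v_i}(\instance_{i-1}),F_i)$ so that the arc produced is not a forbidden self-loop. This last coincidence, which can in principle occur when $\proj_{\tuple v}$ and $\proj_{\tuple v_i}$ happen to define the same set of $k$-tuples, needs a short additional argument exploiting the injectivity of the instance and the distinctness of $\tuple v$ and~$\tuple v_i$ to rule it out.
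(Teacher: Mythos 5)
Your proof plan follows essentially the same route as the paper: induction on $i$, direct verification for $\tuple v_i$, and a contradiction via an injective implication emanating from the sink $(\proj_{\tuple v_i}(\instance_{i-1}),F_i)$ in $\mathcal G_i$. The direct parts (non-triviality via $F_i\neq\emptyset$, scope containment, reduction of non-injective $\tuple v$ to injective subtuples via $(k-1)$-transitivity) and the construction of the implication $\phi$ by projecting a constraint $C_{i-1}$ onto $\scope(\tuple v)\cup\scope(\tuple v_i)$ match the paper's argument step by step.

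The one place where you diverge is the coda about ruling out a self-loop, and here you are more cautious than the paper: the paper simply asserts that $\phi$ yields an arc from $(\proj_{\tuple v_i}(\instance_{i-1}),F_i)$ to $(\proj_{\tuple v}(C_{i-1}),\proj_{\tuple v}(C_i))$ in $\mathcal G_i$, without pausing over the exclusion $(C_1,C)\neq(D_1,D)$ built into $\mathcal{G}_{\sA}^{\injinstances}$. You are right that this exclusion is at the level of subsets of $A^k$, not variable tuples: two distinct injective tuples $\tuple v_i\neq\tuple v$ can in principle project $\instance_{i-1}$ to the same set, and the restricted projections could both equal $F_i$. However, your proposed remedy — ``exploiting the injectivity of the instance and the distinctness of $\tuple v$ and $\tuple v_i$'' — does not on its own close this gap, since the distinctness of the variable tuples has no direct bearing on whether the induced pairs of subsets coincide, and injectivity of the instance only guarantees that the projections live inside $I^A_k$, not that they differ. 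A related boundary case your sketch (and the paper) passes over silently is $\tuple v$ being a permutation of $\tuple v_i$: then $|\scope(\tuple v)\cup\scope(\tuple v_i)|=k$, so $\phi$ fails the requirement $k<|V|$ in \Cref{def:implication} and is not an implication at all; in that case $\proj_{\tuple v}(\instance_i)$ genuinely differs from $\proj_{\tuple v}(\instance_{i-1})$, but in a way determined by $F_i$ and harmless for $(k,\ell)$-minimality. Flagging these edge cases is useful, but the short argument you promise is not supplied, and the particular ingredients you name for it would not suffice.
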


Let $i\geq 1$ and if $i>1$, suppose that the claim holds for $i-1$. Note that if $\instance_i=\instance_{i-1}$, then there is nothing to prove so we may suppose that this is not the case. Observe that for every $C_i\in\constraints_i$ containing all variables from $\tuple v_i$ in its scope, $\proj_{\tuple v_i}(C_i)=F_i$ by the definition of $C_i$.
We will now show that for every $\tuple v\in \V^k\backslash\{\tuple v_i\}$ and for every $C_i\in\constraints_i$ containing all variables from $\tuple v$ in its scope, $\proj_{\tuple v}(C_i)=\proj_{\tuple v}(C_{i-1})$. Observe that if $C_i$ does not contain all variables from $\tuple v_i$ in its scope, then the conclusion follows immediately since $C_i=C_{i-1}$; we can therefore assume that this is not the case.
Assume first that $\tuple v$ is not injective, let $m$ be the number of pairwise distinct entries of $\tuple v$, and let $\tuple u$ be an injective $m$-tuple containing all variables of $\tuple v$. Hence, $\proj_{\tuple u}(C_i)=I_m^A=\proj_{\tuple u}(C_{i-1})$ by the $(k-1)$-transitivity of $\gG$ and it follows that $\proj_{\tuple v}(C_i)=\proj_{\tuple v}(C_{i-1})$. Now assume that $\tuple v$ is injective  and, striving for a contradiction, that $\proj_{\tuple v}(C_i)\subsetneq \proj_{\tuple v}(C_{i-1})$. It follows that $(\proj_{\tuple v}(C_{i-1}),\proj_{\tuple v}(C_i))$ is a vertex in $\mathcal G_{\sA}^{\injinstances}$ and hence also in $\mathcal G_i$.
Let $\tuple w=(w_1,\ldots,w_\ell)\in\V^\ell$ be an enumeration of all variables of $\tuple v$ and $\tuple v_i$. It follows that the pp-formula $\phi(\tuple w)$ defining $\proj_{\tuple w}(C_{i-1})$ is an injective $(F_i,\tuple v_i,\proj_{\tuple v}(C_i), \tuple v )$-implication. Hence, there is an arc from $(\proj_{\tuple v_i}(\instance_{i-1}),F_i)$ to $(\proj_{\tuple v}(C_{i-1}),\proj_{\tuple v}(C_i))$ in $\mathcal{G}_i$ and in particular, $(\proj_{\tuple v_i}(\instance_{i-1}),F_i)$ is not a sink in $\mathcal G_i$, a contradiction.

Now, it is easy to see that $\instance_i$ is $(k,\max(k+1,b_{\sB}))$-minimal. Indeed, since $\instance$ is $(k,\max(k+1,b_{\sB}))$-minimal, every subset of $\V$ of size at most $\max(k+1,b_{\sB})$ is contained in the scope of some constraint of $\instance$ and by construction also of $\instance_i$. Moreover, by the previous paragraph, any two constraint of $\instance_i$ agree on all $k$-element subsets of $\V$ within their scopes.

Since for every $i\geq 0$, if $\mathcal{G}_i$ is not empty, we remove at least one orbit of $k$-tuples under $\fG$ from some constraint. By the oligomorphicity of $\gG$, there exists $i_0\geq 0$ such that $\mathcal G_{i_0}$ is empty. 
We claim that for every injective $\tuple v\in \V^k$, $\proj_{\tuple v}(\instance_{i_0})$ contains precisely one orbit of $k$-tuples under $\fG$: If $\proj_{\tuple v}(\instance_{i_0})$ contained more than one orbit, then $(\proj_{\tuple v}(\instance_{i_0}),O)$ would be a vertex of $\mathcal{G}_i$ for an arbitrary orbit $O\subseteq\proj_{\tuple v}(\instance_{i_0})$; $O$ being a relation of $\sA$ since $\sA$ is a first-order expansion of $\sB$.

Thus $\instance_{i_0}=(\V,\constraints_{i_0})$ is a non-trivial, $(k,\max(k+1,b_{\sB}))$-minimal instance of $\Csp_{\injinstances}(\sB)$ that satisfies the assumptions of~\Cref{lemma:core-bwidth}. Hence, there exists a satisfying assignment for $\instance_{i_0}$ and whence also for $\instance$.
$\qed$

\subsection{Proof of \texorpdfstring{\Cref{prop:libcores-purelyinj}}{Proposition~\ref{prop:libcores-purelyinj}}}

    If $\sA$ has relational width $(k,\ell)$, then so does $\Csp_{\injinstances}(\sA)$ since every instance of $\Csp_{\injinstances}(\sA)$ is an instance of $\Csp(\sA)$.

    Suppose now that $\Csp_{\injinstances}(\sA)$ has relational width $(k,\ell)$.
    Let $f$ be an injective binary polymorphism of $\sA$. It follows that for every $\tuple s\in A^2$, $\tuple t\in I^2_A$, it holds that $f(\tuple s,\tuple t), f(\tuple t,\tuple s)\in I^2_A$. Let $\instance=(\V,\constraints)$ be a $(k,\ell)$-minimal non-trivial instance equivalent to an instance of $\Csp(\sA)$; we will show that $\instance$ has a solution. To this end, let $\instance'$ be an instance obtained from $\instance$ by identifying all variables $u,v\in \V$ with $\proj_{(u,v)}(\instance)\subseteq \{(a,a)\mid a\in A\}$, and by intersecting every constraint containing $u,v$ with  $C_{(u,v)}:=\{f\in A^{\{u,v\}}\mid (f(u), f(v))\in I_2^A
    \}$ for every $u,v\in \V$ with $\proj_{(u,v)}(\instance)\not\subseteq \{(a,a)\mid a\in A\}$. It follows that $\instance'$ is a non-trivial instance of $\Csp_{\injinstances}(\sA)$ and moreover, every solution of $\instance'$ translates into a solution of $\instance$. Lemma 18 in~\cite{hypergraphs}
    yields that the $(k,\ell)$-minimal instance $\cJ$ equivalent to $\instance'$ is non-trivial and since $\instance'$ is an instance of $\Csp_{\injinstances}(\sA)$ which has relational width $(k,\ell)$, $\cJ$ has a solution which translates into a solution of $\instance$ as desired.
$\qed$

\subsection{Proof of \texorpdfstring{\Cref{cor:libcores-purelyinj}}{Corollary~\ref{cor:libcores-purelyinj}}}

Since $\sA$ is preserved by a chain of quasi J\'{o}nsson operations, 
it does not have a uniformly continuous clone homomorphism to the clone of projections. It is easy to see and well-known that $\Pol(\sA)$ then contains an essential function and by~\Cref{lemma:binessen} a binary essential operation. ~
Since $I_2^B$ is pp-definable from $\sB$ and the canonical binary structure of $\sB$ has finite duality, 
\Cref{lemma:bininj2} yields that $\Pol(\sA)$ contains a binary injection. Now, the statement follows directly from~\Cref{prop:libcores-purelyinj}.
$\qed$

\subsection{Proof of \texorpdfstring{\Cref{lemma:l+2=rel}}{Lemma~\ref{lemma:l+2=rel}}}
Let us enumerate the variables of $\phi$ by $x_1,\ldots,x_{\ell+2}$. Without loss of generality, $\tuple u=(x_1,\ldots,x_{\ell})$ and $\tuple v=(x_{\ell+1},x_{\ell+2})$ are such that $\phi$ is an $(T,\tuple u,=,\tuple v)$-implication in $\sA$. Note that $S(\tuple u)\cap S(\tuple v)=\emptyset$ by the definition of an implication and since $\phi$ has $\ell+2$ variables. The set $\phi^{\sA}$ can then be viewed as an $(\ell+2)$-ary relation $R(x_1,\ldots,x_{\ell+2})$.

Using the $k$-neoliberality and finite duality of $\sB$, we can find $n\geq 1$ and $S := \{a^i_j\in B\mid 1\leq i\leq n-1, 1\leq j\leq \ell + 1 \}
\cup \{b^i_j\in B\mid 1\leq i\leq n-1, 1\leq j\leq \ell+1 \} \cup \{ 
a^n_{\ell + 1}, b^n_{\ell + 1} \} \cup \{ c_1, d_1, \ldots, d_{\ell} \}
$ with $a_{\ell+1}^1 = b_{\ell+1}^n,a_{\ell+1}^n = b_{\ell+1}^1$  
such that all of the following hold:
\begin{itemize}
\item $(d_1,\ldots ,d_\ell) \in T$
\item $(d_1,\ldots ,d_\ell, a^1_{\ell+1}) \in I_{\ell+1}^B$
\item $(d_1,\ldots ,d_\ell, a^1_{\ell+1},a^1_{\ell+1})\in R$
\item $(c_1, d_2, \ldots ,d_\ell) \in T$
\item $(c_1, d_2, \ldots ,d_\ell, a^n_{l+1}, a^n_{l+1}) \in T$
\item $(a^i_1,\ldots ,a^i_\ell,a^i_{\ell+1},a^{i+1}_{\ell+1})\in R$ for $i \in [n-1]$,
\item $(b^i_1,\ldots ,b^i_\ell, b^i_{\ell+1},b^{i+1}_{\ell+1})\in R$ for $i \in [n-1]$.
\end{itemize}

To find these elements, let first $(d_1,\ldots ,d_\ell)\in T$ be arbitrary. We claim that there exists $a^1_{\ell+1}$ different from all $d_1,\ldots ,d_\ell$ such that $(d_1,\ldots ,d_\ell, a^1_{\ell+1},a^1_{\ell+1})\in R$. Indeed, observe that otherwise the projection of $\phi$ to its first $\ell+1$ coordinates is an $(O_D, =)$-implication where $O_D$ is the orbit of $d_1,\ldots ,d_\ell$. Hence, we are done by~\Cref{lemma:l+1=rel}. 
By the fact that $\sB$ has no $k$-algebraicity, we can find $a^n_{\ell+1}$ such that $a^n_{\ell+1} \neq a^1_{\ell+1}$, the choice of the value of $n$ is made later, and both these elements lie in the same orbit under the stabilizer of $\Aut(\sB)$ by $d_2,\ldots,d_{\ell}$. In particular, it follows that $(d_2,\ldots,d_\ell, a^n_{\ell+1})$ and $(d_2,\ldots,d_\ell,a^1_{\ell+1})$ lie in the same orbit under $\Aut(\sB)$, and hence there exists $c_1\in B$ such that $(c_1,d_2,\ldots,d_\ell,a^n_{\ell+1})$ and
$(d_1,d_2,\ldots,d_\ell,a^1_{\ell+1})$ lie in the same orbit under $\Aut(\sB)$. In particular, $(c_1,d_2,\ldots,d_\ell)\in T$ and it follows that $(c_1,d_2,\ldots,d_\ell,a^n_{\ell+1},a^n_{\ell+1})\in R$. 

 If $k \geq 3$ we set $n := 2$. In this case, by $2$-transitivity of $\sB$  we have $(a^1_{\ell+1},a^n_{\ell+1})\in I^B_2 \subseteq \proj_{(\ell+1,\ell+2)}(R)$, and hence there exist $a^1_1,\ldots,a^1_\ell\in B$ such that $(a^1_1,\ldots ,a^1_\ell,a^1_{\ell+1},a^2_{\ell+1})\in R$. Since
 $a^1_{\ell+1} = b^n_{\ell+1}, a^n_{\ell+1} =b^1_{\ell+1}$, by the same argument we have that there exist
 $b^1_1,\ldots,b^1_\ell \in B$ such that $(b^1_1,\ldots , b^1_\ell,b^1_{\ell+1},b^2_{\ell+1}) \in R$.


If $k= 2$, $\ell=2$, let $n$ be such that for every finite structure $\sX$ in the signature of $\sB$, it holds that if every substructure of $\sX$ of size at most $n-1$ maps homomorphically to $\sB$, then so does $\sX$. This is possible by the finite duality of $\sB$.
Then, in the same way as in the proof of~\Cref{lemma:l+1=rel},
for all $i \in [n-1]$ we find
$(a^i_1,a^i_2,a^i_3, a^{i+1}_3), (b^i_1,b^i_2,b^{i}_3,b^{i+1}_3) \in R$ as desired. 

We now  assume on the contrary that there exists a chain of J\'{o}nsson operations $(J_1, \ldots, J_{2m+1})$ satisfying~\Cref{def:QuasiJonsson}. By~\Cref{obs:idemJonsson} we may assume that the operations are idempotent on the set $S$.
We will now prove a claim similar to~\Cref{claim:equall+1}
and then make a final step which shows that $J_{2m+1}(a^n_{l+1}, a^n_{l+1}, a^n_{l+1}) = a^1_{l+1}$. It contradicts the assumption on the idempotency of $(J_1, \ldots, J_{2m+1})$ on $S$ and completes the proof of the lemma.

\begin{claim}
\label{claim:equall+2}
For all $k \in [2m+1]$ we have both of the following:
\begin{itemize}
\item $J_{k}(a^1_{\ell+1}, a^1_{\ell+1}, a^n_{\ell+1}) = a^1_{\ell+1}$
\item $J_{k}(a^1_{\ell+1}, a^n_{\ell+1}, a^n_{\ell+1}) = a^1_{\ell+1}$
\end{itemize}
\end{claim}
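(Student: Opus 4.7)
The plan is to prove \Cref{claim:equall+2} by induction on $k \in [2m+1]$, following the structure of the proof of \Cref{claim:equall+1}, but with additional care because now $\tuple u = (x_1, \ldots, x_\ell)$ and $\tuple v = (x_{\ell+1}, x_{\ell+2})$ are disjoint, and the two anchor tuples $\tuple t_{1,1}$ and $\tuple t_{n,n}$ from the setup disagree in their first coordinate ($d_1$ versus $c_1$). This mismatch will be the main technical obstacle throughout.

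For the base case $k = 1$, the first identity $J_1(a^1_{\ell+1}, a^1_{\ell+1}, a^n_{\ell+1}) = a^1_{\ell+1}$ follows at once from~(\ref{eq:J1}) together with idempotency of $J_1$ on $S$. For the second identity, I would run an inner induction on $i \in [n]$ to show $J_1(a^1_{\ell+1}, a^i_{\ell+1}, a^n_{\ell+1}) = a^1_{\ell+1}$. In the inductive step, apply $J_1$ componentwise to the three tuples
\begin{align*}
\tuple t_{1,1} &:= (d_1, d_2, \ldots, d_\ell, a^1_{\ell+1}, a^1_{\ell+1}),\\
\tuple t_{i,i+1} &:= (a^i_1, a^i_2, \ldots, a^i_\ell, a^i_{\ell+1}, a^{i+1}_{\ell+1}),\\
\tuple t_{n,n} &:= (c_1, d_2, \ldots, d_\ell, a^n_{\ell+1}, a^n_{\ell+1}),
\end{align*}
which all lie in $R$ by construction; since $J_1$ preserves $R$, the result lies in $R$. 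By~(\ref{eq:Ji}) and idempotency, coordinates $2, \ldots, \ell$ collapse to $d_2, \ldots, d_\ell$; coordinate $\ell+1$ collapses to $a^1_{\ell+1}$ by the inner induction hypothesis; and coordinate $\ell+2$ is $J_1(a^1_{\ell+1}, a^{i+1}_{\ell+1}, a^n_{\ell+1})$, the value we wish to evaluate.

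The hardest part will be verifying that the first $\ell$ coordinates of the resulting tuple lie in $T$, so that the $(T,=)$-implication forces the last two coordinates to be equal. In the $\ell+1$-variable case, the two anchor tuples shared all of their first $\ell-1$ coordinates, making this automatic. Here, the first coordinate becomes $J_1(d_1, a^i_1, c_1)$, whose value is not directly pinned down by any of the J\'{o}nsson identities. To handle this, I would exploit the fact that $d_1$ and $c_1$ lie in the same orbit under the stabilizer of $d_2,\ldots,d_\ell$, using the automorphism $\tau\in\Aut(\sB)$ fixing $d_2,\ldots,d_\ell$ and sending $d_1 \mapsto c_1$: either (i) choose $a^i_1$ inside this orbit so that $J_1(d_1,a^i_1,c_1)$ can be shown to belong to the orbit of $d_1$ under the stabilizer of $d_2, \ldots, d_\ell$, whence $(J_1(d_1,a^i_1,c_1), d_2, \ldots, d_\ell)$ lies in $T$; or (ii) enrich the setup so that $a^i_1 \in \{d_1, c_1\}$ can be arranged, reducing to $J_1(d_1,d_1,c_1) = d_1$ via~(\ref{eq:J1}). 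Once this is resolved, the inner inductive step closes and yields $J_1(a^1_{\ell+1}, a^{i+1}_{\ell+1}, a^n_{\ell+1}) = a^1_{\ell+1}$.

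The inductive step from $k$ to $k+1$ mirrors the structure of \Cref{claim:equall+1}: when $k$ is even, identity~(\ref{eq:J2i2i+1}) immediately transfers the first identity from $J_k$ to $J_{k+1}$, and the second identity is re-established by repeating the inner induction above with $J_{k+1}$ in place of $J_1$; when $k$ is odd, identity~(\ref{eq:J2i-12i}) immediately transfers the second identity, and the first identity is established by switching to the $b$-chain---using the boundary conditions $a^n_{\ell+1} = b^1_{\ell+1}$ and $b^n_{\ell+1} = a^1_{\ell+1}$---and running an analogous inner induction on $i \in [n]$ with anchor tuples $\tuple s_{1,1}, \tuple s_{i,i+1}, \tuple s_{n,n}$ built from the $b$-chain. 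The first-coordinate difficulty reappears in each inner induction and is resolved by the same mechanism.
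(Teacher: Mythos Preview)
Your overall induction scheme is correct and matches the paper, and you have correctly isolated the real difficulty: after applying $J_k$ to $\tuple t_{1,1},\tuple t_{i,i+1},\tuple t_{n,n}$, the first coordinate is $J_k(d_1,a^i_1,c_1)$, and none of the J\'{o}nsson identities evaluates this directly. However, both of your proposed fixes have gaps. For (i), even if $a^i_1$ lies in the orbit of $d_1$ under the stabilizer of $d_2,\ldots,d_\ell$, there is no reason for $J_k(d_1,a^i_1,c_1)$ to lie in that orbit: $J_k$ is just a polymorphism, not an automorphism, and does not in general respect orbits. For (ii), the elements $a^i_1$ are already constrained by $(a^i_1,\ldots,a^i_\ell,a^i_{\ell+1},a^{i+1}_{\ell+1})\in R$; there is no mechanism in the setup to force $a^i_1\in\{d_1,c_1\}$ for all $i$.

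The paper resolves the first-coordinate problem differently, and more simply: it uses that $T$ itself is pp-definable in $\sA$ and hence preserved by every $J_k$. Since $\Aut(\sB)$ is $1$-transitive, $\proj_{(1)}(T)=B$, so for each $a^i_1$ one can choose $e_2,\ldots,e_\ell$ with $(a^i_1,e_2,\ldots,e_\ell)\in T$. Applying $J_k$ to the triple $(d_1,d_2,\ldots,d_\ell)$, $(a^i_1,e_2,\ldots,e_\ell)$, $(c_1,d_2,\ldots,d_\ell)$, all of which lie in $T$, produces a tuple in $T$; identity~(\ref{eq:Ji}) collapses coordinates $2,\ldots,\ell$ to $d_2,\ldots,d_\ell$, so $(J_k(d_1,a^i_1,c_1),d_2,\ldots,d_\ell)\in T$ as needed. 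The same trick handles the $b$-chain step in the odd case. With this correction, your argument goes through exactly as the paper's does.
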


\begin{proof}
The proof goes by induction on $k \in [2m+1]$.

\textbf{(Base case: $k = 1$)}
For the first item we simply use the idempotency of $J_1$ on $S$ and~(\ref{eq:J1}) in~\Cref{def:QuasiJonsson}. In order to obtain the second item, by  induction on $i \in [n]$ we prove $J_1(a^1_{\ell+1}, a^i_{\ell+1}, a^n_{\ell+1}) = a^1_{\ell+1}$.
By the first item, we have that $J_1(a^1_\ell, a^1_\ell, a^n_\ell) = a^1_\ell$.  For the induction step, we assume that
 $J_1(a^1_\ell,a^{i}_{\ell+1}, a^1_{\ell+1}) = a^1_{\ell+1}$
for some $i\in[n- 1]$ and show
  $J_1(a^1_{\ell+1},a^{i+1}_{\ell+1}, a^1_{\ell+1}) = a^1_{\ell+1}$.

Put
\begin{align*}
   \tuple t_{1,1} := \begin{pmatrix}
       d_1 \\
       d_2 \\
       \ldots \\
       d_\ell \\
       a^1_{\ell+1} \\
       a^1_{\ell+1}
    \end{pmatrix},
    \tuple t_{i,i+1} := \begin{pmatrix}
       a^i_1 \\
       a^i_2 \\
       \ldots \\
       a^i_{\ell} \\
       a^i_{\ell+1} \\
       a^{i+1}_{\ell+1}
    \end{pmatrix},
      \tuple t_{n,n} := \begin{pmatrix}
       c_1 \\
       d_2 \\
       \ldots \\
       d_\ell \\
       a^n_{\ell+1} \\
       a^n_{\ell+1}
    \end{pmatrix}.
\end{align*}
for all $i \in [n-1]$.
By the discussion above, all these tuples are in $R$. 
Since $J_1$ preserves $R$, it follows that $$J_1(\tuple t_{1,1}, \tuple t_{i,i+1},\tuple t_{n,n})\in R$$ for all $i \in [n-1]$. This means that
\begin{align*}
    \begin{pmatrix}
         && J_1(c_1 ,a^i_1,d_1) \\
       d_2 &=& J_1(d_2 ,a^i_2,d_2) \\
       \ldots && \ldots \\
       d_{\ell} &=& J_1(d_{\ell},a^i_{\ell}, d_{\ell}) \\
       a^1_{\ell+1} &=& J_1(a^1_{\ell+1},a^i_{\ell+1},a^n_{\ell+1}) \\
    &&J_1(a^1_{\ell+1},a^{i+1}_{\ell+1},a^n_{\ell+1})
    \end{pmatrix} \in R.
\end{align*}
where all but the last of the displayed equalities follow by~(\ref{eq:Ji}) in Definition~\ref{def:QuasiJonsson} and the last one by the induction hypothesis. 
In order to demonstrate that $(J_1(a^1_{\ell+1},a^{i+1}_{\ell+1},a^n_{\ell+1}) = a^1_{\ell+1}$, we show that 
$(J_1(c_1 ,a^i_1,d_1), d_2, \ldots, d_{\ell}) \in T$. To this end consider  
\begin{align*}
    \tuple s_{cd}:=\begin{pmatrix}
       c_1 \\
       d_2 \\
       \ldots \\
       d_{\ell}
    \end{pmatrix},
    \tuple s_{ae}:=\begin{pmatrix}
       a^i_1 \\
       e_2 \\
       \ldots \\
       e_\ell
    \end{pmatrix},
    \tuple s_{d} :=\begin{pmatrix}
       d_1 \\
       d_2 \\
       \ldots \\
       d_\ell
    \end{pmatrix}.
\end{align*}
Indeed, since $\sB$ is transitive we can find $e_2, \ldots, e_{\ell}$ so that 
$\tuple s_{ae}$ is in $T$. The other two tuples are in $T$ by the discussion above. By~(\ref{eq:Ji}) in Definition~\ref{def:QuasiJonsson} we have 
that $J_1(\tuple s_{cd}, \tuple s_{ae}, \tuple s_{d}) = (J_1(c_1, 
a^i_1, d_1), d_2, \ldots, d_{\ell-1})$. The relation $T$ is pp-definable in $\sA$, and hence we obtain that the outcome tuple is also in $T$. In consequence, $J_1(a^1_{\ell+1},a^{i+1}_{\ell+1},a^n_{\ell+1}) = a^1_{\ell+1}$, and we are done with the base case for $k=1$.

\noindent
\textbf{(Induction step.)}
We now assume that the both items in the formulation of the claim hold for some $k \in [2m]$ and prove that both items hold for $k+1$. The proof depends on whether $k$ is even or odd. In the former case we have $J_{k+1}(a^1_{\ell+1}, a^1_{\ell+1}, a^n_{\ell+1}) = J_{k}(a^1_{\ell+1}, a^1_{\ell+1}, a^n_{\ell+1}) = a^1_1$ by~(\ref{eq:J2i2i+1}). 
To obtain 
$J_{k+1}(a^1_{\ell+1}, a^n_{\ell+1}, a^n_\ell) = a^1_{\ell+1}$ we proceed in the same way as in the base case for $J_1$. 

On the other hand, if $k$ is odd, then by~(\ref{eq:J2i-12i}) in~\Cref{def:QuasiJonsson}, we have $J_{k+1}(a^1_{\ell + 1}, a^n_{\ell+1}, a^n_{\ell+1}) = J_{k}(a^1_{\ell + 1}, a^n_{\ell+ 1}, a^n_{\ell+1}) = a^1_{l+1}$. By the assumption, 
it is the same as $J_{k+1}(a^1_{\ell+1}, b^1_{\ell+1}, a^n_{\ell+1}) = a^1_{{\ell+1}}$.
By the induction on $i \in [n]$ we will now show that $J_{k+1}(a^1_{\ell+1}, b^i_{\ell+1}, a^n_{\ell+1}) = a^1_{\ell+1}$. Since $b^n_{\ell+1}$ is $a^1_{\ell+1}$
we will have $J_{k+1}(a^1_{\ell+1}, a^1_{\ell+1}, a^n_{\ell+1}) = a^1_{\ell+1}$ as desired.
Assume that $J_{k+1}(a^1_{\ell+1}, b^i_{\ell+1}, a^n_{\ell+1}) = a^1_{\ell+1}$ for some $i$, we will show that $J_{k+1}(a^1_{\ell+1}, b^{i+1}_{\ell+1}, a^n_{\ell+1}) = a^1_{\ell+1}$.
To this end we observe that for all $i \in [n-1]$ the tuples

\begin{align*}
    \tuple t_{1,1} := \begin{pmatrix}
       d_1 \\
       d_2 \\
       \ldots \\
       d_\ell \\
       a^1_{\ell+1} \\
       a^1_{\ell+1}
    \end{pmatrix},
    \tuple t'_{i,i+1} := \begin{pmatrix}
       b^i_1 \\
       b^i_2 \\
       \ldots \\
       b^i_{\ell} \\
       b^i_{\ell+1} \\
       b^{i+1}_{\ell+1}
    \end{pmatrix},
   \tuple t_{n,n} := \begin{pmatrix}
       c_1 \\
       d_2 \\
       \ldots \\
       d_\ell \\
       a^n_{\ell+1} \\
       a^n_{\ell+1}
    \end{pmatrix},
\end{align*}
 are in $R$ by construction. 
Since $J_{k+1}$ preserves $R$, it follows that $$J_{k+1}(\tuple t_{1,1}, \tuple t'_{i,i+1},\tuple t_{n,n})\in R$$ for all $i \in [n-1]$. This means that
\begin{align*}
    \begin{pmatrix}
         && J_{k+1}(d_1  ,b^i_1, c_1) \\
       d_2 &=& J_{k+1}(d_2 ,b^i_2,d_2) \\
       \ldots && \ldots \\
       d_{\ell} &=& J_{k+1}(d_{\ell-1},b^i_{\ell}, d_{\ell-1}) \\
       a^1_{\ell+1} &=& J_{k+1}(a^1_{\ell+1},b^i_{\ell+1},a^n_{\ell+1}) \\
    &&J_{k+1}(a^1_{\ell+1},b^{i+1}_{\ell+1},a^n_{\ell+1})
    \end{pmatrix} \in R.
\end{align*}
where all but the last of the displayed equalities follow by~(\ref{eq:Ji}) in Definition~\ref{def:QuasiJonsson} and the last one by the induction hypothesis. 
In order to obtain $(J_{k+1}(a^1_{\ell+1},b^{i+1}_{\ell+1},a^n_{\ell+1}) = a^1_{\ell+1}$, it is enough to show that the tuple 
$(J_{k+1}(d_1 ,b^i_1,c_1), d_2, \ldots, d_{\ell})$ is in $T$. To this end consider  
\begin{align*}
    \tuple s_{d} :=\begin{pmatrix}
       d_1 \\
       d_2 \\
       \ldots \\
       d_\ell
    \end{pmatrix},
    \tuple s_{be}:=\begin{pmatrix}
       b^i_1 \\
       e_2 \\
       \ldots \\
       e_\ell
    \end{pmatrix},
     \tuple s_{cd}:=\begin{pmatrix}
       c_1 \\
       d_2 \\
       \ldots \\
       d_{\ell}
    \end{pmatrix}.
\end{align*}
Since $\sB$ is transitive we can find $e_2, \ldots, e_{\ell}$ so that 
$\tuple s_{be}$ is in $T$. The other two tuples are in $T$ by the discussion above. By~(\ref{eq:Ji}) in Definition~\ref{def:QuasiJonsson} we have 
that $J_{k+1}(\tuple s_{cd}, \tuple s_{be}, \tuple s_{d}) = (J_{k+1}(d_1, 
b^i_1, c_1), d_2, \ldots, d_{\ell})$. Since $T$ is pp-definable in $\sA$, the latter tuple is in $T$. In consequence we obtain $J_{k+1}(a^1_{\ell+1},b^{i+1}_{\ell+1},a^n_{\ell+1}) = a^1_{\ell+1}$. It completes the proof of the claim.
\end{proof}

By the claim we have $J_{2m+1}(a^1_{\ell+1}, a^n_{\ell+1}, a^n_{\ell+1}) = a^1_{\ell+1}$. In order to complete the proof of the lemma, by induction
on $i \in [n]$ we will show that $J_{2m+1}(a^i_{\ell+1}, a^n_{\ell+1}, a^n_{\ell+1}) = a^1_{\ell+1}$ for all $i \in [n]$. The base case is already done. 
For the induction step assume that  we have $J_{2m+1}(a^i_{\ell+1}, a^n_{\ell+1}, a^n_{\ell+1}) = a^1_{\ell+1}$ for some $i \in [n]$. 
Since $J_{2m+1}$ preserves $R$, it follows that $$J_{2m+1}(\tuple t_{i,i+1}, \tuple t_{n,n},\tuple t_{n,n})\in R,$$ for all $i \in [1, n-1]$. This means that
\begin{align*}
    \begin{pmatrix}
       c_1 &=& J_{2m+1}(a^i_1,c_1,c_1) \\
       d_2 &=& J_{2m+1}(a^i_2, d_2,d_2) \\
       \ldots && \ldots \\
       d_{\ell} &=& J_{2m+1}(a^i_{\ell},d_{\ell},d_{\ell}) \\
       a^1_{\ell+1} &=& J_{2m+1}(a^i_{\ell+1},a^n_{\ell+1},a^n_{\ell+1}) \\
    &&J_{2m+1}(a^{i+1}_{\ell+1},a^{n}_{\ell+1},a^n_{\ell+1})
    \end{pmatrix} \in R.
\end{align*}
where the first $l$ equalities follow by the idempotency of $J_{2m+1}$ on $S$ and~(\ref{eq:J2n+1}) in Definition~\ref{def:QuasiJonsson} and the last one by the induction hypothesis. 
Since the tuple $(d_1,\ldots,d_\ell)$ is in $T$ and $\phi$ is an $(T,=)$-implication, we obtain $J_{2m+1}(a^i_\ell,a^{n}_{\ell+1}, a^n_{\ell+1}) =  J_{2m+1}(a^{i+1}_{\ell+1},a^{n}_{\ell+1}, a^n_{\ell+1}) = a^{1}_{\ell+1}$, as desired. 
It completes the proof of the lemma.$\qed$

\subsection{Proof of~\Cref{cor:noequality}}

    Assume on the contrary that the required $h$ does not exist and that $\tuple v'$ is a maximal extension of $\tuple v$ for which there exists 
    $h\in\phi^{\sA}$ such that $h(\tuple v')$ is injective. Since $\sB$ is 
    $k$-homogeneous and contains all orbits of $k$-tuples there exists a pp-formula $\psi'$ over variables $\tuple v'$ which defines the orbit of $h(\tuple v')$. Observe that  $\tuple v'$ is of length at least $2$. If $k=2$, then $\psi'$ contains a single atom for every pair of variables. 
    If $k > 2$ and the length of $\tuple v'$ is equal to or greater than $k$, then $\psi'$ contains a single atom for every $k$-tuple of variables. Otherwise we simply write $v_1 \neq v_2$ for every pair of variables in $\tuple v'$. In any case, $\psi'$ is just a conjunction of some atoms built over symbols in the signature of $\sB$ and variables in $\tuple v'$. 
    Let now $\tuple v''$ be an extension to any variable $v_1$ in $V \setminus \scope(\tuple v')$. Clearly, there exists $v_2 \in \scope(\tuple v')$ such that the projection of $\phi \wedge \psi'$ to $(v_1, v_2)$ is $=$.
    Let now $\psi''$ be a maximal subformula of $\psi'$ such that this property holds, i.e. the projection of $\phi \wedge \psi''$ to $(v_1, v_2)$ is $=$. \chmichal{Then there exists an atom $O_{\tuple w}(\tuple w)$ with 
    $\tuple w = (w_1, \ldots, w_\ell)$ in $\psi''$
    and a subformula $\psi$ of $\psi''$
    such that $\psi'' := \psi \wedge O_{\tuple w}(\tuple w)$ 
    and that the projection of 
    $\phi \wedge \psi$ to $(w_1, \ldots, w_\ell, v_1, v_2)$ is
    a $(O_{\tuple w}, \tuple w, =, (v_1, v_2))$-implication.}
\Cref{lemma:l+1=rel,lemma:l+2=rel} now contradict the assumption that $\sA$
    is preserved by a chain of quasi J\'{o}nsson operations, which completes the proof of the corollary.$\qed$ 

\subsection{Proof of \texorpdfstring{\Cref{lemma:kimpl}}{Lemma~ref{lemma:kimpl}}}

    Let us assume as in~\Cref{def:composition} that $\tuple v^1=\tuple u^2$ and that $\phi_1,\phi_2$ do not share any further variables. Let $V_1$ be the set of variables of $\phi_1$, let $V_2$ be the set of variables of $\phi_2$, and let $V$ be the set of variables of $\phi$.
    We will first prove the last sentence of the first part of~\Cref{lemma:kimpl} about $O_1 O_3$-mappings.
    To this end, let $O_1\subseteq \proj_{\tuple u^1}(\phi_1^{\sA}),O_2\subseteq \proj_{\tuple v^1}(\phi_1^{\sA}),O_3\subseteq \proj_{\tuple v^2}(\phi_2^{\sA})$ be orbits under $\gG$, and suppose that $\phi_1^{\sA}$ contains an $O_1 O_2$-mapping $f$ and $\phi_2^{\sA}$ contains an $O_2 O_3$-mapping $g$. Using that $f|_{V_1\cap V_2}$ is contained in the same orbit under $\gG$ as $g|_{V_1\cap V_2}$, find a mapping $h\colon V_1\cup V_2\rightarrow A$ such that $h|_{V_1}$ is contained in the same orbit under $\gG$ as $f$ and $h|_{V_2}$ is contained in the same orbit as $g$. It follows that $h\in(\phi_1\wedge \phi_2)^{\sA}$, and hence $h|_{V}\in \phi^{\sA}$ is an $O_1 O_3$ mapping.
    On the other hand, if $\phi^{\sA}$ contains an $O_1 O_3$-mapping $h'$, we can extend it to a mapping $h\in (\phi_1\wedge \phi_2)^{\sA}$ such that $f:=h|_{V_1}\in \phi_1^{\sA}$ and $g:=h|_{V_2}\in \phi_2^{\sA}$ by the definition of $\phi$. Setting $O_2$ to be the orbit of $h(\tuple v^1)$, we get that $f\in \phi_1^{\sA}$ is an $O_1 O_2$-mapping and $g\in\phi_2^{\sA}$ is an $O_2 O_3$-mapping as desired.
    
    Observe now that the fact that $\phi$ is a $(C,\tuple u^1,E,\tuple v^2)$-pre-implication in $\sA$ follows from the previous paragraph. Indeed, it follows immediately that $\phi$ satisfies items (4) and (5) of~\Cref{def:implication}. To see that items (2) and (3) are satisfied as well
    , take any $g\in\phi_2^{\sA}$ with $g(\tuple v^2)\notin E$, let $O_2$ be the orbit of $g(\tuple u^2)$ under $\gG$, and let $O_3$ be the orbit of $g(\tuple v^2)$. Since $\proj_{\tuple v^1}(\phi_1)= \proj_{\tuple u^2}(\phi_2)$, we can find an $O_1 O_2$-mapping in $\phi_1^{\sA}$, and $\phi^{\sA}$ contains an $O_1 O_3$-mapping witnessing that $C\subsetneq \proj_{\tuple u^1}(\phi^{\sA}), E \subsetneq \proj_{\tuple v^2}(\phi^{\sA})$.

    To prove the second part of the lemma, we will prove that for all orbits $O_1\subseteq C, O_2\subseteq D, O_3\subseteq E$ under $\fG$ such that $\phi_1^{\sA}$ contains an injective $O_1 O_2$-mapping $f$ and $\phi_2^{\sA}$ contains an injective $O_2 O_3$-mapping $g$, $\phi^{\sA}$ contains an injective $O_1 O_3$-mapping $h$.
    Note that as in the previous paragraph, it is enough to find an injective mapping $h\in(\phi_1\wedge \phi_2)^{\sA}$ such that $h|_{V_1}$ is contained in the same orbit under $\gG$ as $f$ and $h|_{V_2}$ is contained in the same orbit as $g$. 
    Let $U_1$ be the set of all injective tuples of variables from $V_1$ of length at most $k$, and for every $\tuple v\in U_1$, let us denote the orbit of $f(\tuple v)$ by $O^f_{\tuple v}$. Similarly, let $U_2$ be the set of all injective tuples of variables from $V_2$ of length at most $k$, and for every tuple $\tuple v\in U_2$, let $O^g_{\tuple v}$ be the orbit of $g(\tuple v)$. 
    Let us define a formula $\psi$ with variables from $V_1\cup V_2$ by $$\psi\equiv\bigwedge\limits_{\tuple v\in U_1}O^{f}_{\tuple v}(\tuple v)\wedge \bigwedge\limits_{\tuple v\in U_2}O^{g}_{\tuple v}(\tuple v).$$ 
    Note that since $\sA$ is a first-order expansion of the canonical $k$-ary structure of $\gG$, all orbits $O^f_{\tuple v}, O^g_{\tuple v}$ are pp-definable from $\sA$, and hence $\psi$ is equivalent to a pp-formula over $\sA$. Since $\gG$ is $k$-neoliberal, and since $\sA$ is preserved by a chain of quasi J\'{o}nsson operations, we can proceed as in the proof of~\Cref{cor:noequality} and use~\Cref{lemma:l+1=rel,lemma:l+2=rel} to show that $\psi^{\sA}$ contains an injective mapping $h$. By the construction and by the $k$-homogeneity of $\gG$, this mapping satisfies our assumptions.
$\qed$

\subsection{Proof of \texorpdfstring{\Cref{observation:impl_properties}}{Observation~\ref{observation:impl_properties}}}

    Let us rename the variables of $\phi_2$ as in~\Cref{def:composition} so that $\tuple v^1=\tuple u^2$, and $\phi_1$ and $\phi_2$ do not share any further variables.
    
    For (1), observe that the number of variables of a $(C,\tuple u,C,\tuple v)$-pre-implication is equal to $2k-|\scope(\tuple u)\cap \scope(\tuple v)|$. By~\Cref{lemma:kimpl}, $\phi$ is a $(C,\tuple u_1,C,\tuple v_2)$-pre-implication, whence    $p=2k-|\scope(\tuple u^1)\cap \scope(\tuple v^2)|$, and since $\scope(\tuple u^1)\cap \scope(\tuple v^2)$ is contained both in $\scope(\tuple u^1)\cap \scope(\tuple v^1)$ and in $\scope(\tuple u^2)\cap \scope(\tuple v^2)$, (1) follows by applying the same reasoning to $p_1$ and $p_2$.

    For (2), observe that by the previous paragraph, $p=p_1$ if, and only if, $\scope(\tuple u^1)\cap\scope(\tuple v^2) = \scope(\tuple u^1)\cap\scope(\tuple v^1)$. Similarly, $p=p_2$ if, and only if, $\scope(\tuple u^1)\cap\scope(\tuple v^2) = \scope(\tuple u^2)\cap\scope(\tuple v^2)$, and (2) follows by the fact that $\tuple v^1=\tuple u^2$.
    
    For (3), suppose that $v^1_i\in \scope(\tuple u^1)$. It follows by (2) that $v^1_i=u^2_i\in\scope(\tuple v^2)$, and since $\phi_2$ was obtained by renaming variables of $\phi_1$, and in particular, $\tuple u^2$ was obtained by renaming $\tuple v^1$, it follows that $u^1_i\in\scope(\tuple v^1)$.$\qed$

\subsection{Proof of \texorpdfstring{\Cref{observation:sinksource}}{Observation~\ref{observation:sinksource}}}

    The second part of the observation is immediate.
    To prove the first part, observe that since $\phi$ is a $(C,C)$-implication in $\sA$, it follows that $\mathbf{Vert}(C)$ is a sink in $\mathcal{B}_{\phi}$. Using the oligomorphicity of $\gG$ and the definition of a $(C,C)$-implication, we get that the induced subgraph of $\mathcal B_\phi$ on $\mathbf{Vert}(C)$ is finite and smooth. Hence, there exists a strongly connected component $S_1\subseteq \mathbf{Vert}(C)$ in $\mathcal{B}_\phi$ which is a sink in the induced subgraph, and hence also in $\mathcal B_\phi$.
    Similarly, one sees that $\mathbf{Vert}(E\backslash C)$ is a source in $\mathcal{B}_{\phi}$, and it contains a strongly connected component $S_2$ which is itself a source in $\mathcal B_{\phi}$.
$\qed$

\subsection{Proof of~\Cref{lemma:kcomplete}}

We will construct the desired complete implication as a conjunction of a power of $\phi$ and $I_\ell^{B}$, where 
$\ell$ is the number of variables of the power of $\phi$. 
Note that for every $n\geq 1$, the number of variables of $\phi^{\circ n}$ is at most $2k$ by~\Cref{def:composition} and this number never decreases with increasing $n$ by item (1) in~\Cref{observation:impl_properties}. 
Hence, there is $n\geq 1$ such that the number of variables of $\phi^{\circ n}$ is the biggest among all choices of $n$. Let us denote the number of variables of $\phi^{\circ n}$ by $\ell$; it follows that the number of variables of $(\phi^{\circ n})^{\circ m}$ is equal to $\ell$ for every $m\geq 1$. Let us replace $\phi$ by $\phi^{\circ n}$. It follows from~\Cref{cor:kimpl} that for every $m\geq 1$, $\phi^{\circ m}\wedge I_\ell^{B}$ is an injective $(C,C)$-implication. Now, it follows by item (3) in \Cref{observation:impl_properties} that there exists a unique bijection $\sigma\colon I_\phi\rightarrow I_\phi$ such that $u_i=v_{\sigma(i)}$ for every $i\in I_\phi$.
Replacing $\phi$ with a power of $\phi$ again, we can assume that $\sigma$ is the identity.

Now, we can find $m\geq 1$ such that the number of strongly connected components  of $\phi^{\circ m}$ is maximal among all possible choices of $m$. It follows that composing $\phi^{\circ m}$ with itself arbitrarily many times does not disconnect any vertices from $\mathbf{Vert}(C)$ which are contained in the same strongly connected component of $\mathcal{B}_{\phi^{\circ m}}$; we replace $\phi$ by $\phi^{\circ m}$. Taking another power of $\phi$ and replacing $\phi$ again, we can assume that every strongly connected component of $\mathcal B_\phi$ contains all loops. Now, setting $p$ to be the number of vertices of $\mathcal B_\phi$, we have that, replacing $\phi$ with $\phi^{\circ p}$, every strongly connected component of $\mathcal B_\phi$ contains all arcs, whence $\phi\wedge I_\ell^{B}$ is a complete injective $(C,C)$-implication.$\qed$

\subsection{Proof of \texorpdfstring{\Cref{lemma:diseqJonsson}}{Lemma~\ref{lemma:diseqJonsson}}}

Before we turn to the proof of~\Cref{lemma:diseqJonsson} we need some auxiliary results.

\begin{lemma}\label{lemma:full_complete-impl}
    Let $\sB$ be a $2$-neoliberal structure with  finite duality and  $\relstr A$ a first-order expansion of $\sB$ which pp-defines a quaternary $(C',\tuple u,C',\tuple v)$-implication $\rho$ where the binary  relation $C' \neq (E \cap I^B_2)$ is injective, $\proj_{\tuple u} \rho = \proj_{\tuple v} \rho = E$,   
    and:
    \begin{enumerate}
    \item every $OP$-mapping with injective $O \subseteq C'$ and injective $P \subseteq C'$ 
     may be chosen to be injective, and
     \item for every injective $P \nsubseteq C'$ there exist an injective $O \nsubseteq C'$ 
     such that $\rho$ contains an injective $OP$-mapping.  
    \end{enumerate}

    Then $\rho$ pp-defines 
    a complete quaternary $(C,\tuple u,C,\tuple v)$-implication $\phi$ with binary and injective $C$  and such that it contains every injective $f \in \rho^{\sA} $ with $f(\tuple u)  \in C$ and $f(\tuple v)  \in C$ as well as  every injective $f$ with $f(\tuple u) \in O_D$ and $f(\tuple v) \in O_D$ for some  injective orbital $O_D \nsubseteq C$.
\end{lemma}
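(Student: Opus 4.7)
The strategy mirrors the proof of \Cref{cor:critical}, but replaces the use of \Cref{lemma:kcomplete} — which relies on pp-definability of $I_2^B$ — by a direct construction that exploits hypotheses (1) and (2) on $\rho$. Let $\mathcal{B}_\rho^{\inj}$ be the subgraph of $\mathcal{B}_\rho$ whose arcs are witnessed by injective mappings. By (1), on injective orbits inside $C'$ this subgraph agrees with $\mathcal{B}_\rho$, and by the implication property these orbits are closed under outgoing arcs; oligomorphicity yields a strongly connected sink component whose union I call $C$. Dually, (2) ensures that every injective orbit outside $C'$ has an incoming arc in $\mathcal{B}_\rho^{\inj}$, which produces a strongly connected source component containing some injective orbital $O_D$. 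Both $C$ and $O_D$ are pp-definable from $\sA$ since $\sA$ is a first-order expansion of $\sB$ and pp-formulas built from $\rho$ (and its iterates) precisely describe reachability in $\mathcal{B}_\rho^{\inj}$.

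Next I would replace $\rho$ by a sufficiently high iterate $\psi := \rho^{\circ n}$. Using \Cref{observation:impl_properties}, for $n$ large enough the arity stabilizes at $4$, the index set $I_\rho$ stabilizes, $u_i = v_i$ for every $i \in I_\rho$, and every strongly connected component of $\mathcal{B}_{\psi}^{\inj}$ contains all its arcs and loops. This renders $\psi$ \emph{complete} in the sense of \Cref{def:kcomplimpl}, both relative to the sink $C$ and relative to the singleton component $\{O_D\}$.

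Finally, the finite-duality argument from \Cref{cor:critical} applies verbatim. Choose $d \geq 3$ such that any finite $\tau$-structure whose substructures of size at most $d-2$ embed into $\sB$ itself embeds into $\sB$, and set $\phi := \psi^{\circ d}$, identifying variables outside $I_\phi$ via \Cref{observation:impl_properties}(3). For any injective $f \in A^V$ with $f(\tuple u), f(\tuple v) \in C$, assemble a $\tau$-structure on the variables of the $d$-fold composition by gluing injective $OP$-mappings supplied by the completeness of $\psi$; finite duality together with the $2$-homogeneity of $\Aut(\sB)$ then extend $f$ to a solution, proving $f \in \phi^{\sA}$. The same reasoning with $O_D$ in place of $C$ — using that $\{O_D\}$ is its own complete loop in $\mathcal{B}_\psi^{\inj}$ — yields the $O_D$-part of the conclusion, and the properties listed in \Cref{def:implication} together with completeness are checked directly from the construction.

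The main obstacle will be running the finite-duality argument for $C$ and for $O_D$ \emph{simultaneously}, that is, choosing a single iterate $\psi$ that completes the sink and the source components of $\mathcal{B}_\rho^{\inj}$ at once and for which the same $d$ suffices. The finiteness of $\mathcal{B}_\rho^{\inj}$ makes this feasible — a common sufficiently large power of $\rho$ stabilizes the entire structure — but one must take care that the implication property and the pp-definability of both $C$ and $O_D$ survive both the iteration and the subsequent variable identification.
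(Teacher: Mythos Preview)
Your plan is correct and matches the paper's proof: iterate $\rho$ to a complete $\psi$, extract a pp-definable sink $C\subseteq C'$ and an injective orbital $O_D\nsubseteq C'$ carrying an injective self-loop (the paper gets $O_D$ directly from hypothesis~(2) plus pigeonhole rather than via a source component), set $\phi:=\psi^{\circ d}$, and run the finite-duality gluing argument from \Cref{cor:critical}. Two cosmetic slips worth fixing: finite duality is stated for \emph{homomorphisms}, not embeddings; and since the implication is quaternary with $\scope(\tuple u)\cap\scope(\tuple v)=\emptyset$, the index set $I_\phi$ is empty throughout and no variable identification is needed.
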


\begin{proof}
Consider $\rho^{\circ m}$ where $m$ is the number of different orbitals in $\sB$. It is easily seen that $\psi := \rho^{\circ m}$ is a complete $(C',C')$-implication. By~\Cref{lemma:kimpl}, we have that $\psi$ satisfies Item~1 and Item~2 with $\rho$ replaced by $\psi$. Since $C' \neq (E \cap I^B_2)$, the latter implies that there is an injective orbital $O_D \nsubseteq C'$ such that $\psi$ contains an injective $O_D O_D$-mapping.

\Cref{observation:sinksource} yields that there exist $C \subseteq C'$
such that $\mathbf{Vert}(C)$ is a strongly connected component which is a sink in $\mathcal{B}_{\psi}$. 
Observe that since $\sA$ is a first-order expansion of $\sB$ and since $\psi$ is complete, $C$ is pp-definable from $\sA$. Indeed, for any fixed orbit $O \subseteq C$ under $\Aut(\sB)$, $C$ is equal to the set of all orbits $P\subseteq C'$ such that $\psi^{\sA}$ contains an $O P$-mapping. 
Moreover, $\psi$ is easily seen to be a complete $(C,C)$-implication in $\sA$.
Furthermore, it satisifes Item~1 from the formulation of the lemma with $C'$ replaced with $C$ and contains an injective $O_D O_D$-mapping.

Since $\sB$ has finite duality, there exists a number $d \geq 3$ such that for every finite structure $\sX$ in the signature of $\sB$, it holds that if every substructure of $\sX$ of size at most $d-2$ maps homomorphically to $\sB$, then so does $\sX$. 
Set $\phi:=\psi^{\circ d}$. Let $V$ be the set of variables of $\phi$. It follows from~\Cref{lemma:kimpl} and the reasoning above that $\phi$ is
a complete $(C,\tuple u,C,\tuple v)$-implication in $\sA$ for some $\tuple u,\tuple v$,
that it again satisfies Item~1 from the formulation of the lemma with $C'$ replaced with $C$ and contains an injective $O_D O_D$-mapping.

We now proceed as in the proof of~\Cref{cor:critical}. 
Let $f_0\in A^V$  
be an assignment such that either $f_0(\tuple u), f_0(\tuple v) \in C$ or $f_0(\tuple u), f_0(\tuple v) \in O_D$.
Up to renaming variables, we can assume that $\psi$ is a $(C,\tuple u,C,\tuple v)$-implication. 
Let $\tuple w^0,\ldots,\tuple w^d$ be disjoint pairs of variables. 
Assume that $\tuple w^0=\tuple u$, $\tuple w^d=\tuple v$. 
For every $j\in\{0,\ldots,d-1\}$, let $\psi_{j+1}$ be the $(C,\tuple w^j,C,\tuple w^{j+1})$-implication obtained from $\psi$ by renaming $\tuple u$ to $\tuple w^j$ and $\tuple v$ to $\tuple w^{j+1}$. 
It follows that $\phi$ is equivalent to the formula obtained from $\psi_1\wedge\dots\wedge \psi_d$ by existentially quantifying all variables that are not contained in $\scope(\tuple u)\cup\scope(\tuple v)$. 
In order to prove our lemma, it is therefore enough to show that $f_0$ can be extended to a mapping $h\in (\psi_1\wedge\dots\wedge\psi_d)^{\sA}$.


Let $O$ be the orbit of $f_0(\tuple u)$ under $\Aut(\sB)$, and let $P$ be the orbit of $f_0(\tuple v)$. Let $f_1\in \psi_1^{\sA}$ be an injective $O P$-mapping, and for every $j\in\{2,\ldots,d\}$, let $f_j\in \psi_j^{\sA}$ be an injective $P P$-mapping. 
Note that such $f_j$ exists for every $j\in[d]$ since $\psi_j$ is complete and since $\mathbf{Vert}(C)$
is a strongly connected component in $\mathcal{B}_{\psi_j}$. 

Let $\tau$ be the signature of $\sB$. Let $X:=\scope(\tuple w^0)\cup\cdots\cup \scope(\tuple w^d)$, and let us define a $\tau$-structure $\sX$ on $X$ as follows. Recall that the relations from $\tau$ correspond to the orbits of injective pairs under $\Aut(\sB)$. For every relational symbol $R\in\tau$, we define $R^{\sX}$ to be the set of all tuples $\tuple w \in I^X_2$ such that there exists $j\in \{0,\ldots,d\}$ such that $\scope(\tuple w)\subseteq \scope(\tuple w^j)\cup\scope(\tuple w^{j+1})$ and $f_j(\tuple w)\in R$; here and in the following, the addition $+$ on indices is understood modulo $d+1$. 
We will show that $\sX$ has a homomorphism $h$ to $\sB$. If this is the case, it follows by the construction and by the $k$-homogeneity of $\Aut(\sB)$ that $h\in (\psi_1\wedge\dots\wedge\psi_d)^{\sA}$. Moreover, we can assume that $h|_{\scope(\tuple w^0)\cup\scope(\tuple w^d)}=f_0$ as desired and as in the proof of~\Cref{lemma:kimpl} we may argue that there exists $h$ which sends elements in $X$  to pairwise different elements in $A$.

Let now $Y\subseteq X$ be of size at most $d-2$. 
Then for cardinality reasons there exists $j\in[d-1]$ such that $Y\subseteq \scope(\tuple w^0)\cup\cdots\cup \scope(\tuple w^{j-1})\cup \scope(\tuple w^{j+1})\cup\cdots\cup \scope(\tuple w^{d})$. 
Observe that $f_{j+1}$ is a homomorphism from the induced substructure of $\sX$ on $\scope(\tuple w^{j+1})\cup \scope(\tuple w^{j+2})$ to $\sA$. 
Since the orbits of $f_{j+1}(\tuple w^{j+2})$ and $f_{j+2}(\tuple w^{j+2})$ agree by definition, by composing $f_{j+2}$ with an element of $\Aut(\sB)$ we can assume that $f_{j+1}(\tuple w^{j+2})=f_{j+2}(\tuple w^{j+2})$. 
We can proceed inductively and extend $f_{j+1}$ 
to $\scope(\tuple w^0)\cup\cdots\cup \scope(\tuple w^{j-1})\cup \scope(\tuple w^{j+1})\cup\cdots\cup \scope(\tuple w^{d})$ such that it is a homomorphism from the induced substructure of $\sX$ on this set to $\sA$. It follows that the substructure of $\sX$ induced on $Y$ maps homomorphically to $\sA$. Finite duality of $\sB$ yields that $\sX$ has a homomorphism to $\sA$ as desired. 
\end{proof}

\begin{lemma}
\label{lemma:4noJonsson}
Let $\sB$ be a $2$-neoliberal structure with finite duality and $\sA$ a first-order expansion of $\sB$ which pp-defines a complete quaternary
 $(C,\tuple u,C,\tuple v)$-implication $\phi$ with binary injective $C$, 
$\proj_{\tuple u} \rho = \proj_{\tuple v} \rho = E$ and such that $\phi^{\sA}$ contains 
    every injective $f \in A^{\scope(\tuple u)\cup \scope(\tuple v)}$ with $f(\tuple u) \in C$ and $f(\tuple v) \in C$ as well as every 
    injective $f \in A^{\scope(\tuple u)\cup \scope(\tuple v)}$ with $f(\tuple u) \in O_D$ and $f(\tuple v) \in O_D$ for some injective $O_D \nsubseteq C$. 
Then  $\sA$ is not preserved by any chain of quasi J\'{o}nsson operations. 
\end{lemma}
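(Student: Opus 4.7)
The plan is to mirror the contradiction strategy from the proofs of~\Cref{lemma:kcritical} and~\Cref{lemma:l+2=rel}, adapted to the $k=2$ setting where $\sB$ is only $1$-transitive and where the implication concludes membership in the injective binary orbital $C$ rather than in the equality relation. Suppose for contradiction that $\sA$ is preserved by a chain $(J_1, \ldots, J_{2n+1})$ of quasi J\'onsson operations. By~\Cref{obs:idemJonsson}, each $J_i$ may be taken idempotent on a sufficiently large finite subset $S\subseteq A$ to be specified below. Viewing $R:=\phi^{\sA}$ as a $4$-ary relation, I would fix $(c,d)\in C$ and $(a,b)\in O_D$ with all four entries distinct; using the finite duality of $\sB$ and the pp-definability of $I_2^B$ from~\Cref{lemma:diseqJonsson}, I would construct two sequences of pairwise distinct $O_D$-pairs $(a^1,b^1),\dots,(a^N,b^N)$ and $(c^1,d^1),\dots,(c^N,d^N)$ (closed up via the swaps $a^N=b^1$ and $c^N=d^1$) such that each linking tuple $(a^i,b^i,a^{i+1},b^{i+1})$ and $(c^i,d^i,c^{i+1},d^{i+1})$ lies in $R$. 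Their membership in $R$ is granted by the fullness hypothesis that $\phi^{\sA}$ contains every injective $O_D O_D$-mapping, and existence of the chains is a standard finite-duality argument applied to an auxiliary $\tau$-structure, exactly as in the $k=2$ case of the proof of~\Cref{lemma:l+1=rel}.

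Next, I would prove by induction on $k\in [2n+1]$ that the coordinatewise application of $J_k$ to appropriate triples of $4$-tuples in $R$, of shapes modeled on the $(c,c,d)$ and $(c,d,d)$ vectors in the proof of~\Cref{lemma:kcritical}, produces a $4$-tuple whose first two coordinates lie in $C$. The inductive step alternates the odd/even transitions via identities~(\ref{eq:J2i-12i})--(\ref{eq:J2i2i+1}) and a transport step propagating the conclusion from an $O_D$-pair $(a^i,b^i)$ to the next $(a^{i+1},b^{i+1})$ in the chain, analogous to the inner induction inside the proof of~\Cref{lemma:l+2=rel}. Where the proof of~\Cref{lemma:kcritical} relies on $(k-1)$-transitivity of $\sB$ to locate auxiliary tuples in the critical relation, here the analogous intermediate tuples are supplied directly by the fullness hypothesis on injective $CC$- and $O_DO_D$-mappings, combined with the implication $C(\tuple u)\Rightarrow C(\tuple v)$ defining $R$ and with the pp-definability of $I_2^B$ to keep every intermediate assignment injective and hence in $R$.

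Finally, taking $k=2n+1$ and applying identity~(\ref{eq:J2n+1}) together with idempotency on $S$, I would conclude that for a triple whose last column is a constant $O_D$-tuple, the first two output coordinates of $J_{2n+1}$ lie in $C$. By the defining implication of $R$ the last two output coordinates must then also lie in $C$; yet by idempotency of $J_{2n+1}$ on the constant last coordinates, these last two coordinates coincide with the original $O_D$-pair, contradicting the disjointness of $C$ and $O_D$, which are both pp-definable in $\sA$.

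The main obstacle will be engineering the three-column triples and the chain tuples so that every intermediate assignment is injective and hence actually lies in $R$, without the benefit of $(k-1)$-transitivity available in the $k\geq 3$ case; here the fullness of $\phi$ on injective $CC$- and $O_DO_D$-mappings together with the pp-definability of $I_2^B$ ensured by~\Cref{lemma:diseqJonsson} has to do the real work in place of higher transitivity.
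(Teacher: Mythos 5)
Your proposal reaches the right conclusion but by importing machinery that the lemma's hypotheses render unnecessary, and this obscures the argument rather than strengthening it. The long chains of $O_D$-pairs with the swap closure ($a^N=b^1$, $a^1=b^N$) and the auxiliary finite-duality $\tau$-structure are the engine of Lemmas~\ref{lemma:l+1=rel} and~\ref{lemma:l+2=rel}, where they are needed for two reasons: (i) without any fullness hypothesis on $\phi^{\sA}$, finite duality is what certifies that the linking tuples actually lie in $R$; and (ii) because those implications are $(T,=)$-implications, the swap-closed chain is what makes the final idempotency contradiction ($J_{2m+1}(a^n_\ell,a^n_\ell,a^n_\ell)=a^1_\ell$) available by walking one endpoint to the other. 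In the present lemma both motivations evaporate. You yourself note that membership of the linking tuples in $R$ is "granted by the fullness hypothesis that $\phi^{\sA}$ contains every injective $O_DO_D$-mapping" — which means the finite-duality construction is doing no work. And the contradiction here is of a different type: one feeds two \emph{constant} $O_D$-pairs into $J_{2n+1}$, idempotency gives back the same $O_D$-pair, and the claim forces that pair into $C$, contradicting $O_D\nsubseteq C$; no loop has to be closed, so the swap structure serves no purpose.

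The paper's actual proof is modeled on Lemma~\ref{lemma:kcritical}, not on Lemmas~\ref{lemma:l+1=rel}/\ref{lemma:l+2=rel}: one inducts only on the J\'onsson index $i$, defines two three-column shapes $(C,C,O_D)$ and $(C,O_D,O_D)$, and at each step passes from one shape to the other with a \emph{single} auxiliary pair $(\tuple v^1,\tuple v^2)$ of the other shape, chosen so that $\tuple v^1$ is constant with fresh values (possible by $1$-transitivity and no $2$-algebraicity) and so that the required quadruples are injective $CC$- or $O_DO_D$-mappings and hence in $R$ by the fullness hypothesis. The absorption identity~(\ref{eq:Ji}) then lets one replace the middle argument of the constant $\tuple v^1$ by a value that closes the implication $C(\tuple u)\Rightarrow C(\tuple v)$. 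No inner induction along a chain, and no finite-duality construction of an auxiliary structure, appears anywhere.

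If you tried to carry out the chain-walking version literally, you would also have to explain what replaces the equality-propagation step of the $(T,=)$-lemmas when the target is membership in the injective orbital $C$ rather than in $\{=\}$, and how the swap-closure interacts with the two-shape bookkeeping. These are not obviously insurmountable, but they are complications you would be creating rather than solving. The cleaner and intended reading of the fullness hypothesis is that it lets you bypass both: any injective mapping with the right orbit pattern is already in $R$, so the auxiliary tuples of the other shape can simply be exhibited on the spot.
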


\begin{proof}
Suppose on the contrary that $\rho$ is preserved by such a chain $J_1, \ldots, J_{2n+1}$.
   Our goal is to show that there exist two constant vectors $\tuple v^3, \tuple v^4$ with $(v_i^3,v_i^4) \in O_D$ for all $i\in [3]$ and such that  $(J_{2n+1}(\tuple v_3), J_{2n+1}(\tuple v_4)) \in C$. 
   We will say that vectors $\tuple v^3, \tuple v^4$ of which $\tuple v^3$ is constant are of shape $(C, C, O_D)$ if $(v^3_1, v^4_1) = (v^3_2, v^4_2) \in C$ and $(v^3_3, v^4_3) \in O_D$. Similarly, the vectors are of shape $(C, O_D, O_D)$ if $(v^3_2, v^4_2) = (v^3_3, v^4_3) \in O_D$ and $(v^3_1, v^4_1) \in C$. 
   
   We will now prove the following claim.

   \begin{claim}
   For all $i \in [2n+1]$ and all vectors $\tuple v^3,  \tuple v^4$ of shape $(C, C, O_D)$ or $(C, O_D, O_D)$ we have $(J_{i}(\tuple v^3), J_{i}(\tuple v^4)) \in C$.
   \end{claim}

    \begin{proof}
    The proof goes by the induction on $i \in [2n+1]$. For the base case, let $i=1$ and $\tuple v^3, \tuple v^4$  be vectors of shape $(C, C, O_D)$. By $1$-transitivity of $\sB$ there exists $a_3$ such that 
    $\tuple v^3, \tuple q^4$ with $\tuple q^4 = (v^4_1, v^4_2, a_3)$ satisfies $(v^3_i, q^4_i) \in C$ for $i \in [3]$. Since $J_1$ preserves $C$ and by~(\ref{eq:D1}) in~\Cref{def:QuasiJonsson}, for that shape the base case of the induction follows. For $\tuple v^3, \tuple v^4$ of shape $(C, O_D, O_D)$, we intend to find $\tuple v^1, \tuple v^2$ of shape $(C, C, O_D)$
    such that $(v^1_i, v^2_i, v^3_i, v^4_i) \in R_{\phi}$ for $i \in \{1,3\}$ where $R_{\phi}$ contains the tuples from  the image of $\phi^{\sA}$.
    To this end set the constant vector $\tuple v^1$ to some value which is neither in $\tuple v^3$ nor in $\tuple v^4$. 
    Since $\sB$ is $1$-transitive there exists 
    $v^2_2$ such that $(v^1_2, v^2_2) \in C$. Because it has no $2$-algebraicity we may assume that $v^2_2$ is a fresh value, not chosen before. Set $v^2_1 := v^2_2$. Observe that $(v^1_1, v^2_1, v^3_1, v^4_1)$
    is an injective $CC$-mapping and hence it is in $R_{\phi}$.
    Finally, by $1$-transitivity and since $\sB$ has no $2$-algebraicity there exists $v^2_3$ such that $(v^1_3, v^2_3, v^3_3, v^4_3)$ is an injective $O_D O_D$-mapping and hence it is in $R_{\phi}$. Observe that the constructed $\tuple v_1, \tuple v_2$ are of shape $(C,C,O_D)$, and hence $(J_1( \tuple v^1), J_1( \tuple v^2)) \in C$. Now, since $\sB$ is $1$-transitive and has no $2$-algebraicity, there exists $q_1$ which is a fresh value and such that 
    $(q_1, v^2_2, v^3_2, v^4_2) \in R_{\phi}$. Set $\tuple q^1 = ( v^1_1, q_1, v^1_3)$. By~(\ref{eq:Ji}) we have that $(J_1(\tuple q^1), J_1(\tuple v^2)) = (J_1(\tuple v^1), J_1(\tuple v^2)) \in C$.
    Since $\phi$ is a $(C,C)$-implication, it follows that $(J_1(\tuple v^3), J_1(\tuple v^4)) \in C$, which was to be proved. It completes the basic case of the induction.

    For the induction step we assume that the claim holds for $i \in [2n]$ and need to show that it does for $i+1$. The proof depends on whether $i+1$ is odd or even. Assume the former. Then the claim for vectors of shape $(C,C, O_D)$ follows by~(\ref{eq:J2i2i+1}). For the other shape of vectors we just use the same reasoning as in the previous paragraph with a difference that we replace $J_1$ with $J_{i+1}$. Turn now to the case where $i+1$ is even. For the vectors $\tuple v^3 \tuple v^4$ of shape $(C, O_D, O_D)$ we are done by~(\ref{eq:J2i-12i}).
    Assume now that $\tuple v^3, \tuple v^4$ are of shape $(C, C, O_D)$. We now proceed as in the base case but  in the reversed  direction, i.e., we first find $\tuple v^1 \tuple v^2$ of shape $(C, O_D, O_D)$ such that  $(v^1_i, v^2_i, v^3_i, v^4_i) \in R_{\phi}$ for $i \in \{1,3\}$ and then find $\tuple q^1 = (v^1_1, q_1, v^1_3)$ such that $(q_1, v^2_2,  v^3_2, v^4_2) \in R_{\phi}$. By the assumption and~(\ref{eq:Ji}), we have  $(J_{i+1}(\tuple v^1), J_{i+1}(\tuple v^2)) =(J_{i+1}(\tuple q^1), J_{i+1}(\tuple v^2)) \in C$. Then by the fact that $\phi$ is a $(C,C)$-implication, we obtain that $(J_{i+1}(\tuple v^3), J_{i+1}(\tuple v^4)) \in C$. It completes the proof of the claim. 
    \end{proof}

    In order to complete the proof of the lemma we consider  $\tuple v^3, \tuple v^4$ such that $(v^3_i, v^4_i) \in O_D$ for $i \in [3]$. Set $\tuple v^1$ to be any constant vector with fresh values, i.e., with values that do not occur in  
     $\tuple v^3, \tuple v^4$. Since $\sB$ is $1$-transitive and has no $2$-algebraicity, we can find a fresh value $v^2_1$ such that $(v^1_1, v^2_1) \in C$ and a fresh value $v^2_2 = v^3_2$ such that $(v^1_i, v^2_i, v^3_i, v^4_i)$ 
     for $i \in \{ 2,3 \}$ is an injective $O_D O_D$-mapping, and hence in
     $R_{\phi}$. Observe that $\tuple v^1, \tuple v^2$ are of shape 
     $(C, O_D, O_D)$ and hence by the claim above we have 
     $(J_{2n+1}(\tuple v^1), J_{2n+1}(\tuple v^2)) \in C$.
     Since $\sB$ is $1$-transitive and has no $2$-algebraicity we now find 
     $\tuple q^1 = (q^1, w^1_2, w^1_3)$ with a fresh value $q_1$ such that $( q^1_1,  v^2_1, v^3_1, v^4_1) $ is an injective $O_D O_D$-mapping and hence in $R_{\phi}$. By~(\ref{eq:J2n+1}) we have that $(J_{2n+1}(\tuple q^1), J_{2n+1}(\tuple v^2)) = (J_{2n+1}(\tuple v^1), J_{2n+1}(\tuple v^2)) \in C$.
     Since $\phi$ is a $(C,C)$-implication we have that $(J_{2n+1}(\tuple v^3), J_{2n+1}(\tuple v^4)) \in C$ which yields the contradiction since $O_D$ is pp-definable in $\sA$ and  $O_D \nsubseteq C$.
 \end{proof}  

\begin{lemma}
\label{lemma:2ternaryJonsson}
    Let $\sB$ be a $2$-neoliberal structure with finite duality and $\sA$ a first-order expansion of $\sB$ which pp-defines  
    a complete $(C, (x_1,x_3), C, (x_2, x_3))$-implication $\phi$ with $C$ injective, 
    $\proj_{(x_1, x_3)} \phi = \proj_{(x_2, x_3)} \phi = E$ 
    and $O_D$ an orbital  in $E \setminus C$. Suppose that $\phi^{\sA}$
    contains all $f$ satisfying both $f(x_1, x_3) \in C$ and $f(x_2, x_3) \in C$ as well as both $f(x_1, x_3) \in O_D$ and $f(x_2, x_3) \in O_D$.
    
    If there exist orbitals $O_1, O_2 \subseteq C$ such that $$\rho(x_1, x_2, x_3) \equiv (\exists z~\phi(x_1, x_2, z) \wedge E(x_1, x_3) \wedge E(x_2, x_3))$$ is satisfied by $f$ with $f(x_1, x_3) \in O_1$ and
    $f(x_2, x_3) \in O_D$ as well as by $f$ with $f(x_1, x_3) \in O_D$ and
    $f(x_2, x_3) \in O_2$, then 
    $\sA$ is not preserved by any chain of quasi J\'{o}nsson operations.
\end{lemma}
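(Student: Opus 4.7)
The overall strategy is to argue toward a contradiction: assuming $\sA$ is preserved by a chain of quasi J\'onsson operations, I will construct a quaternary pp-definable implication $\sigma$ to which Lemma~\ref{lemma:full_complete-impl} applies, and then invoke Lemma~\ref{lemma:4noJonsson} on the resulting complete implication to contradict the preservation assumption.

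The construction of $\sigma$ uses $\rho$ to ``bridge'' $O_1$ and $O_2$ via the $O_D$-orbital. Concretely, I would consider
\[
\sigma(y_1, y_3, y_2, y_4) \equiv \exists z\; \rho(y_1, z, y_3) \wedge \rho(z, y_2, y_4) \wedge E(y_1, y_4) \wedge E(y_2, y_3),
\]
with $\tuple u = (y_1, y_3)$ and $\tuple v = (y_2, y_4)$. The two hypothesized $\rho$-mappings combine, with a fresh intermediate $z$ satisfying $(z, y_3), (z, y_4) \in O_D$ (obtainable from $1$-transitivity together with the absence of $2$-algebraicity in $\sB$), into an injective $O_1 O_2$-mapping in $\sigma^\sA$. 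Since $\phi^\sA$ contains all injective $O_D O_D$-mappings by hypothesis, so does $\rho^\sA$, and hence $\sigma^\sA$ contains an injective $O_D O_D$-mapping as well.

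Iterating $\sigma$ as in the proof of Lemma~\ref{lemma:kcomplete} and restricting to a pp-definable sink $C'$ of its orbital implication graph produces a quaternary $(C', \tuple u', C', \tuple v')$-implication $\sigma'$ with $C' \neq E \cap I_2^B$. Corollary~\ref{cor:noequality}, applicable under the working assumption on $\sA$, upgrades every $OP$-mapping of $\sigma'$ with injective endpoints to an injective one, yielding clause~(1) of Lemma~\ref{lemma:full_complete-impl}; the $O_D O_D$-content built into $\sigma$ together with the completeness of $\phi$ delivers clause~(2). Lemma~\ref{lemma:full_complete-impl} then produces a complete quaternary implication meeting the hypotheses of Lemma~\ref{lemma:4noJonsson}, which gives the contradiction.

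The main obstacle is ensuring that clause~(2) of Lemma~\ref{lemma:full_complete-impl} is satisfied for \emph{every} injective orbital not contained in $C'$, not just for $O_D$: this requires choosing the sink $C'$ carefully and tracking how iterated compositions of $\sigma$ interact with both the $O_1 O_2$-content and the $O_D O_D$-content, potentially passing to a further subformula that pins extra orbit data on the intermediate variable. A secondary technical subtlety is that $\sB$ is only $1$-transitive in the $k = 2$ setting, so the existence of a fresh bridge vertex $z$ with the desired orbit type relative to $(y_3, y_4)$ must be produced through repeated appeals to $2$-neoliberality, in the spirit of the arguments in the proofs of Lemmas~\ref{lemma:l+1=rel} and~\ref{lemma:l+2=rel}.
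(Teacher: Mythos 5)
The proposal has a genuine gap at its entry point: the formula $\sigma$ you construct is not an implication, so neither \Cref{lemma:kcomplete} nor \Cref{lemma:full_complete-impl} can be invoked on it. Unwinding the definition, $\rho(x_1,x_2,x_3)$ constrains only $(x_1,x_2)\in\proj_{(x_1,x_2)}(\phi^{\sA})$ together with $E(x_1,x_3)\wedge E(x_2,x_3)$; since the third variable of $\phi$ is existentially quantified away, nothing links the orbit of $(x_1,x_3)$ to the orbit of $(x_2,x_3)$ in $\rho$. The same is then true for $\sigma$: knowing $f(y_1,y_3)\in C$ puts no constraint on $f(y_2,y_4)$ beyond membership in $E$, so item~(4) of \Cref{def:implication} fails. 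This is not a repairable technicality --- the lemma's very hypotheses exhibit an $O_1 O_D$-mapping in $\rho^{\sA}$ with $O_1\subseteq C$ and $O_D\nsubseteq C$, so $\mathbf{Vert}(C)$ is explicitly \emph{not} a sink for any graph built from $\rho$ or $\sigma$; there is no pp-definable sink $C'$ to restrict to.

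This obstruction is in fact the whole point of the case split in the proof of \Cref{lemma:diseqJonsson}: when the binary relation $H(x_1,x_2)\equiv\exists x_3\,(C(x_1,x_3)\wedge C(x_2,x_3))$ is proper (i.e.~$H\nsupseteq I^B_2$), the paper does exactly the kind of composition you have in mind, building $\xi_1\circ\xi_2$ from $\xi_1\equiv C(x_2,x_3)$ and $\xi_2\equiv C(x_1,x_4)$ and feeding the result into \Cref{lemma:full_complete-impl} and \Cref{lemma:4noJonsson}. But $\xi_1$ is an implication only because $C(x_1,x_3)\wedge C(x_2,x_3)\Rightarrow H(x_1,x_2)$ with $H$ a proper binary relation; when $H=A^2$ (equivalently when both $Y_{1,3}$ and $Y_{2,3}$ equal $G$, which is precisely the situation giving you both the $O_1O_D$- and the $O_DO_2$-mapping) item~(3) of \Cref{def:implication} fails for $\xi_1$ and there is no binary bridge relation to pass through. \Cref{lemma:2ternaryJonsson} exists precisely to handle this residual case, and its proof in the paper is a direct Jónsson-chain argument by induction on $i\in[2n+1]$, using vector-triples of shapes $(O_l,O_l,O_D)$ and $(O_l,O_D,O_D)$ with $l\in[2]$ to drive a contradiction at $J_{2n+1}$; it does not pass through a quaternary implication at all. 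You would need to carry out that direct argument (or find a genuinely new bridge relation that \emph{is} an implication in this case), not route through \Cref{lemma:full_complete-impl}.
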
 

    \begin{proof}
    Suppose that there exists a chain of quasi J\'{o}nsson operations $J_1, \ldots, J_{2n+1}$. 
    Let $\tuple w^3$ be a constant vector. 
    We say that  vectors $\tuple w^2, \tuple w^3 \in A^3$ are of shape $(O_l,O_l,O_D)$ with $l \in [2]$ if  $(w^2_i, w^3_i) \in O_l$ for $i \in [2]$ and 
    $(w^2_3, w^3_3) \in O_D$; and that they are of shape 
    $(O_l,O_D,O_D)$ with $l \in [2]$ if $(w^2_i, w^3_i) \in O_D$ for $i \in \{ 2,3 \}$ and 
    $(w^2_1, w^3_1) \in O_l$ with $l \in [2]$. 
    Our goal is to show that the assumption on the existence of $J_1, \ldots, J_{2n+1}$ implies that there exist constant vectors $\tuple w^2, \tuple w^3$ with 
    $( w^2_i, w^3_i) \in O_D$ such that $(J_{2n+1}(\tuple w^2), J_{2n+1}(\tuple w^3)) \in C$. It yields the contradiction since $O_D$ is pp-definable in $\sA$ and $O_D \nsubseteq C$. As usual we start with a claim.

    \begin{claim}
    For all $i\in [2n+1]$ and all vectors $\tuple v^2, \tuple v^3$ 
     of shape $(O_l,O_l,O_D)$ and 
    $(O_l,O_D,O_D)$ with $l \in [2]$ we have $(J_i(\tuple v^2),  J_i(\tuple v^3))$.
    \end{claim} 

    \begin{proof}
    The proof goes by induction on  $i \in [2n+1]$. For the base case ($i = 1$) and vectors of shape $(O_l,O_l,O_D)$ with $l \in [2]$ the proof follows as usual by~(\ref{eq:J1}) in~\Cref{def:QuasiJonsson} and the fact that $C$ is pp-definable in $\sA$. 
    Before we turn to the proof of the base case for the vectors of the other shape, we observe that the set $\phi^{\sA}$ can be viewed as a ternary relation $R(x_1,x_2, x_3)$.
    
    For $\tuple w^2, \tuple w^3$ of  shape $(O_1, O_D, O_D)$, we need $\tuple w^1$ 
    such that $\tuple w^1, \tuple w^3$ is of shape $(O_1,O_1,O_D)$, $(w^1_i, w^2_i, w^3_i) \in R$ for $i \in \{ 1,3 \}$ and $(w^1_2,w^2_2)$
    $ \in \proj_{(x_1, x_2)}(\phi^{\sA})$. Since $(\rho^{\sA})$ contains an $O_1 O_D$-mapping, we may find the appropriate $w^1_2$. Then $(w^1_1, w^2_1, w^3_1) \in R$ with $w^1_1 = w^1_2$ since by the assumption of the lemma every $O_1 O_1$-mapping is in $\phi^{\sA}$.
    Now, if $O_D$ is $=$, then we just set $w^1_3 = w^3_3$. Otherwise we use 
  $1$-transitivity and the assumption on $\phi$   to find $w^1_3$ satisfying $(w^1_3, w^3_3) \in P$ and 
 $(w^1_3, w^2_3, w^3_3) \in R$. 
  The vectors $\tuple w^1, \tuple w^3$ are of shape $(O_1, O_1, O_D)$, and hence $(J_1(\tuple w^1), J_1(\tuple w^3)) \in C$. Since 
$(w^1_2,w^2_2) \in \proj_{(x_1, x_2)}(\phi^{\sA})$, there exists $a_3$ such that
$(w^1_2, w^2_2,a_3) \in \phi^{\sA}$. Let $\tuple q^3 = (w^3_1, a_3, w^2_3)$. By~(\ref{eq:Ji}), we have that $(J_1(\tuple w^1), J_1(\tuple w^3)) = 
    (J_1(\tuple w^1), J_1(\tuple q^3)) \in C$. Since $\phi$ is a 
    $(C, C)$-implication, it follows that
    $(J_1(\tuple w^2), J_1(\tuple q^3)) = 
    (J_1(\tuple w^2), J_1(\tuple w^3)) \in C$. 
    In order to complete the base case we need to consider vectors $\tuple w_2, \tuple w_3$ of shape $(O_2, O_D, O_D)$. Since $\phi^{\sA}$ contains a $O_1 O_2$-mapping and a $O_D O_D$-mapping, it follows that  there exists $\tuple w^1$ such that $\tuple w^1, \tuple w^3$ are of shape $(O_1, O_D, O_D)$ and such that 
    $(w^1_i, w^2_i, w^3_i)$ is in $R$ for $i \in [3]$. Since  
    $(J_1(\tuple w^1), J_1(\tuple w^3)) \in C$ and 
    $\phi$ is a $(C,  C)$-implication, it follows that
    $(J_1(\tuple w^2), J_1(\tuple w^3)) \in C$, which was to be proved. It completes the base case. 

    For the induction step we consider the case where $i$ is odd and where it is even. In the former case, we are done for vectors of shape $(O_l, O_l, O_D)$ with $l \in [2]$ by~(\ref{eq:J2i2i+1}) and 
    the step from these vectors to the vectors of shape 
    $(O_l, O_D, O_D)$ is carried out as in the base case with a difference that we replace $J_1$ with $J_i$.  In the latter case we are done for the shape $(O_l, O_D, O_D)$ by~(\ref{eq:J2i-12i}). For the other shapes  consider $\tuple w^2, \tuple w^3$ of shape $(O_2,O_2, O_D)$. We need to show that there exists $\tuple w^1$ such that $\tuple w^1, \tuple w^3$ is of shape
    $(O_2,O_D, O_D)$,  $(w^1_i, w^2_i, w^3_i) \in R$ for $i \in \{ 1,3 \}$ and $(w^1_2,w^2_2) \in \proj_{(x_1, x_2)}(\phi^{\sA})$. Since $\rho^{\sB}$ contains a $O_D O_2$-mapping, we may find the appropriate $w^1_2$. Then $(w^1_3, w^2_3, w^3_3) \in R$ for $w^1_3 := w^1_2$ by the assumption that every $O_D O_D$-mapping is in $\rho^{\sA}$.  Finally, by $1$-transitivity of $\sB$ and the assumption that every $O_2 O_2$-mapping is in $\phi^{\sB}$ we have $w^1_1$ satisfying $(w^1_1, w^2_1) \in O_2$ and 
    $(w^1_1, w^2_1, w^3_1) \in R$. 
    Notice that $(J_i(\tuple w^1), J_i(\tuple w^2)) \in C$. Since 
    $(w^1_2,w^2_2) \in \proj_{(x_1, x_2)}(\phi^{\sA})$, there exists $a_3$ such that
    $(w^1_2, w^2_2,a_3) \in \phi^{\sA}$. Let $\tuple q^3 = (w^3_1, a_3, w^2_3)$. By~(\ref{eq:Ji}), we have that $(J_i(\tuple w^1), J_i(\tuple w^3)) = 
    (J_i(\tuple w^1), J_i(\tuple q^3)) \in C$. Since $\phi$ is a $(C, C)$-implication, it follows that
    $(J_i(\tuple w^2), J_i(\tuple q^3)) = 
    (J_i(\tuple w^2), J_i(\tuple w^3)) \in C$. 
    In order to complete the induction step we need to consider vectors $\tuple w^2, \tuple w^3$ of shape $(O_1, O_D, O_D)$. By the assumption, $\phi$ contains all $O_2 O_1$-mappings and all  $O_D O_D$-mappings, and hence there exists $\tuple w^1$ such that $\tuple w^1, \tuple w^3$ are of shape $(O_2, O_D, O_D)$ and such that 
    $(w^1_i, w^2_i, w^3_i)$ is in $R$ for $i \in [3]$. Since  
    $(J_i(\tuple w^1), J_i(\tuple w^3)) \in C$ and 
    $\phi$ is a $(C,  C)$-implication, it follows that
    $(J_i(\tuple w^2), J_i(\tuple w^3)) \in C$, which was to be proved. It completes the proof of the induction step, and of the claim. 
\end{proof}

    As usual, we now prove that $(J_{2n+1}(\tuple w^2), J_{2n+1}(\tuple w^3)) \in C$  for constant $\tuple w^2, \tuple w^3$ such that $(w^2_i, w^3_i) \in O_D$ for $i \in [3]$. To this end we need to find 
    $\tuple w^1$ of shape $(O_1, O_D, O_D)$  such that $(w^1_i, w^2_i, w^3_i) \in R$ for $i \in \{2, 3\}$.
    and $(w^1_1, w^1_2) \in \proj_{(x_1, x_2)} (\phi^{\sA})$. The required $w^1_2 = w^1_3$ exists by the fact that $\rho^{\sA}$ contains all 
    $O_D O_D$-mappings.  When it comes to $w^1_1$, we may find it by $1$-transitivity and the fact that  $\rho^{\sA}$ contains an $O_1 O_D$-mapping.  
    Since $(w^1_1, w^2_1) \in \proj_{(x_1, x_2)} (\phi^{\sA})$, there exists 
    $a_3$
    such that $(w^1_1, w^2_1, a_3) \in R$. By~(\ref{eq:J2n+1}) we have that $(J_{2n+1}(\tuple w^1), J_{2n+1}(\tuple q^3)) \in C$ where $\tuple q^3 = (a_3, w^3_2, w^3_3)$. Since the formula $\phi$ is a $(C, (x_1, x_2),  C, (x_1, x_3))$-implication, it follows that $(J_{2n+1}(\tuple w^2), J_{2n+1}(\tuple q^3)) = (J_{2n+1}(\tuple w^2), J_{2n+1}(\tuple w^3)) \in C$. It completes the proof of the lemma.
    \end{proof}

\noindent
We are now ready to prove~\Cref{lemma:diseqJonsson}.

\diseqJonsson*
\begin{proof}
For any orbital $O$ under $\Aut(\sB)$ consider a relation $\phi_O$  over three variables $x_1, x_2, x_3$ 
defined simply by $O(x_1, x_2)$. 
Define now a directed graph $G$ over pp-defnable  subsets of $I^B_2$ such that there is an arrow from $S_1$ to $S_2$ if for some $O$ the formula $\phi_O$ is a $(S_1, (x_1, x_3), S_2,(x_2, x_3))$-implication. Observe that every non-trivial subset $S$ of $I^B_2$ has an outgoing arc in $G$. Indeed, $\phi_O$ for $O \nsubseteq S_1$ is a $(S_1, (x_1, x_3), S_2, (x_2, x_3))$-implication for some $S_2 \subseteq I^B_2$. Indeed, the inclusion holds since $O \nsubseteq S_1$, and hence $S_2$ cannot contain $=$. Clearly $G$ contains all orbitals and hence it is non-empty. Now, either $G$ contains $I^B_2$ and we are done or it contains a cycle from $C'$ back to $C'$ for some $C' \subsetneq I^B_2$. Since, by $1$-transitivity of $\sA$, the projection of every $\phi_O$ to both $(x_1, x_3)$ and $(x_2, x_3)$ is $B^2$, we may, as in the proof of~\Cref{cor:critical}, compose the implications along the cycle. This yields a $(C', (x_1, x_3), C', (x_2, x_3))$-implication $\rho'$ whose projection on both $(x_1, x_3)$ and $(x_2, x_3)$ is $B^2$.

Since $S_1, S_2 \subsetneq I^B_2$ and the automorphism group of $\sB$ has no $2$-algebraicity, we have that every $(S_1, (x_1, x_3), S_2,(x_2, x_3))$-implication $\phi_O$ has the two following properties.
\begin{enumerate}
\item Every $P_1P_2$-mapping in $\phi_O$ with injective $P_1, P_2$  may be chosen to be injective.
\item There exists an injective $P_2 \nsubseteq S_2$ and for every such $P_2$ there exists   
an injective orbital $P_1 \nsubseteq S_1$ such that
$\phi_O$ contains an injective $P_1 P_2$-mapping.
\end{enumerate}

By~\Cref{lemma:kimpl}, we have that the above two properties hold also for 
$\rho'$ where $S_1, S_2$ are replaced by $C'$. Furthermore, by composing $\rho'$ with itself as many times as there are orbitals in $\sA$, as in the proof of \Cref{lemma:kcomplete}, we obtain a complete $(C', (x_1, x_3), C', (x_2, x_3))$-implication which additionally contains an injective $O_D O_D$-mapping with $O_D \notin C$. Finally, mimicking the proof of~\Cref{cor:critical} we obtain a $(C, (x_1, x_3), C, (x_2, x_3))$-implication $\rho''$ with $C \subseteq C'$ pp-definable in $\sA$ which contains an injective $O_D O_D$-mapping with $O_D \notin C$ and which in addition to (1) and (2) where $S_1, S_2$ are replaced with $C$  satisfies also the following.
\begin{enumerate}
\setcounter{enumi}{2}
    \item 
    For the union $U$ of orbitals in every strongly connected component in $\mathcal{B}_{\rho''}$, $\rho''^{\sB}$ contains every $f \in B^{\{x_1, x_2, x_3 \} }$ with $f(x_1, x_3) \in U$ and $f(x_2, x_3) \in U$.
\end{enumerate}
Indeed, if both $f(x_1, x_3) \in P_1$ and $f(x_2, x_3) \in P_2$ with both $P_1$ and $P_2$ injective, then 
we proceed as in the proof of~\Cref{cor:critical} for $P_1, P_2$ both contained in $C$ or both contained in $D$. Otherwise, if $P_2$ is $=$, then 
for every $f(x_1, x_3) \in P_1$, there exists 
only one possible $f(x_2, x_3) \in P_2$, and hence the property follows. The same reasoning holds when $P_1$ is $=$. Hence $\rho''$ has all the three properties: (1)--(3); and also the following one which easily follows from (3).
\begin{enumerate}
\setcounter{enumi}{3}
    \item 
    The relation defined by $\exists x_3~C(x_1,x_3) \wedge C(x_2, x_3)$ is contained in $\proj_{(x_1, x_2)} \rho''^{\sB}$.
\end{enumerate}

We now claim that $G(x,y) \equiv \exists z~\rho''(x,z,y) \wedge \rho''(z,x,y)$ contains exactly these orbitals  that 
   are involved in some strongly connected component of $\mathcal{B}_{\rho''}$. 
Indeed, on one hand assume that a pair $(x,y) \in O$ is such that $O$ is involved in a strongly connected component of $\mathcal{B}_{\rho''}$. 
   By Property~(3) above, we have that there is $z$ such that $(z,y) \in O$ and $\rho''(x,z,y) \wedge \rho''(z,x,y)$ is satisfied. 
   On the other hand, if $O$ is not involved in a strongly connected component of $B_{\rho''}$, then there is no orbital $P$ such that both: there is an arc $(O,P)$ and an arc $(P,O)$ in $\mathcal{B}_{\rho''}$. Thus, there is no $z$ satisfying $\rho''(x,z,y) \wedge \rho''(z,x,y)$. Observe now that 
   $\rho_G(x_1,x_2,x_3) \equiv \rho''(x_1,x_2,x_3) \wedge G(x_1,x_2) \wedge G(x_1, x_3,z)$ 
   is again a $(C, (x_1, x_3), C, (x_2, x_3))$-implication which satisfies all the four properties: (1)--(4) above and all orbitals in $G = \proj_{(x_1, x_3)} (\rho_G^\sA) = \proj_{(x_2, x_3))} (\rho_G^\sA)$ are involved in some strongly connected component of $\mathcal{B}_{\rho_G}$. 
   For the remainder of the proof we assume that $G$ is a minimal wrt. the inclusion pp-definable binary relation which contains $C$ and such that 
   $\rho_G$ defined as above satisfies (1)--(4) and    
   all vertices in $\mathcal{B}_{\rho_G}$   are involved in some strongly connected component of the graph.

Since $C$ is pp-definable in $\sA$, so are the projection $Y_{2,3}$ of 
$(\exists y~\rho_{G}(x_1, x_2, y) \wedge C(x_1, x_3))$ to $(x_2, x_3)$ and the projection $Y_{1,3}$ of 
$(\exists y~\rho_{G}(x_1, x_2, y) \wedge C(x_2, x_3))$
 to $(x_1, x_3)$. By the minimality of $G$ it follows that both $Y_{2,3}$ and $Y_{1,3}$ is either $G$ or $C$. If both are $G$, then there exist orbitals $O_1, O_2 \subseteq C$ and an orbital $O_D \subseteq G \setminus C$ such that 
$$Z(x_1, x_2, x_3) \equiv \exists y~\rho_{G}(x_1, x_2, y) \wedge G(x_1, x_3) \wedge G(x_2, x_3)$$ contains an $O_1 O_D$-mapping and an $O_D O_2$-mapping. 
Since $\rho_G$ satisfies~(3), \Cref{lemma:2ternaryJonsson} asserts that $\sA$ is not preserved by any chain of quasi J\'{o}nsson operations. The lemma follows.
Observe that  in particular $\exists y~\rho_{G}(x_1, x_2, y) \supseteq I^B_2$ implies $Y_{1,3} = Y_{2,3} = G$.

From now on we therefore assume without loss of generality that  $Y_{2,3}$ is $C$ and $\exists y~\rho_{G}(x_1, x_2, y) \nsupseteq I^B_2$. 
By~(4),  we have that $H(x_1, x_2) \equiv \exists x_3~C(x_1, x_3) \wedge C(x_2, x_3)$ is contained in $\exists y~(\rho_G(x_1, x_2, y))$ and in particular $H(x_1, x_2) \nsupseteq I^B_2$. Observe that $\xi_1(x_1, x_2, x_3) \equiv C(x_2, x_3)$
is a $(C, (x_1, x_3), H, (x_1, x_2))$-implication 
and that $\xi_2(x_1, x_2, x_4) \equiv C(x_1, x_4)$
is a $(H, (x_1, x_2), C, (x_2, x_4))$-implication and that the relation $\xi(x_1, x_2, x_3, x_4) \equiv \xi_1 \circ \xi_2$ is a quaternary $(C, (x_1,x_3), C, (x_2, x_4))$-implication. Since the automorphism group of the structure $\sB$ has no $2$-algebraicity,  the projection of $\xi_1$ ($\xi_2$) on $(x_1, x_3)$ and $(x_1, x_2)$  (on $(x_1, x_2)$ and $(x_2, x_4)$) is $B^2$.
In consequence, the projection of $\xi$ on $(x_1, x_2)$ and $(x_2, x_4)$ is $B^2$. 
Since~(2) holds for $\rho_G$ we have that $C \neq G \cap I^B_2$, and in particular that $C \subsetneq I^B_2$. 
We will now argue that
\begin{enumerate} 
\setcounter{enumi}{4}
    \item  every  $OP$-mapping in $\xi$ with $O \subseteq C$ and $P \subseteq C$ may be chosen to be injective and that
    \item for every injective $P \nsubseteq C$ there exists an injective $O \nsubseteq C$ such that $\xi$ contains an injective $OP$-mapping.  
\end{enumerate}
It will offer us an opportunity to use~\Cref{lemma:full_complete-impl,lemma:4noJonsson}, and therefore to complete the proof of the lemma.
Indeed, since the automorphism group of $\sB$ has no $2$-algebraicity, it holds that every $O_1P_1$-mapping in $\xi_1$ (every $O_2 P_2$-mapping in $\xi_2$) with  injective $O_1 \subseteq C$ ($O_2 \subseteq H$) and $P_1 \subseteq H$ ($P_2 \subseteq C$) may be chosen to be injective. Thus, Item~(5) follows by~\Cref{lemma:kimpl}. Similarly, Item~(6) holds for $\xi_1$ and $\xi_2$ by the fact that the automorphism group of $\sB$ has no $2$-algebraicity. Hence~(6) follows by~\Cref{lemma:kimpl}. 
\end{proof}

\subsection{Proof of \texorpdfstring{\Cref{lemma:2critical}}{Lemma~\ref{lemma:2critical}}}

The proof goes along the lines of some parts of the proof of~\Cref{lemma:diseqJonsson}
with a difference that this time, due to the mentioned lemma, we can pp-define $I^B_2$, and hence assume that $\phi$ is injective.  
The formula $\phi$ is critical in $\sA$ over $(C,D, \tuple u, \tuple v)$, the intersection of $\phi$ with $I^B_3$ is injective and again critical in $\sA$ over $(C,D, \tuple u, \tuple v)$. Without loss of generality we assume that $\tuple u = (x_1, x_3)$ and $\tuple v = (x_2, x_3)$.

Now, as in the proof of~\Cref{lemma:diseqJonsson}, we define the binary relation $G(x,y) \equiv \exists z~\phi(x,z,y) \wedge \phi(z,x,y)$
which contains exactly these orbital in $\proj_{\tuple u} \phi^{\sA}$ that 
   are involved in some strongly connected component of $\mathcal{B}_{\phi}$. 
Then we observe that 
   $\phi_G(x_1,x_2,x_3) \equiv \phi(x_1,x_2,x_3) \wedge G(x_1,x_3) \wedge G(x_2,x_3)$ 
   is again critical in $\sA$ over $(C, D, \tuple u, \tuple v)$ and that all orbitals in $G = \proj_{\tuple u} (\phi_G^\sA) = \proj_{\tuple v} (\phi_G^\sA)$ are involved in some strongly connected component of $\mathcal{B}_{\phi_G}$. 
   We also assume for the remainder of the proof that $G$ is minimal wrt. the inclusion pp-definable binary relation which contains $C$ and such that 
   $\phi_G$ is  critical in $\sA$ over $(C, D, \tuple u, \tuple v)$ for some $D$ disjoint from $C$.

By the minimality of $G$, we have that the projection $Y_{2,3}$ of 
$(\exists y~\phi_{G}(x_1, x_2, y) \wedge C(x_1, x_3))$ to $(x_2, x_3)$ and the projection $Y_{1,3}$ of 
$(\exists y~\phi_{G}(x_1, x_2, y) \wedge C(x_2, x_3))$
 to $(x_1, x_3)$ is either $G$ or $C$. If both are $G$, then there exists orbitals $O_1, O_2 \subseteq C$ and an orbital $O_D \subseteq G \setminus C$ such that 
$\rho(x_1, x_2, x_3) \equiv \exists y~\phi_{G}(x_1, x_2, y) \wedge G(x_1, x_3) \wedge G(x_2, x_3)$ contains an $O_1 O_D$-mapping and an $O_D O_2$-mapping. 
Since $\phi_G$ is critical, \Cref{lemma:2ternaryJonsson} asserts that $\sA$ is not preserved by any chain of quasi J\'{o}nsson operations. The lemma follows.
Observe again that  in particular $\exists y~\phi_{G}(x_1, x_2, y) \supseteq I^B_2$ implies $Y_{1,3} = Y_{2,3} = G$.

From now on the proof goes exactly as the one for~\Cref{lemma:diseqJonsson}.  
We assume without loss of generality that  $Y_{2,3}$ is $C$ and that the projection $\exists y~\phi_{G}(x_1, x_2, y) \nsupseteq I^B_2$. 
Since $\phi_G$ is critical we may assume that  $H(x_1, x_2) \equiv \exists x_3~C(x_1, x_3) \wedge C(x_2, x_3)$ is contained in $\proj_{x_1, x_2} \phi_G$.  It follows that $H(x_1, x_2) \nsupseteq I^B_2$. Now we define ternary $\xi_1, \xi_2$ and a quaternary $\xi$ which is a $(C, (x_1, x_3), C(x_2, x_4))$-implication whose projections on $(x_1, x_3)$ and $(x_2, x_4)$ are $B^2$. Hence $C \subsetneq I^B_2$. We also have that (5) and (6) from the proof above hold for $\xi$. Thus, we are done by the applications of~\Cref{lemma:full_complete-impl,lemma:4noJonsson}.

\end{document}